\patchcmd{\@maketitle}{\LARGE}{\fontsize{16}{17}\selectfont}{}{}
\providecommand{\keywords}[1]
{
	\small	
	{\textit{Keywords:}} #1
}
\def\bN{\mathbb{N}}
\def\bQ{\mathbb{Q}}
\def\bI{\mathbb{I}}
\def\cA{\mathcal{A}}
\def\cB{\mathcal{B}}
\def\cC{\mathcal{C}}
\def\cL{\mathcal{L}}
\def\cP{\mathcal{P}}
\def\cS{\mathcal{S}}
\def\cX{\mathcal{X}}
\def\cY{\mathcal{Y}}
\newcommand{\VC}{\mathsf{VC}}
\newcommand{\vc}{\mathsf{vc}}
\renewcommand{\operatorname}[1]{\mathsf{#1}}
\newcommand{\FPT}{\textsf{FPT}\xspace}
\newcommand{\XP}{\textsf{XP}\xspace}
\newcommand{\W}{\textsf{W}}
\newcommand{\NP}{\textsf{NP}\xspace}
\newcommand{\Poly}{\textsf{P}\xspace}
\newcommand{\APX}{\textsf{APX}\xspace}
\def\wrt{{w.r.t.}\xspace}
\def\ie{{i.e.}\xspace}
\newcommand{\nec}{\mathsf{nec}}
\newcommand{\snec}{\mathsf{s\text{-}nec}}
\newcommand{\mim}{\mathsf{mim}}
\newcommand{\rw}{\mathsf{rw}}
\newcommand{\Qrw}{\mathsf{rw}_\bQ}
\newcommand{\mw}{\mathsf{mw}}
\newcommand{\f}{\mathsf{f}}
\newtheorem{theorem}{Theorem}
\newtheorem{lemma}[theorem]{Lemma}
\newtheorem{corollary}[theorem]{Corollary}
\newtheorem{claim}[theorem]{Claim}
\newtheorem{fact}[theorem]{Fact}
\newtheorem{definition}[theorem]{Definition}
\newtheorem{observation}[theorem]{Observation}
\newcommand{\cc}{\operatorname{cc}}
\newcommand{\comp}[1]{\overline{#1}}
\newcommand{\w}{\mathsf{w}}
\renewcommand{\min}{\operatorname{min}}
\renewcommand{\max}{\operatorname{max}}
\newcommand{\pbDef}[3]{
	\noindent
	\vspace{0.08cm}
	\begin{center}
		\begin{boxedminipage}{0.98 \columnwidth}
			\textsc{#1}\\[5pt]
			\textbf{Input:}  #2.\\
			\textbf{Output:}  #3
		\end{boxedminipage}
	\end{center}
	\vspace{0.08cm}	}
\renewcommand{\emptyset}{\varnothing}
\newcommand{\reduce}{\operatorname{reduce}}
\newcommand{\best}{\operatorname{best}}
\def\Rep#1#2{\mathcal{R}_{#2}^{#1}}
\def\rep#1#2{\mathsf{rep}_{#2}^{#1}}
\def\equi#1#2{\equiv_{#2}^{#1}}
\renewcommand{\leq}{\leqslant}
\newcommand{\aux}{\operatorname{aux}}
\def\cX{\mathscr{X}}
\def\cY{\mathscr{Y}}
\begin{document}
	
	\title{Node Multiway Cut and Subset Feedback Vertex Set on Graphs of Bounded Mim-width}

	\author[1,*]{Benjamin Bergougnoux}
	\author[2]{Charis Papadopoulos}
	\author[1]{Jan Arne Telle}
	\affil[1]{University of Bergen, Norway}
	\affil[2]{University of Ioannina, Greece}
	\affil[*]{Corresponding author: benjamin.bergougnoux@uib.no}
	
	\maketitle
	
	\thanks{An extended abstract appeared in the proceedings of	46th International Workshop on Graph-Theoretic Concepts in Computer Science, WG 2020.}
	\begin{abstract}
		The two weighted graph problems \textsc{Node Multiway Cut} (NMC) and \textsc{Subset Feedback Vertex Set} (SFVS)
		both ask for a vertex set of minimum total weight, that for NMC
		disconnects a given set of terminals, and for SFVS intersects all cycles containing a vertex of a given set.
		We design a meta-algorithm that allows to solve both problems in time $2^{O(rw^3)}\cdot n^{4}$,  $2^{O(q^2\log(q))}\cdot n^{4}$, and $n^{O(k^2)}$ where $rw$ is the rank-width, $q$ the $\bQ$-rank-width, and $k$ the mim-width of a given decomposition.
		This answers in the affirmative an open question raised by Jaffke et~al.~(Algorithmica, 2019) concerning an \XP algorithm for SFVS parameterized by mim-width.
		
		By a unified algorithm, this solves both problems in polynomial-time on the following graph classes:
		\textsc{Interval}, \textsc{Permutation}, and \textsc{Bi-Interval} graphs,
		\textsc{Circular Arc} and \textsc{Circular
			Permutation} graphs, \textsc{Convex} graphs, \textsc{$k$-Polygon},
		\textsc{Dilworth-$k$} and \textsc{Co-$k$-Degenerate} graphs for fixed
		$k$; and also on \textsc{Leaf Power} graphs if a leaf root is given as input,
		on \textsc{$H$-Graphs} for fixed $H$ if an $H$-representation is given as input,
		and on arbitrary powers of graphs in all the above classes.
		Prior to our results, only SFVS was known to be tractable restricted only on \textsc{Interval} and \textsc{Permutation} graphs, whereas all other results are new.
	\end{abstract}

	\keywords{Subset feedback vertex set, node multiway cut, neighbor equivalence, rank-width, mim-width.}
		
	\section{Introduction }	
		Given a vertex-weighted graph $G$ and a set $S$ of its vertices, the \textsc{Subset Feedback Vertex Set} (SFVS) problem asks for a vertex set of minimum weight
	that intersects all cycles containing a vertex of $S$.
	SFVS was introduced by Even et~al.~\cite{EvenNZ00} who proposed an $8$-approximation algorithm.
	Cygan et~al.~\cite{CyganPPW13} and Kawarabayashi and Kobayashi \cite{KK12} independently
	showed that SFVS is fixed-parameter tractable (\FPT) parameterized by the solution size, while Hols and Kratsch \cite{HolsK18} provide a randomized polynomial kernel for the problem.
	As a generalization of the classical \NP-complete \textsc{Feedback Vertex Set} (FVS) problem, for which $S=V(G)$, there has been a considerable amount of work to obtain faster algorithms for SFVS, both for general graphs~\cite{ChitnisFLMRS17,FominGLS16} where the current best is an $O^*(1.864^{n})$ algorithm due to Fomin et~al.~\cite{FominHKPV14}, and restricted to special graph classes~\cite{BergougnouxBBK20,GolovachHKS14,PapadopoulosT20,PapadopoulosT19}.
	Naturally, FVS and SFVS differ in complexity, as exemplified by
	split graphs where FVS is polynomial-time solvable \cite{fvs:chord:corneil:1988} whereas SFVS remains \NP-hard \cite{FominHKPV14}.
	Moreover, note that the vertex-weighted variation of SFVS behaves differently than the unweighted one, as exposed on graphs with bounded independent set sizes: weighted SFVS is \NP-complete on graphs with independent set size at most four, whereas unweighted SFVS is in \XP parameterized by the independent set size \cite{PapadopoulosT20}.
	
	Closely related to SFVS is the \NP-hard \textsc{Node Multiway Cut} (NMC) problem in which we are given a
	vertex-weighted graph $G$ and a set $T$ of (terminal) vertices, and asked to find a vertex set of minimum weight that disconnects all the terminals \cite{Calinescu08,GargVY04}. 
	NMC is a well-studied problem in terms of approximation \cite{GargVY04}, as well as parameterized algorithms \cite{Calinescu08,CLL09,ChitnisFLMRS17,CPPW13,DahlhausJPSY94,FominHKPV14}.
	It is not difficult to see that SFVS for $S = \{v\}$ coincides with NMC in which $T=N(v)$.
	In fact, NMC reduces to SFVS by adding a single vertex $v$ with a large weight that is adjacent to all terminals 
    and setting $S=\{v\}$ \cite{FominHKPV14}.
	Thus, in order to solve NMC on a given graph one may apply a known algorithm for SFVS on a vertex-extended graph.
	Observe, however, that through such an approach one needs to clarify that the vertex-extended graph still obeys the necessary properties of the known algorithm for SFVS.
	This explains why most of the positive results on SFVS on graph families \cite{GolovachHKS14,PapadopoulosT19,PapadopoulosT20} can not be translated to NMC.
	
	In this paper, we investigate the complexity of SFVS and NMC when parameterized by structural graph width parameters.
	Well-known graph width parameters include tree-width \cite{Bodlaender06}, clique-width \cite{CourcelleO00}, rank-width \cite{Oum05a}, and maximum induced matching width (\emph{a.k.a.} mim-width) \cite{Vatshelle12}.
	These are of varying strength, with tree-width of modeling power strictly weaker than clique-width, as it is bounded on a proper subset of the graph classes having bounded clique-width, with rank-width and clique-width of the same modeling power, and with mim-width much stronger than clique-width. Belmonte and Vatshelle~\cite{BelmonteV13} showed that several graph classes, like interval graphs and permutation graphs, have bounded mim-width and a decomposition witnessing this can be found in polynomial time, whereas it is known that the clique-width of such graphs can be proportional to the square root of the number of vertices \cite{GolumbicR00}.
	In this way, an \XP algorithm parameterized by mim-width has the feature of unifying several algorithms on well-known graph classes.
	
	We obtain most of these parameters through the well-known notion of \emph{branch-decomposition} introduced in \cite{RobertsonS91}.
	This is a natural hierarchical clustering of $G$, represented as a subcubic tree $T$ with the vertices of $G$ at its leaves.
	Any edge of the tree defines a cut of $G$ given by the leaves of the two subtrees that result from removing the edge from $T$.
	Judiciously choosing a cut-function to measure the complexity of such cuts, or rather of the bipartite subgraphs of $G$ given by the edges crossing the cuts, this framework then defines a graph width parameter by a minmax relation, minimum over all trees and maximum over all its cuts.
	Several graph width parameters have been defined this way, like carving-width, maximum matching-width, boolean-width etc.
	We will in this paper focus on: (i) rank-width~\cite{Oum05a} whose cut function is the $GF[2]$-rank of the adjacency matrix, (ii) $\bQ$-rank-width~\cite{OumSV13} a variant of rank-width with interesting algorithmic properties which instead uses the rank over the rational field, and (iii) mim-width~\cite{Vatshelle12} whose cut function is the size of a maximum induced matching of the graph crossing the cut.
	Concerning their computations, for rank-width and $\bQ$-rank-width, there are $2^{3k}\cdot n^{4}$ time algorithms that, given a graph $G$ as input and $k\in \bN$, either output a decomposition for $G$ of width at most $3k+1$ or confirms that the width of $G$ is more than $k$ \cite{OumSV13,OumS06}. However, it is not known whether the mim-width of a graph can be approximated within a constant factor in time $n^{f(k)}$ for some function $f$.
	
	\medskip
	Let us mention what is known regarding the complexity of NMC and SFVS parameterized by these width measures.
	Since these problems can be expressed in MSO$_1$-logic it follows that
	they are \FPT parameterized by tree-width, clique-width, rank-width or $\bQ$-rank-width \cite{CourcelleMR00,Oum09a}, however the runtime will contain a tower of 2's with more than $4$ levels.
	Recently, Bergougnoux et al.~\cite{BergougnouxBBK20} proposed $k^{O(k)}\cdot n^{3}$ time algorithms for these two problems parameterized by treewidth and proved that they cannot be solved in time $k^{o(k)}\cdot n^{O(1)}$ unless ETH fails.
	For mim-width, we know that FVS and thus SFVS are both \W[1]-hard when parameterized by the mim-width of a given decomposition \cite{JaffkeKT20}.
	
	Attacking SFVS seems to be a hard task that requires more tools than for FVS.
	Even for very small values of mim-width that capture several graph classes, the tractability of SFVS, prior to our result, was left open besides interval and permutation graphs \cite{PapadopoulosT19}. Although FVS was known to be tractable on such graphs for more than a decade \cite{KratschMT08}, the complexity status of SFVS still remained unknown.
	
	\medskip
	\paragraph{\bf Our results.}
	We design a meta-algorithm that, given a graph and a branch-decomposition, solves SFVS (or NMC via its reduction to SFVS).
	The runtime of this algorithm is upper bounded by  $2^{O(rw^3)}\cdot n^4$, $2^{O(q^2\log(q))}\cdot n^4$ and $n^{O(k^2)}$ where $rw$, $q$ and $k$ are the rank-width, the $\bQ$-rank-width and the mim-width of the given branch-decomposition.
	For clique-width, our meta-algorithm implies that we can solve SFVS and NMC in time $2^{O(k^2)}\cdot n^{O(1)}$ where $k$ is the clique-width of a given clique-width expression.
	However, we do not prove this as it is not asymptotically optimal, indeed Jacob et al.~\cite{JacobBDP21} show recently that SFVS and NMC is solvable in time $2^{O(k\log k )}\cdot n$ given a clique-width expression.
	
	We resolve in the affirmative the question raised by Jaffke et~al.~\cite{JaffkeKT20}, also mentioned in \cite{PapadopoulosT19} and \cite{PapadopoulosT20},
	asking whether there is an \XP-time algorithm for SFVS parameterized by the mim-width of a given decomposition. For rank-width and $\bQ$-rank-width
	we provide the first explicit \FPT-algorithms with low exponential dependency that avoid the MSO$_1$ formulation.
	Our main results are summarized in the following theorem:

	\begin{theorem}\label{theo:intromain}
		Let $G$ be a graph on $n$ vertices.
		We can solve \textsc{Subset Feedback Vertex Set} and \textsc{Node Multiway Cut} in time $2^{O(rw^3)}\cdot n^4$ and $2^{O(q^2\log(q))}\cdot n^4$, where $rw$ and $q$ are the rank-width and the $\bQ$-rank-width of $G$, respectively.
		Moreover, if a branch-decomposition of mim-width $k$ for $G$ is given as input, we can solve \textsc{Subset Feedback Vertex Set} and \textsc{Node Multiway Cut} in time $n^{O(k^2)}$.
	\end{theorem}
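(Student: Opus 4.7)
The plan is a unified dynamic programming over a rooted branch-decomposition of $G$. First I reduce Node Multiway Cut to Subset Feedback Vertex Set via the construction sketched in the introduction (add a heavily weighted vertex $v$ adjacent to every terminal and take $S=\{v\}$), so it suffices to design one meta-algorithm that solves SFVS given a branch-decomposition, parameterized by the appropriate cut measure.

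For each internal node $u$ of the decomposition let $(A,\bar A)$ be the induced cut, $A$ being the leaves below $u$. I would maintain a table whose entries are indexed by \emph{traces}: a trace of a candidate partial solution $X_A\subseteq A$ records the partition of $A\setminus X_A$ into connected components of $G[A]\setminus X_A$ together with, for each block, a small amount of extra data (whether the block contains a vertex of $S$, plus enough bookkeeping to detect any cycle through an $S$-vertex that might be closed later). Two partial solutions sharing the same trace are interchangeable after extension to the $\bar A$-side, so only the minimum-weight representative per trace must be kept. Leaves are trivial, and at an internal node the table is computed by iterating over compatible pairs of child traces, taking the join of their partitions along the shared boundary, propagating the $S$-flags, and rejecting any merge that would create an $S$-cycle.

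To bound the size of the DP table I would invoke the \emph{neighbor equivalence} on the cut $(A,\bar A)$ in the Bui-Xuan--Telle--Vatshelle style: two subsets of $A$ are equivalent when they interact identically with $\bar A$ (same neighborhood for mim-width, same row-span over $\bF_2$ for rank-width, and over $\bQ$ for $\bQ$-rank-width). Standard bounds give $n^{O(k)}$, $2^{O(rw^2)}$, and $2^{O(q\log q)}$ equivalence classes respectively, with a representative of each class computable in polynomial time. Because a trace is essentially a partition of these classes together with $O(1)$ bits per block, the total number of traces is at most $N^{O(N)}$ where $N$ is the number of classes, which yields $n^{O(k^2)}$, $2^{O(rw^3)}$, and $2^{O(q^2\log q)}$ respectively. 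Multiplied by the $O(n)$ nodes of the branch-decomposition and the cost of a single combine step (polynomial in the number of traces and in $n$), this produces the three running times in the statement; for rank-width and $\bQ$-rank-width one then pre-computes a decomposition of near-optimal width via the Oum--Seymour algorithm in time $2^{O(rw)}\cdot n^4$, which is absorbed in the claimed bounds.

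The main obstacle will be the $S$-cycle condition, which is \emph{global}: a cycle through an $S$-vertex may only be completed much later in the DP by a path that leaves $A$ and returns. To make the consistency check local I plan to enrich each block of the trace with a constant amount of information recording, in essence, whether its $S$-vertex has already received two internally vertex-disjoint connections to the current boundary; any merge producing a second such connection to an $S$-marked block is infeasible and must be filtered out. Proving that $O(1)$ bits per block really suffice---so that the trace count does not exceed $N^{O(N)}$ and the three bounds survive---is the delicate technical point. The remaining ingredients (the reduction, the extraction of class representatives, the join of partitions, and the bookkeeping on $S$-flags) are standard once that local characterization is in place.
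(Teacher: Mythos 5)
There is a genuine gap, and it is exactly the point you flag as the ``delicate technical point'': you cannot make the FVS-style trace approach work for SFVS by adding ``$O(1)$ bits per block'', and even granting that, your count of traces does not yield the claimed bounds.

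First, on the combinatorial obstacle. For FVS the kept set $X$ induces a forest, so the bipartite graph $G[X\cap A, X\cap\comp A]$ is itself a forest and has a vertex cover of size $O(\mim(A))$; that is what makes the trace bookkeeping tractable. For SFVS the kept set is only an $S$-forest, and the bipartite graph across the cut can be a \emph{biclique}. There is then no small set of ``boundary'' blocks whose pairwise connectivity determines whether an $S$-cycle closes, so a partition of connected components of $G[X\cap A]$ decorated with a constant number of bits per block does not carry enough information, and also has far too many blocks. The paper's way around this is a contraction lemma: if $G[X\cup Y]$ is an $S$-forest, one can choose a partition $\cP_Y$ of $Y\setminus S$ (and take connected components on the $A$ side of $X\setminus S$) so that the contracted graph $G[X\cup Y]_{\downarrow\cc(X\setminus S)\cup\cP_Y}$ is an honest forest, every $S$-vertex has at most one neighbor in each contracted block, and the contracted bipartite cut graph has a vertex cover of at most $4\mim(A)$ blocks (Lemma~\ref{lem:equiStreeScontrac}). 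It is this bounded vertex cover --- not the full component partition --- that is recorded in the DP index, together with $2$-neighbor-equivalence representatives of the cover blocks on both sides, and the equivalence $\sim_i$ only compares how these $\leq 4\mim$ blocks are connected. Your proposal has no analogue of this; without it, ``whether this $S$-vertex has already received two vertex-disjoint connections to the boundary'' is not even well-defined locally, because the relevant second connection might run through a block of $\comp A$ you haven't seen yet, and there may be exponentially many boundary blocks to worry about.

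Second, on the counting. You write that a trace is ``a partition of these classes together with $O(1)$ bits per block'', giving $N^{O(N)}$ traces where $N$ is the number of neighbor-equivalence classes, and you claim this yields $n^{O(k^2)}$, $2^{O(rw^3)}$, $2^{O(q^2\log q)}$. But $N = n^{O(k)}$ for mim-width and $N = 2^{O(rw^2)}$ for rank-width, so $N^{O(N)}$ is $n^{\Theta(n^{O(k)})}$ and $2^{\Theta(rw^2\cdot 2^{O(rw^2)})}$ --- astronomically larger than the stated bounds. The paper's exponents $k^2$, $rw^3$, $q^2\log q$ come precisely from the fact that the index stores only $O(\mim)$ representatives (one per vertex-cover block) each drawn from the set of $2$-neighbor classes, giving $|\bI_x|$ roughly $(\nec_2)^{O(\mim)}$, and the subsequent partition-of-the-cover data contributes only an additional $(4\mim)^{4\mim}$ factor. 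That $4\mim$ cap, again, comes from the contraction plus the Lemma~\ref{lem:X2+} bound on high-degree blocks in a forest, which is exactly the machinery your proposal is missing.

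The parts that do match the paper --- the reduction from NMC to SFVS by adding one heavy vertex adjacent to the terminals, the bottom-up DP with $\reduce$ / representative sets, the use of Bui-Xuan--Telle--Vatshelle $d$-neighbor equivalence with its standard size bounds, and the Oum--Seymour approximation of rank-width and $\bQ$-rank-width as a preprocessing step --- are all correctly identified. What is missing is the single new technical device that carries the proof: contract $\comp{S}$-parts into blocks so that the $S$-forest becomes a forest, and exploit the $O(\mim)$ vertex cover of the contracted cut graph to bound the amount of boundary information the DP must remember.
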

	
	Note it is not known whether the mim-width of a graph can be approximated within a constant factor in time  $n^{f(k)}$ for some function $f$.
	However, by the previously mentioned results of Belmonte and Vatshelle \cite{BelmonteV13} on computing decompositions of bounded mim-width, combined with a result of \cite{JaffkeKST19tcs} showing that for any positive integer $r$ a decomposition of mim-width $k$ of a graph $G$ is also a decomposition of mim-width at most $2k$ of its power $G^r$, we get the following corollary.
	
	\begin{corollary}\label{cor:mainintro}
		We can solve \textsc{Subset Feedback Vertex Set} and \textsc{Node Multiway Cut} in polynomial time on \textsc{Interval}, \textsc{Permutation}, and \textsc{Bi-Interval} graphs,
		\textsc{Circular Arc} and \textsc{Circular
			Permutation} graphs, \textsc{Convex} graphs, \textsc{$k$-Polygon},
		\textsc{Dilworth-$k$} and \textsc{Co-$k$-Degenerate} graphs for fixed
		$k$, and on arbitrary powers of graphs in any of these classes.
		
	\end{corollary}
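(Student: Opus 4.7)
\smallskip
\noindent\textbf{Proof proposal for Corollary~\ref{cor:mainintro}.}
The plan is to derive the corollary as an immediate consequence of Theorem~\ref{theo:intromain} once a branch-decomposition of bounded mim-width is produced for each of the listed graph classes, so essentially no new algorithmic work on the optimization side is required: the role of the proof is to combine an existence/computation result for mim-width decompositions with the $n^{O(k^2)}$ meta-algorithm.

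First I would invoke the result of Belmonte and Vatshelle~\cite{BelmonteV13}, which states that each of the classes \textsc{Interval}, \textsc{Permutation}, \textsc{Bi-Interval}, \textsc{Circular Arc}, \textsc{Circular Permutation}, \textsc{Convex}, \textsc{$k$-Polygon}, \textsc{Dilworth-$k$}, and \textsc{Co-$k$-Degenerate} admits a branch-decomposition of mim-width bounded by a constant depending only on $k$ (where applicable), and moreover such a decomposition can be constructed in polynomial time from the input graph. Plugging this decomposition into the $n^{O(k^2)}$ algorithm granted by the third statement of Theorem~\ref{theo:intromain} yields a polynomial-time algorithm for \textsc{SFVS} and, via the reduction discussed in the introduction, for \textsc{NMC} as well. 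Since the reduction from \textsc{NMC} to \textsc{SFVS} merely adds a single vertex adjacent to all terminals, one should briefly verify that the resulting graph still lies in an appropriate ``mim-width bounded'' class; but in fact this is not needed, because Theorem~\ref{theo:intromain} already solves \textsc{NMC} directly given the mim-width decomposition of the original graph, so one avoids that subtlety altogether.

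To cover the ``arbitrary powers'' part of the statement, I would appeal to the result of Jaffke et~al.~\cite{JaffkeKST19tcs} cited just before the corollary: any branch-decomposition of $G$ of mim-width $k$ is simultaneously a branch-decomposition of $G^r$ of mim-width at most $2k$, for every positive integer $r$. Concretely, given a graph $G$ in one of the classes above and an integer $r$, I first compute in polynomial time a mim-width-$k$ decomposition of $G$ via Belmonte--Vatshelle, then reinterpret the same tree as a decomposition of $G^r$ of mim-width at most $2k$. Feeding this into Theorem~\ref{theo:intromain} gives an $n^{O((2k)^2)} = n^{O(1)}$ algorithm for \textsc{SFVS} and \textsc{NMC} on $G^r$, uniformly in $r$.

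The only genuinely non-trivial ingredient here is the meta-algorithm of Theorem~\ref{theo:intromain}, and that is assumed as already proved. The main ``obstacle'' in presenting the corollary is therefore purely bookkeeping: making sure that the promised polynomial-time construction of a bounded mim-width decomposition applies to each class in the list (and its power $G^r$), and that the output of the decomposition algorithm is in the exact format consumed by the meta-algorithm. Once both ends are matched, the corollary follows by a single line of composition of runtime bounds. I would therefore devote the write-up mostly to citing the relevant decomposition results class-by-class and to stating the composition cleanly, rather than to any new combinatorial argument.
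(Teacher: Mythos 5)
Your proposal matches the paper's own justification: it combines the Belmonte--Vatshelle results on computing bounded-mim-width decompositions for the listed classes, the Jaffke et~al.\ observation that the same branch-decomposition witnesses mim-width at most $2k$ for any power $G^r$, and the $n^{O(k^2)}$ bound from Theorem~\ref{theo:intromain}. Your remark that Theorem~\ref{theo:intromain} already handles \textsc{NMC} given only a decomposition of the \emph{original} graph (so no class-closure issue arises under the add-a-vertex reduction) is exactly the way the paper sidesteps that subtlety as well.
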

	
	Previously, such polynomial-time tractability was known only for SFVS and only on \textsc{Interval} and \textsc{Permutation} graphs \cite{PapadopoulosT19}.
	It is worth noticing that Theorem \ref{theo:intromain} implies also that we can solve \textsc{Subset Feedback Vertex Set} and \textsc{Node Multiway Cut} in polynomial time on \textsc{Leaf Power} if an intersection model is given as input (from which we can compute a decomposition of mim-width 1) \cite{BelmonteV13,JaffkeKST19tcs} and on \textsc{$H$-Graphs} for a fixed $H$ if  an $H$-representation is given as input (from which we can compute a decomposition of mim-width $2|E(H)|$) \cite{FominGR18}.
	
	\medskip
	\paragraph{\bf Our approach.}
	We give some intuition to our meta-algorithm, that will focus on \textsc{Subset Feedback Vertex Set}.
	Since NMC can be solved by adding a vertex $v$ of large weight adjacent to all terminals and solving SFVS with $S=\{v\}$, all within the same runtime as extending the given branch-decomposition to this new graph increases the width at most by one for all considered width measures.
	
	Towards achieving our goal, we use the $d$-neighbor equivalence, with $d=1$ and $d=2$, a notion introduced by Bui-Xuan et~al.~\cite{BuiXuanTV13}.
	Two subsets $X$ and $Y$ of $A \subseteq V(G)$ are \emph{$d$-neighbor equivalent \wrt $A$}, if $\min(d,|X\cap N(u)|) = \min(d,|Y\cap N(u)|)$ for all $u\in V(G) \setminus A$.
	For a cut $(A, \comp{A})$ this equivalence relation on subsets of vertices was used by Bui-Xuan et~al.~\cite{BuiXuanTV13} to design a meta-algorithm, also giving \XP algorithms by mim-width, for so-called $(\sigma, \rho)$ generalized domination problems.
	Recently, Bergougnoux and Kant\'e \cite{BergougnouxK19esa} extended the uses of this notion to acyclic and connected variants of $(\sigma, \rho)$ generalized domination and similar problems like FVS. An earlier \XP algorithm for FVS parameterized by mim-width had been given by Jaffke et~al.~\cite{JaffkeKT20} but instead of the $d$-neighbor equivalences this algorithm was based on the notions of reduced forests and minimal vertex covers.
	
	Our meta-algorithm does a bottom-up traversal of a given branch-decomposition of the input graph $G$, computing a vertex subset $X$ of maximum weight that induces an \emph{$S$-forest} (\ie, a graph where no cycle contains a vertex of $S$) and outputs $V(G) \setminus X$ which is necessarily a solution of SFVS.
	As usual, our dynamic programming algorithm  relies on a notion of representativity between sets of partial solutions.
	For each cut $(A,\comp{A})$ induced by the decomposition, our algorithm computes a set of partial solutions $\cA \subseteq 2^{A}$ of small size that represents $2^{A}$. We say that a set of partial solutions $\cA\subseteq 2^{A}$ represents a set of partial solutions $\cB\subseteq 2^{A}$, if, for each $Y\subseteq \comp{A}$, we have $\best(\cA,Y)=\best(\cB,Y)$ where $\best(\cC,Y)$ is the maximum weight of a set $X\in\cC$ such that $X\cup Y$ induces an $S$-forest.
	Since the root of the decomposition is associated with the cut $(V(G),\emptyset)$, the set of partial solutions computed for this cut represents $2^{V(G)}$ and thus contains an $S$-forest of maximum weight.
	Our main tool is a subroutine that, given a set of partial solutions $\cB\subseteq 2^{A}$, outputs a subset $\cA\subseteq \cB$ of small size that represents $\cB$.

	To design this subroutine, we cannot use directly the approaches solving FVS of any earlier approaches, like \cite{BergougnouxK19esa} or \cite{JaffkeKT20}.
	This is due to the fact that $S$-forests behave quite differently than forests; for example, given an $S$-forest $F$, the graph induced by the edges between $A\cap V(F)$ and $\comp{A}\cap V(F)$ could be a biclique.
	Instead, we introduce a notion of vertex contractions and prove that, for every $X\subseteq A$ and $Y\subseteq \comp{A}$, the graph induced by $X\cup Y$ is an $S$-forest if and only if there exists a partition of $X \setminus S$ and of $Y \setminus S$, satisfying certain properties, such that contracting the blocks of these partitions into single vertices transforms the $S$-forest into a forest.
	
	This equivalence between $S$-forests in the original graph and forests in the contracted graphs allows us to adapt some ideas from \cite{BergougnouxK19esa} and \cite{JaffkeKT20}.
	Most of all, we use the property that, if the mim-width of the given decomposition is $mim$, then the contracted graph obtained from the bipartite graph induced by $X$ and $Y$ admits a vertex cover $\VC$ of size at most $4mim$.
	Note however, that in our case the elements of $\VC$ are contracted subsets of vertices. Such a vertex cover allows us to control the cycles which are crossing the cut.
	
	We associate each possible vertex cover $\VC$ with an index $i$ which contains basically a representative for the 2-neighbor equivalence for each subset of vertices in $\VC$.
	Moreover, for each index $i$, we introduce the notions of partial solutions and complements solutions associated with $i$ which correspond, respectively, to subsets of $X\subseteq A$ and subsets $Y\subseteq \comp{A}$ such that, for some contractions of $X$ and $Y,$ the contracted graph obtained from the bipartite graph induced by $X$ and $Y$  admits a vertex cover $\VC$ associated with $i$.
	We define an equivalence relation $\sim_i$ between the partial solutions associated with $i$ such that $X\sim_i W$, if $X$ and $W$ connect in the same way the representatives of the vertex sets which belongs to the vertex covers described by $i$.
	Given a set of partial solutions $\cB\subseteq 2^{A}$, our subroutine outputs a set $\cA$ that contains, for each index $i$ and each equivalence class $\cC$ of $\sim_i$ over $\cB$, a partial solution in $\cC$ of maximum weight.
	In order to prove that $\cA$ represents $\cB$, we show that:
	\begin{itemize}
		\item for every $S$-forest $F$, there exists an index $i$ such that $V(F)\cap A$ is a partial solution associated with $i$ and $V(F)\cap \comp{A}$ is a complement solutions associated with $i$.
		
		\item if $X\sim_i W$, then, for every complement solution $Y$ associated with $i$, the graph induced by $X\cup Y$ is an $S$-forest if and only if $W\cup Y$ induces an $S$-forest.
	\end{itemize}
	The number of indices $i$ is upper bounded by $2^{O(q^2\log(q))}$, $2^{O(rw^3)}$ and $n^{O(mim^2)}$.
	This follows from the known upper-bounds on the number of 2-neighbor equivalence classes and the fact that the vertex covers we consider have size at most $4mim$.
	Since there are at most $(4mim)^{4mim}$ ways of connecting $4mim$ vertices and $rw,q\geq mim$, we deduce that the size of $\cA$ is upper bounded by $2^{O(q^2\log(q))}$, $2^{O(rw^3)}$ and $n^{O(mim^2)}$.
	
	\medskip
	
	To the best of our knowledge, this is the first time a dynamic programming algorithm parameterized by graph width measures uses this notion of vertex contractions.
	Note that in contrast  to the meta-algorithms in \cite{BergougnouxK19esa,BuiXuanTV13}, the number of representatives (for the $d$-neighbor equivalence) contained in the indices of our meta-algorithm are not upper bounded by a constant but by $4mim$.
	This explains the differences between the runtimes in Theorem~\ref{theo:intromain} and those obtained in~\cite{BergougnouxK19esa,BuiXuanTV13}, i.e. $n^{O(mim^2)}$ versus $n^{O(mim)}$.
	However, for the case $S=V(G)$, thus solving FVS, our meta-algorithm will have runtime $n^{O(mim)}$, as the algorithms for FVS of~\cite{BergougnouxK19esa, JaffkeKT20}.
	We do not expect that SFVS can be solved as fast as FVS when parameterized by graph width measures.
	In fact, we know that it is not the case for tree-width as FVS can be solved in $2^{O(k)} \cdot n$~\cite{BodlaenderCKN15} but SFVS cannot be solved in $k^{o(k)} \cdot n^{O(1)}$ unless ETH fails~\cite{BergougnouxBBK20}.

	\section{Preliminaries }
	The size of a set $V$ is denoted by $|V|$ and its power set is denoted by $2^V$.
	We write $A\setminus B$ for the set difference of $A$ from $B$.
	We let $\min (\emptyset)= +\infty$ and $\max(\emptyset)=-\infty$.
	
	\paragraph{\bf  Graphs}
	The vertex set of a graph $G$ is denoted by $V(G)$ and its edge set by $E(G)$.
	An edge between two vertices $x$ and $y$ is denoted by $xy$ (or $yx$).
	Given $\cS\subseteq 2^{V(G)}$, we denote by $V(\cS)$ the set $\bigcup_{S\in\cS}S$.
	For a vertex set $U \subseteq V(G)$, we denote by $\comp{U}$ the set $V(G) \setminus U$.
	The set of vertices that are adjacent to $x$ is denoted by $N_G(x)$, and for $U\subseteq V(G)$, we let $N_G(U)=\left(\cup_{v\in U}N_G(v)\right)\setminus U$.
	
	
	The subgraph of $G$ induced by a subset $X$ of its vertex set is denoted by $G[X]$.
	For two disjoint subsets $X$ and $Y$ of $V(G)$, we denote by $G[X,Y]$ the bipartite graph with vertex set $X\cup Y$ and edge set $\{xy \in E(G)\mid x\in X \text{ and } \ y\in Y \}$. We denote by $M_{X,Y}$ the adjacency matrix between $X$ and $Y$, \ie, the $(X,Y)$-matrix such that $M_{X,Y}[x,y]=1$ if $y\in N(x)$ and 0 otherwise.
	A \emph{vertex cover} of a graph $G$ is a set of vertices $\VC\subseteq V(G)$ such that, for every edge $uv\in E(G)$, we have $u\in\VC$ or $v\in \VC$.
	A \emph{matching} is a set of edges having no common endpoint
	and an \emph{induced matching} is a matching $M$ of edges such that $G[V(M)]$ has no other edges besides $M$.
	The \emph{size of an induced matching} $M$ refers to the number of edges in $M$.
	
	For a graph $G$, we denote by $\cc_G(X)$ the partition $\{C\subseteq V(G) \mid G[C]$ is a connected component of $G[X]\}$.
	We will omit the subscript $G$ of the neighborhood and components notations whenever there is no ambiguity.
	
	For two graphs $G_1$ and $G_2$,	we denote by $G_1 - G_2$ the graph $(V(G_1), E(G_1)\setminus E(G_2))$.
	
	Given a graph $G$ and $S\subseteq V(G)$, we say that  a cycle of $G$ is an \emph{$S$-cycle} if it contains a vertex in $S$.
	Moreover, we say that a subgraph $F$ of $G$ is an \emph{$S$-forest} if $F$ does not contain an $S$-cycle.
	Typically, the \textsc{Subset Feedback Vertex Set} problem asks for a vertex set of minimum (weight) size such that its removal results in an $S$-forest.
	Here we focus on the following equivalent formulation:
	\pbDef{Subset Feedback Vertex Set (SFVS)}
	{A graph $G$, $S\subseteq V(G)$ and a weight function $\w : V(G)\to \bQ$}
	{The maximum among the weights of the $S$-forests of $G$.}
	
	\paragraph{\bf Rooted Layout}
	For the notion of branch-decomposition, we consider its rooted variant called \emph{rooted layout}.
	A \emph{rooted binary tree} is a binary tree with a distinguished vertex called the \emph{root}.
	Since we manipulate at the same time graphs and trees representing them,  the vertices of trees will be called \emph{nodes}.
	
	A rooted layout of $G$ is a pair $(T,\delta)$ of a rooted binary tree $T$ and a bijective function $\delta$ between $V(G)$ and the leaves of $T$.
	For each node $x$ of $T$, let $L_x$ be the set of all the leaves $l$ of $T$ such that the path from the root of $T$ to $l$ contains $x$.
	We denote by $V_x$ the set of vertices that are in bijection with $L_x$, \ie, $V_x:=\{v \in V(G)\mid \delta(v)\in L_x\}$.
	
	All the width measures dealt with in this paper are special cases of the following one, where the difference in each case is the used set function.
	Given a set function $\f: 2^{V(G)} \to\bN$ and a rooted layout $\cL=(T,\delta)$, the $\f$-width of a node $x$ of $T$ is $\f(V_x)$ and the $\f$-width of $(T,\delta)$, denoted by $\f(T,\delta)$ (or $\f(\cL)$), is $\max\{\f(V_x) \mid x \in V(T)\}$.
	Finally, the $\f$-width of $G$ is the minimum $\f$-width over all rooted layouts of $G$.
	
	\paragraph{\bf $(\bQ)$-Rank-width}
	The rank-width and $\bQ$-rank-width are, respectively, the $\rw$-width and $\Qrw$-width where $\rw(A)$ (resp. $\Qrw(A)$) is the rank over $GF(2)$ (resp. $\bQ$) of the matrix $M_{A,\comp{A}}$ for all $A\subseteq V(G)$.
	
	\paragraph{\bf Mim-width}
	The mim-width of a graph $G$ is the $\mim$-width of $G$ where $\mim(A)$ is the size of a maximum induced matching of the graph $G[A,\comp{A}]$ for all $A\subseteq V(G)$.
	\medskip
	
	
	Observe that all three parameters $\rw$-, $\Qrw$-, and $\mim$-width are symmetric, \ie, for the associated set function $f$ and for any $A \subseteq V(G)$, we have $f(A) = f(\comp{A})$.
	The following lemma provides upper bounds between mim-width and the other two parameters. 
	
	\begin{lemma}[\cite{Vatshelle12}]\label{lem:comparemim}
		Let $G$ be a graph. For every $A\subseteq V(G)$,  
		we have $\mim(A) \leq \rw(A)$ and $\mim(A) \leq \Qrw(A)$.
	\end{lemma}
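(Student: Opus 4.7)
The plan is to exhibit, for any cut $(A,\comp{A})$, a submatrix of the adjacency matrix $M_{A,\comp{A}}$ whose rank (over both $GF(2)$ and $\bQ$) is exactly $\mim(A)$. Since the rank of a submatrix is bounded above by the rank of the full matrix over the same field, this immediately yields both inequalities.

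Concretely, let $k=\mim(A)$ and fix a maximum induced matching $M=\{a_1b_1,\ldots,a_kb_k\}$ of $G[A,\comp{A}]$ with $a_i\in A$ and $b_i\in \comp{A}$. By the definition of an induced matching, the only edges of $G[V(M)]$ are the matching edges themselves, so $a_ib_j\in E(G)$ if and only if $i=j$. Therefore the $k\times k$ submatrix of $M_{A,\comp{A}}$ obtained by restricting to rows $\{a_1,\ldots,a_k\}$ and columns $\{b_1,\ldots,b_k\}$ is, up to the chosen orderings of the rows and columns, precisely the identity matrix $I_k$.

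The identity matrix $I_k$ has rank $k$ over any field, in particular over $GF(2)$ and over $\bQ$. Since rank is monotone under taking submatrices, the rank of $M_{A,\comp{A}}$ is at least $k$ over both fields, which gives $\mim(A)\leq \rw(A)$ and $\mim(A)\leq \Qrw(A)$. There is no real obstacle here; the only thing one needs to be careful about is making the correspondence between the induced matching and the identity submatrix explicit, and recalling that rank cannot drop when passing from a submatrix to the ambient matrix.
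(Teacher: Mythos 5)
Your proof is correct and follows essentially the same approach as the paper: take a maximum induced matching, observe that the corresponding submatrix of $M_{A,\comp{A}}$ is the identity, and conclude by monotonicity of rank under taking submatrices. You spell out the identity-submatrix correspondence a bit more explicitly, but the argument is the same.
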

	\begin{proof}
		Let $A\subseteq V(G)$.
		Let $S$ be the vertex set of a maximum induced matching of the graph $G[A,\comp{A}]$.
		By definition, we have $\mim(A)= |S\cap A| = |S \cap \comp{A}|$.
		Observe that the restriction of the matrix $M_{A,\comp{A}}$ to rows in $S \cap A$ and columns in $S \cap \comp{A}$ is a permutation matrix: a binary square matrix with exactly one entry of 1 in each row and each column. The rank of this permutation matrix over $GF[2]$ or $\bQ$ is $|S\cap A|=\mim(A)$.
		Hence, $\mim(A)$ is upper bounded both by $\rw(A)$ and $\Qrw(A)$.
	\end{proof}

	\paragraph{\bf $d$-neighbor-equivalence.} The following concepts were introduced in \cite{BuiXuanTV13}. Let $G$ be a graph.
	Let $A\subseteq V(G)$ and $d\in \bN^+$.
	Two subsets $X$ and $Y$ of $A$ are \emph{$d$-neighbor equivalent \wrt $A$}, denoted by $X\equi{A}{d} Y$, if $\min(d,|X\cap N(u)|) = \min(d,|Y\cap N(u)|)$ for all $u\in \comp{A}$.
	It is not hard to check that $\equi{A}{d}$ is an equivalence relation.
	See Figure \ref{fig:nec} for an example of $2$-neighbor equivalent sets.
	
	\begin{figure}[h!]
		\centering
		\includegraphics[scale=0.95]{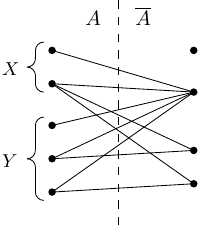}
		\caption{We have $X\equi{A}{2} Y$, but it is not the case that $X\equi{A}{3} Y$.}
		\label{fig:nec}
	\end{figure}

	For all $d\in \bN^+$, we let $\nec_d : 2^{V(G)}\to \bN$ where for all $A\subseteq V(G)$, $\nec_d(A)$ is the number of equivalence classes of $\equi{A}{d}$.
	Notice that $\nec_1$ is a symmetric function \cite[Theorem 1.2.3]{Kim82} but $\nec_d$ is not necessarily symmetric for $d\geq 2$.
	To simplify the running times, we will use the shorthand $\snec_2(A)$ to denote $\max(\nec_2(A),\nec_2(\comp{A}))$ (where $\mathsf{s}$ stands for symmetric).
	The following lemma shows how $\nec_d(A)$ is upper bounded by the other parameters. 
	
	\begin{lemma}[\cite{BelmonteV13,OumSV13,Vatshelle12}]\label{lem:compare}
		Let $G$ be a graph. For every $A\subseteq V(G)$ and $d\in \bN^+$, we have the following upper bounds on $\nec_d(A)$:
		\begin{multicols}{3}
			\begin{enumerate}[(a)]
				\item $2^{d \rw(A)^2}$,
				\item $2^{\Qrw(A)\log(d \Qrw(A) + 1 )}$,
				\item $|A|^{d \mim(A)}$.
			\end{enumerate}
		\end{multicols}
	\end{lemma}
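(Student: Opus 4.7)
The plan is to exhibit, for each bound, a succinct invariant that determines the $d$-neighbor equivalence class of a subset $X \subseteq A$ and whose number of possible values is bounded by the claimed quantity.

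For (c), I would show that every equivalence class has a representative $R \subseteq A$ of size at most $d \cdot \mim(A)$, and then count such subsets. Fix any $X$ in the class and take an inclusion-minimal $R$ with $R \equiv_A^d X$. By minimality, each $v \in R$ admits a witness $u_v \in \comp{A}$ such that removing $v$ from $R$ changes the equivalence class; this forces $v \in N(u_v)$ and $|R \cap N(u_v)| \leq d$. The map $v \mapsto u_v$ thus has fibers of size at most $d$, so its image $U \subseteq \comp{A}$ satisfies $|U| \geq |R|/d$. A careful combinatorial argument extracts from the bipartite subgraph between $R$ and $U$ an induced matching of size $|U|$ in $G[A, \comp{A}]$, forcing $|U| \leq \mim(A)$ and hence $|R| \leq d \cdot \mim(A)$; counting subsets of $A$ of size at most $d \cdot \mim(A)$ then gives the bound $|A|^{d \cdot \mim(A)}$.

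For (b), I would use $\bQ$-linear algebra. Let $q = \Qrw(A)$ and choose a set $B \subseteq A$ of size $q$ whose rows span the row space of $M_{A, \comp{A}}$ over $\bQ$. Then for any $X \subseteq A$, the integer vector $(|X \cap N(u)|)_{u \in \comp{A}}$ is determined by its values at $q$ distinguished columns corresponding to a basis of the column space. After truncation to $\min(d, \cdot)$, each coordinate takes at most $dq + 1$ possibilities, giving $(dq+1)^q = 2^{q \log(dq+1)}$ equivalence classes. For (a), the same basis strategy with $GF(2)$ and $r = \rw(A)$ applies, but rank over $GF(2)$ captures only parities of intersections, not integer counts. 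I would therefore keep $d$ layers of parity information per basis vector, encoding enough to recover $\min(d, |X \cap N(u)|)$ across the $r$ basis elements, which accounts for the $2^{d r^2}$ bound.

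The hard parts are (c) and (a). In (c), extracting an induced matching of the required size from the minimality witnesses requires ruling out unwanted cross-edges, which is not automatic from the fiber bound and is where I would spend the most care. In (a), designing a layered parity invariant that both determines the $d$-neighbor equivalence class and fits inside the $GF(2)$-rank budget is subtler than the clean $\bQ$-linear-algebra argument behind (b), and I expect a delicate encoding to be needed there as well.
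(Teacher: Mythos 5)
This lemma is stated in the paper without proof, as a citation to~\cite{BelmonteV13,OumSV13,Vatshelle12}, so there is no in-paper proof to compare against; I will assess your attempt on its own merits.

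For~(c), the step that would fail is the extraction of an induced matching of size $|U|$ from $G[R,U]$. Take $R=\{v_1,v_2\}$, $U=\{u_1,u_2\}$ with $N(u_1)\cap R=N(u_2)\cap R=\{v_1,v_2\}$ and $d=2$; then $u_1$ is a valid minimality witness for $v_1$ and $u_2$ for $v_2$, the fiber bound gives $|U|=2$, yet $G[R,U]=K_{2,2}$ has induced matching number $1$. So $|U|\le\mim(A)$ does not follow from your argument, even though $|R|\le d\cdot\mim(A)$ happens to hold here. The correct reasoning in Vatshelle~\cite{Vatshelle12} is more delicate: the witness structure is exploited through an ordered, monotone greedy choice of edges so that \emph{both} directions of cross-edges between chosen pairs are excluded, and that is precisely the part you flag as hard but do not supply. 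For~(b), the nonlinearity of $\min(d,\cdot)$ breaks your argument: knowing $\bigl(\min(dq,|X\cap N(u)|)\bigr)_{u\in C}$ at a column basis $C$ does \emph{not} determine the truncated vector on all of $\comp{A}$. A concrete counterexample: with $d=1$, $q=2$, basis columns $u_1,u_2$ and a third column $w$ with $Me_w=Me_{u_1}-Me_{u_2}$, the sets $X$ with $(|X\cap N(u_1)|,|X\cap N(u_2)|)=(5,3)$ and $Y$ with $(3,3)$ both record $(2,2)$ after truncating at $dq=2$, but $|X\cap N(w)|=2$ and $|Y\cap N(w)|=0$, so $X\not\equiv_A^1 Y$. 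The actual proof in~\cite{OumSV13} instead bounds the size of minimal representatives using the $\bQ$-rank and then counts those, rather than counting truncated coordinate vectors. Your sketch for~(a) is too vague to evaluate, which you also acknowledge. In short, the high-level plan (small representatives for~(c), a basis-coordinate invariant for~(b)) is the right shape, but the concrete claims you make at the crux of~(b) and~(c) are false as stated, not merely unfinished.
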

	
	In order to manipulate the equivalence classes of $\equi{A}{d}$, one needs to compute a representative  for each equivalence class in polynomial time.
	This is achieved with the following notion of a representative.
	Let $G$ be a graph with an arbitrary ordering of $V(G)$ and let $A\subseteq V(G)$.
	For each $X\subseteq A$, let us denote by $\rep{A}{d}(X)$ the lexicographically smallest set $R\subseteq A$ such that $|R|$ is minimized and $R\equi{A}{d} X$.
	Moreover, we denote by $\Rep{A}{d}$ the set $\{\rep{A}{d}(X)\mid X\subseteq A\}$.
	It is worth noticing that the empty set always belongs to $\Rep{A}{d}$, for all $A\subseteq V(G)$ and $d\in\bN^+$.
	Moreover, we have $\Rep{V(G)}{d}=\Rep{\emptyset}{d}=\{\emptyset\}$ for all $d\in \bN^+$.
	In order to compute these representatives, we use the following lemma.
	
	\begin{lemma}[\cite{BuiXuanTV13}]\label{lem:computenecd}
		Let $G$ be an $n$-vertex graph. For every $A\subseteq V(G)$ and $d\in \bN^+$, one can compute in time $O(\nec_d(A) \cdot n^2 \cdot \log(\nec_d(A)))$, the sets $\Rep{A}{d}$ and a data structure that, given
		a set $X\subseteq A$, computes $\rep{A}{d}(X)$ in time $O(|A|\cdot n\cdot \log(\nec_d(A)))$.
	\end{lemma}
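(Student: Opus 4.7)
My plan is to enumerate $\Rep{A}{d}$ by a bottom-up BFS over subsets of $A$, using the bijection between $\equiv_{A}^{d}$-classes and \emph{signatures}: the signature of $X \subseteq A$ is the vector $\sigma(X) := (\min(d, |X \cap N(u)|))_{u \in \comp{A}} \in \{0,1,\ldots,d\}^{|\comp{A}|}$, and by definition $X \equiv_{A}^{d} Y$ iff $\sigma(X) = \sigma(Y)$, so there are exactly $\nec_d(A)$ signatures. The algorithm hinges on the local update rule: for $v \notin X$, $\sigma(X \cup \{v\})_{u} = \min(d, \sigma(X)_{u} + 1)$ if $v \in N(u)$ and $\sigma(X)_{u}$ otherwise, so $\sigma(X \cup \{v\})$ depends only on $\sigma(X)$ and $v$. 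In particular, if $X \equiv_{A}^{d} X'$ and $v \notin X \cup X'$, then $\sigma(X \cup \{v\}) = \sigma(X' \cup \{v\})$.

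I maintain a dictionary $D$ keyed by signatures (a trie of depth $|\comp{A}|$ with branching factor $d+1$, or equivalently a balanced BST with $O(n)$-time signature comparisons) mapping each discovered signature to its currently stored representative, together with a FIFO queue $Q$. Initialize $D$ with $(\sigma(\emptyset), \emptyset)$ and enqueue $\emptyset$. In each iteration dequeue some $R$ and, for each $v \in A \setminus R$, compute $\sigma(R \cup \{v\})$ in $O(n)$ time from $\sigma(R)$ using the update rule and query $D$. If the signature is new, insert $(\sigma(R \cup \{v\}), R \cup \{v\})$ and enqueue $R \cup \{v\}$; if it is already present with stored representative $R'$, compare $(|R \cup \{v\}|, R \cup \{v\})$ with $(|R'|, R')$ under ``smaller size first, then lex'' and retain the smaller. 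Since a set is enqueued only upon discovering a new signature, there are at most $\nec_d(A)$ dequeues; each performs $|A|$ extensions of $O(n)$ signature-update cost plus $O(n \log \nec_d(A))$ dictionary cost, totalling $O(\nec_d(A) \cdot n^{2} \log \nec_d(A))$. A query $X \mapsto \rep{A}{d}(X)$ computes $\sigma(X)$ in $O(|A| \cdot n)$ by scanning $X$ and incrementing counters, then performs one lookup in $O(n \log \nec_d(A))$.

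Correctness is by induction on the minimum size $k$ of an equivalence class, showing that its lex-smallest minimum-size representative $R^{*}$ is eventually stored in $D$. The case $k=0$ follows from initialization. For $k \geq 1$, pick $v \in R^{*}$; then $R^{*} \setminus \{v\}$ lies in a class of minimum size $\leq k-1$, whose stored representative $R'$ has been (or will be) dequeued by induction. In the generic case $v \notin R'$, the congruence yields $\sigma(R' \cup \{v\}) = \sigma(R^{*})$, so the BFS generates a member of $[R^{*}]$ while processing $R'$, and the two-tier comparison rule replaces any lex-larger equivalent stored set by $R^{*}$ as soon as $R^{*}$ itself is produced. The delicate case is $v \in R'$: here I need a structural lemma asserting the existence of some $w \in A \setminus R'$ whose neighborhood in the ``unsaturated'' index set $\{u : \sigma(R')_{u} < d\}$ matches that of $v$, so that extending $R'$ by $w$ still lands in $[R^{*}]$.

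I expect the main obstacle to be verifying this structural lemma: it amounts to ruling out that a minimum-size lex-smallest representative $R'$ can simultaneously contain every vertex $v \in R^{*}$ whose removal we need, and the argument should be a matching-style counting on the values $|R' \cap N(u)|$ (for the unsaturated coordinates $u$) that uses the lex-minimality of $R'$ in an essential way. With this lemma in hand the BFS enumerates every equivalence class, the stored representative of each is exactly $\rep{A}{d}(\cdot)$, and the complexity bounds in the statement follow from the counts above.
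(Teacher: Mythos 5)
The paper cites this lemma from Bui-Xuan, Telle and Vatshelle~\cite{BuiXuanTV13} and gives no proof of it, so your task is effectively to re-derive the cited result; it is fair to judge the proposal on whether your argument actually closes.

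Your BFS over representatives, keyed on signatures with the local update rule and the congruence $X \equiv_A^d X' \wedge v \notin X \cup X' \Rightarrow X\cup\{v\} \equiv_A^d X'\cup\{v\}$, is the right machinery and essentially the technique one expects behind the cited bound. You also correctly identified where the argument does not yet close. But that gap is genuine and it is the crux. Concretely, your induction needs, for every $R^* \in \Rep{A}{d}$ with $|R^*|=k\geq 1$, some $v\in R^*$ such that $R^*\setminus\{v\}$ is itself the minimum-size lexicographically smallest representative of its class. Your ``Case A'' argument (if $v\notin R'$ then $R'=R^*\setminus\{v\}$) is a lex-propagation argument which only goes through cleanly for specific choices of $v$ (e.g.\ $v=\min R^*$ or $v=\max R^*$); and for the complementary ``Case B'' ($v\in R'$) you invoke an unproved existence claim for a substitute vertex $w$ whose unsaturated neighborhood matches that of $v$. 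That claim is not obviously true; the ``matching-style counting'' you sketch is exactly the hard part and is precisely what a complete proof must supply. Examples such as $A=\{1,2,3\}$, $\comp{A}=\{a,b\}$, $N(a)=\{1,3\}$, $N(b)=\{2,3\}$, $d=2$, $R^*=\{1,2,3\}$ show that a particular choice of $v$ can fail (removing $\max R^*=3$ lands in a class whose representative $\{3\}$ contains $3$), so the argument really does depend on the freedom to pick $v$ and on the unverified lemma.

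A second, subtler issue: your algorithm enqueues a set only upon first discovery of a new signature, yet your induction step says ``whose stored representative $R'$ has been (or will be) dequeued.'' What is dequeued is the first-discovered set of that class, not necessarily $\rep{A}{d}$ of it, and these coincide only under a further invariant (processed FIFO, extensions in increasing vertex order, plus the structural lemma) that you do not establish. Without this invariant, the BFS may never generate $R^*$ itself, only other same-size members of $[R^*]$, and then $D$ can store a lex-larger set than $\rep{A}{d}$ demands. So there are two dependent holes: the structural lemma, and the ``first-discovered $=$ representative'' invariant that leans on it. Until both are proved, the complexity accounting is fine but the correctness claim is not.
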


	\paragraph{\bf Vertex Contractions}
	In order to deal with SFVS, we will use the ideas of the algorithms for \textsc{Feedback Vertex Set} from \cite{BergougnouxK19esa,JaffkeKT18}.
	To this end, we will contract subsets of $\comp{S}$ in order to transform $S$-forests into forests.
	
	In order to compare two partial solutions associated with $A\subseteq V(G)$, we define an auxiliary graph in which we replace contracted vertices by their representative sets in $\Rep{A}{2}$.
	Since the sets in $\Rep{A}{2}$ are not necessarily pairwise disjoint, we will use the following notions of graphs ``induced'' by collections of subsets of vertices. We will also use these notions to define the contractions we make on partial solutions.
	
	\medskip
	
	Let $G$ be a graph.
	Given $\cA\subseteq 2^{V(G)}$, we define $G[\cA]$ as the graph with vertex set $\cA$ where $A,B\in \cA$ are adjacent if and only if $N(A)\cap B\neq\emptyset$.
	Observe that if the sets in $\cA$ are pairwise disjoint, then $G[\cA]$ is obtained from an induced subgraph of $G$ by \emph{vertex contractions} (i.e., by replacing two vertices $u$ and $v$ with a new vertex with neighborhood $N(\{u,v\})$) and, for this reason, we refer to $G[\cA]$ as a \emph{contracted graph}.
	Notice that we will never use the neighborhood notation and connected component notations on contracted graphs.
	Given $\cA,\cB\subseteq 2^{V(G)}$, we denote by $G[\cA,\cB]$ the bipartite graph with vertex set $\cA\cup\cB$ and where $A,B\in \cA\cup \cB$ are adjacent if and only if $A\in \cA$, $B\in \cB$, and $N(A)\cap B\neq\emptyset$.
	Moreover, we denote by $G[\cA \mid \cB]$ the graph with vertex set $\cA\cup\cB$ and with edge set $E(G[\cA])\cup E(G[\cA,\cB])$.
	Observe that both graphs $G[\cA,\cB]$ and $G[\cA \mid \cB]$ are subgraphs of the contracted graph $G[\cA \cup \cB]$.
	To avoid confusion with the original graph, we refer to the vertices of the contracted graphs as \textit{blocks}.
	It is worth noticing that in the contracted graphs used in this paper, whenever two blocks are adjacent, they are disjoint.
	
	\medskip
	
	The following observation states that we can contract from a partition without increasing the size of a maximum induced matching of a graph.
	It follows directly from the definition of contractions.
	
	\begin{observation}\label{obs:contactionmimwidth}
		Let $H$ be a graph.
		For any partition $\cP$ of a subset of $V(H)$, the size of a maximum induced matching of $H[\cP]$ is at most the size of a maximum induced matching of $H$.
	\end{observation}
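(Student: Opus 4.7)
The plan is to take a maximum induced matching $M'$ in $H[\cP]$ and lift it to an induced matching $M$ in $H$ of the same size. Concretely, for each edge $\{A,B\}\in M'$, the definition of $H[\cP]$ gives $N(A)\cap B\neq\emptyset$, so we can pick vertices $a\in A$ and $b\in B$ with $ab\in E(H)$. Choose one such pair for every edge of $M'$ and let $M$ be the resulting set of edges of $H$.

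Next I would verify that $M$ is an induced matching. Since $\cP$ is a partition, its blocks are pairwise disjoint, so two distinct blocks never share a vertex. Together with the fact that $M'$ is a matching (hence its blocks are all distinct across different edges), this implies that the vertices chosen for distinct edges of $M'$ are distinct; in particular $M$ is a matching in $H$ of size $|M'|$. To check that $M$ is induced, consider any two edges $\{a_1,b_1\},\{a_2,b_2\}\in M$ coming from distinct edges $\{A_1,B_1\},\{A_2,B_2\}\in M'$, and any cross-pair such as $\{a_1,a_2\}$. If $a_1a_2\in E(H)$ then $N(A_1)\cap A_2\neq\emptyset$ so $\{A_1,A_2\}\in E(H[\cP])$, which contradicts that $M'$ is an induced matching of $H[\cP]$ (as $A_1,A_2$ both lie in $V(M')$). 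The same argument rules out $a_1b_2$, $b_1a_2$ and $b_1b_2$.

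Hence $H$ admits an induced matching of size $|M'|$, which is the size of a maximum induced matching of $H[\cP]$, giving the desired inequality. There is no real obstacle here: the only subtlety is remembering that a partition ensures pairwise disjointness of blocks, so picking one representative endpoint per edge of $M'$ is well defined and yields distinct vertices; the induced-matching property of $M'$ then transfers to $M$ edge-by-edge by contrapositive.
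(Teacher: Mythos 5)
Your proof is correct, and since the paper dismisses this observation with "It follows directly from the definition of contractions," your argument is exactly the routine lifting the authors had in mind: pick one representative edge $ab\in E(H)$ per edge $\{A,B\}$ of the induced matching in $H[\cP]$, use disjointness of the partition blocks to see the picked vertices are distinct, and transfer the induced property by contrapositive of the adjacency definition of $H[\cP]$.
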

	
	Let $(G,S)$ be an instance of SFVS.
	The vertex contractions that we use on a partial solution $X$ are defined from a given partition of $X\setminus S$.
	A partition of the vertices of $X \setminus S$ is called an \emph{$\comp{S}$-contraction} of $X$.
	We will use the following notations to handle these contractions.
	
	Given $Y\subseteq V(G)$, we denote by $\binom{Y}{1}$ the partition of $Y$ which contains only singletons, i.e., $\binom{Y}{1}=\{\{ v\} \mid v \in Y\}$.
	Moreover, for an $\comp{S}$-contraction $\cP$ of $X$, we denote by $X_{\downarrow \cP}$ the partition of $X$ where $X_{\downarrow\cP}= \cP \cup \binom{X\cap S}{1}$.
	Given a subgraph $H$ of $G$ and an $\comp{S}$-contraction $\cP$ of $V(H)$, we denote by $H_{\downarrow \cP}$ the graph $H[V(H)_{\downarrow \cP}]$.
	For example, given two $\comp{S}$-contractions $\cP_X,\cP_Y$ of two disjoint subsets $X,Y$ of $V(G)$, we denote the graph $G[X_{\downarrow \cP_X}, Y_{\downarrow \cP_Y}]$ by $G[X, Y]_{\downarrow\cP_X\cup \cP_Y}$ and the graph $G[X_{\downarrow \cP_X} | Y_{\downarrow \cP_Y}]$ by $G[X| Y]_{\downarrow\cP_X\cup \cP_Y}$.
	It is worth noticing that in our contracted graphs, all the blocks of $S$-vertices are singletons and we denote them by $\{v\}$.
	
	Given a set $X\subseteq V(G)$, we will intensively use the graph $G[X]_{\downarrow\cc(X\setminus S)}$ which corresponds to the graph obtained from $G[X]$ by contracting the connected components of $G[X\setminus S]$, see Figure~\ref{fig:S-contraction}.
	
	\begin{figure}[htb]
		\centering
		\includegraphics[width=0.6\linewidth]{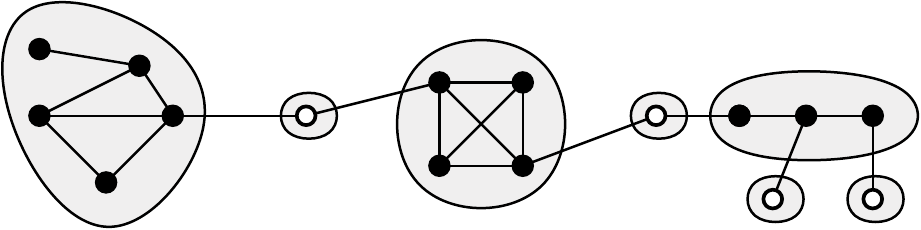}
		\caption{An $S$-forest induced by a set $X\subseteq V(G)$, the vertices of $S$ are white. The gray circles represent the blocks of $X_{\downarrow \cc(X\setminus S)}$. The graph $G[X]_{\downarrow\cc(X\setminus S)}$ is obtained by contracting each gray circle.}
		\label{fig:S-contraction}
	\end{figure}

	Observe that, for every subset $X\subseteq V(G)$, if $G[X]$ is an $S$-forest, then $G[X]_{\downarrow\cc(X\setminus S)}$ is a forest.
	The converse is not true as we may delete $S$-cycles with contractions: take a triangle with one vertex $v$ in $S$ and contract the neighbors of $v$.
	However, we can prove the following equivalence.

	\begin{fact}\label{fact:Sforestandforest}
		Let $G$ be a graph and $S\subseteq V(G)$.
		For every $X\subseteq V(G)$, $G[X]$ is an $S$-forest if and only if there exists an $\comp{S}$-contraction $\cP$ of $X$ satisfying the following two properties:
		\begin{itemize}
			\item $G[X]_{\downarrow \cP}$ is a forest, and
			\item for every $B\in \cP$ and $v\in X\cap S$, we have $|N(v)\cap B|\leq 1$.
		\end{itemize}
		Moreover, if $G[X]$ is an $S$-forest, then the $\comp{S}$-contraction $\cc(X\setminus S)$ satisfies these two properties. 
	\end{fact}	
	\begin{proof}
		($\Rightarrow$) Suppose first that $G[X]$ is an $S$-forest. We claim that the $\comp{S}$-contraction $\cc(X\setminus S)$ satisfies the two properties.
		Assume towards a contradiction that there is a cycle $C$ in $G[X]_{\downarrow\cc(X\setminus S)}$.
		By definition of $G[X]_{\downarrow\cc(X\setminus S)}$, the blocks of this graph are the connected components $\cc(X\setminus S)$ and the singletons in $\binom{X\cap S}{1}$.
		The blocks in $\cc(X\setminus S)$ are pairwise non-adjacent, thus $C$ contains a block $\{s\}$ with $s\in X\cap S$.
		Observe that for every pair of consecutive blocks $B_1,B_2$ of $C$, there exists a vertex $v_1\in B_1$ and $v_2\in B_2$ such that $v_1v_2\in E(G[X])$.
		As every block of $G[X]_{\downarrow \cc(X\setminus S)}$ induces a connected component in $G$, we can construct an $S$-cycle in $G[X]$ by replacing every block of $C$ by a path in $G[X]$, yielding a contradiction.
		Hence, $G[X]_{\downarrow\cc(X\setminus S)}$ is a forest, i.e. the first property is satisfied.
		Observe that if there exists $C\in \cc(X\setminus S)$ and $v\in X\cap S$ such that $v$ has two neighbors in $C$, then there exists an $S$-cycle in $G[X]$ since $C$ is a connected component.
		Hence, $\cc(X\setminus S) $ satisfies the second property.
		
		($\Leftarrow$) Let $\cP$ be a $\comp{S}$ contraction of a subset $X\subseteq V(G)$ that satisfies the two properties.
		Assume for contradiction that there is an $S$-cycle $C$ in $G[X]$.
		Let $v$ be a vertex of $C$ that belongs to $S$ and let $u$ and $w$ be the neighbors of $v$ in $C$.
		
		Let $B_u$ and $B_w$ be the blocks in $(X\cup Y)_{\downarrow \cP}$ that contain $u$ and $w$ respectively.
		As $v$ is in $S$,  it belongs to the block $\{v\}$ of $G[X\cup Y]_{\downarrow \cP}$ and thus it is not contained in $B_u$ nor $B_w$.
		The second property implies that $|N(v)\cap B|\leq 1$ for each $B\in \cP$.
		Thus, $B_u$ and $B_w$ are two distinct blocks both connected to the block $\{v\}$.
		Since there exists a path between $u$ and $w$ in $C$ that does not go through $v$, we deduce that there is a path between $B_u$ and $B_w$ in $G[X]_{\downarrow \cP}$ that does not go through $\{v\}$.
		Indeed, this follows from the fact that if there is an edge between two vertices $a$ and $b$ in $G[X]$, then either $a$ and $b$ belong to the same block of $G[X]_{\downarrow\cP}$ or there exists an edge between the blocks in $G[X]_{\downarrow\cP}$ which contain $a$ and $b$.
		We conclude that there exists a cycle in $G[X]_{\downarrow \cP}$, a contradiction with the first property.
	\end{proof}

	\section{A Meta-Algorithm for Subset Feedback Vertex Set}
	
	In the following, we present a meta-algorithm that, given a rooted layout $(T,\delta)$ of $G$, solves \textsc{SFVS}.
	We will show that this meta-algorithm will imply that \textsc{SFVS} can be solved in time $2^{O(\Qrw(G)^2\log(\Qrw(G)))}\cdot n^{4}$, $2^{O(\rw(G)^3)}\cdot n^{4}$ and $n^{O(\mim(T,\delta)^2)}$.
	The main idea of this algorithm is to use $\comp{S}$-contractions in order to employ similar properties of the algorithm for \textsc{Maximum Induced Tree} of \cite{BergougnouxK19esa} and the $n^{O(\mim(T,\delta))}$ time algorithm for \textsc{Feedback Vertex Set} of \cite{JaffkeKT18}.
	In particular, we use the following lemma which is proved implicitly in \cite{BergougnouxK19esa}.
	To simplify the following statements, we fix a graph $G$, a rooted layout $(T,\delta)$ of $G$ and a node $x\in V(T)$.
	
	\begin{lemma}\label{lem:X2+}
		Let $X$ and $Y$ be two disjoint subsets of $V(G)$.
		If $G[X\cup Y]$ is a forest, then the number of vertices of $X$ that have at least two neighbors in $Y$ is bounded by $2w$ where $w$ is the size of a maximum induced matching in the bipartite graph $G[X,Y]$.
	\end{lemma}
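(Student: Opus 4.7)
The plan is to construct an induced matching $M$ in $G[X,Y]$ of size at least $|A|/2$, where $A=\{x\in X : |N(x)\cap Y|\geq 2\}$; combined with $w\geq |M|$, this yields the desired bound $|A|\leq 2w$. Note that $G[X,Y]$ is a subgraph of the forest $G[X\cup Y]$, so it is itself a forest, call it $F$, and $A$ is exactly the set of $X$-vertices of degree at least two in $F$.

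I would first root each connected component of $F$ at an arbitrary $Y$-vertex (a component with no $Y$-vertex consists of isolated $X$-vertices, which cannot lie in $A$). In this rooting every $X$-vertex has a unique parent in $Y$, and $A$ is exactly the set of $X$-vertices possessing at least one child. Then I run the following bottom-up greedy routine: process the vertices of $A$ in post-order, start with $M=\emptyset$, and for each $x\in A$ look for a child $c$ of $x$ with $N_F(c)\cap V(M)=\emptyset$; if such a $c$ exists, add $xc$ to $M$, otherwise declare $x$ \emph{failed}.

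The main verification step is that $M$ is an induced matching of $F$. Suppose two distinct edges $x_1c_1,x_2c_2\in M$ admit an extra bipartite edge, say $x_1c_2\in E(F)$. Then $c_2$ is a neighbor of $x_1$ in $F$, so either $c_2$ is a child of $x_1$ or $c_2$ is the parent of $x_1$. The first is impossible because $c_2$ already has $x_2$ as parent in the rooted tree, so the second holds and $x_2$ is the grandparent of $x_1$. But then the post-order processes $x_1$ strictly before $x_2$, so at the moment $x_2$ is handled we have $x_1\in V(M)\cap N_F(c_2)$, which blocks the addition of $x_2c_2$ and contradicts $x_2c_2\in M$; the symmetric case $x_2c_1\in E(F)$ is analogous.

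Finally, a charging argument bounds the failed vertices. If $x\in A$ failed, then for every child $c$ of $x$ the failure condition provides an $X$-vertex in $N_F(c)$ that belongs to $V(M)$; since $N_F(c)$ equals $\{x\}$ together with the grandchildren of $x$ through $c$, and since $x$ itself is unmatched, this witness must be a matched grandchild of $x$. As every $X$-vertex has at most one grandparent in the rooted tree, assigning each failed $x$ to one such matched grandchild is an injection, so $|A\setminus V(M)|\leq |V(M)\cap X|=|M|$ and therefore $|A|\leq 2|M|\leq 2w$. The most delicate step is the induced-matching verification: rooting at a $Y$-vertex is essential, because it converts every potential conflict between two matching edges into a grandparent/grandchild relation that the $N_F(c)$-test can detect.
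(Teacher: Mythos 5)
Your proof is correct and takes a genuinely different route from the paper's, though both root the forest and exploit the alternation of levels. The paper roots each tree component at a vertex $u\in X^{2+}$ (your $A$), partitions $X^{2+}$ into $\{X_1,X_2\}$ according to whether the distance $2\ell$ from $u$ has $\ell$ odd or even, and verifies a ``good bipartition'' property: every $v\in X_i$ has a private $Y$-neighbor $y_v$ with $N_F(y_v)\cap X_i=\{v\}$, because the only $X_i$-vertices that could share a $Y$-neighbor with $v$ all attach through $v$'s unique parent $w$, while $\deg(v)\ge 2$ supplies a second neighbor of $v$ whose remaining neighbors lie two levels further down and hence in $X_{3-i}$. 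Each $\{vy_v\mid v\in X_i\}$ is then an induced matching, giving $|X^{2+}|=|X_1|+|X_2|\le 2w$. You instead root at a $Y$-vertex, greedily build a single induced matching $M$ in post-order (the $N_F(c)\cap V(M)=\emptyset$ test, together with descendants-before-ancestors ordering, turns every potential conflict into a grandparent/grandchild pair that the test has already excluded), and then charge each unmatched vertex of $A$ injectively to a matched grandchild via the unique-grandparent property, yielding $|A|\le |A\cap V(M)|+|A\setminus V(M)|\le 2|M|\le 2w$. The paper's global mod-$4$ bipartition is slick once the split is guessed; your greedy-plus-charging argument is more elementary and constructive, shifting the subtlety into the induced-matching check, and both land on the same constant $2w$.
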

	\begin{proof}
		Let $X^{2+}$ be the set of vertices in $X$ having at least $2$ neighbors in $Y$.
		In the following, we prove that $F=G[X^{2+},Y]$ admits a \emph{good bipartition}, that is a bipartition $\{X_0,X_1\}$ of $X^{2+}$ such that, for each $i\in\{0,1\}$ and, for each $v\in X_i$, there exists $y_v\in Y$ such that $N_{F}(y_v)\cap X_i = \{v\}$.
		Observe that this is enough to prove the lemma since if $F$ admits a good bipartition $\{X_0,X_1\}$, then $|X_0|\leq w$ and $|X_1|\leq w$. 	
		Indeed, if $F$ admits a good bipartition $\{X_0,X_1\}$, then, for each $i\in\{0,1\}$, the set of edges $M_i=\{ vy_v \mid v\in X_i \}$ is an induced matching of $G[X,Y]$.
		In order to prove that $F$ admits a good bipartition it is sufficient to prove that each connected component of $F$ admits a good bipartition.
		
		Let $C$ be a connected component of $F$ and $x_0\in C\cap X^{2+}$. As $G[X\cup Y]$ is a forest, we deduce that $F[C]$ is a tree.
		Observe that the distance in $F[C]$ between each vertex $v\in C\cap X^{2+}$ and $x_0$ is even because $F[C]$ is bipartite \wrt $(C\cap X^{2+}, C\cap Y)$.
		Let $X_0$ (resp. $X_1$) be the set of all vertices $v\in C\cap X^{2+}$ at distance $2\ell$ from $x_0$ with $\ell$ even (resp. odd).
		We claim that $\{X_0,X_1\}$ is a good bipartition of $F[C]$.
		
		Let $i \in \{0,1\}$, $v\in X_i$ and $\ell\in \bN$ such that the distance between $v$ and $x_0$ in $F[C]$ is $2\ell$.
		Let $P$ be the set of vertices in $C\setminus \{v\}$ that share a common neighbor with $v$ in $F[C]$.
		We want to prove that $v$ has a neighbor $y$ in $F$ that is not adjacent to $P\cap X_i$.
		Observe that, for every $v'\in P$, the distance between $v'$ and $x_0$ in $F[C]$ is either $2\ell-2$, $2\ell$ or $2\ell+2$ because $F[C]$ is a tree and the distance between $v$ and $x_0$ is $2\ell$.
		By construction of $\{X_0,X_1\}$, every vertex at distance $2\ell-2$ or $2\ell+2$ from $x_0$ belongs to $X_{1-i}$.
		Thus, every vertex in $P\cap X_i$ is at distance $2\ell$ from $x_0$.
		If $\ell=0$, then we are done because $v=x_0$ and $P\cap X_i=\emptyset$.
		Assume that $\ell\neq 0$.
		As $F[C]$ is a tree, $v$ has only one neighbor $w$ at distance $2\ell-1$ from $x_0$ in $F[C]$.
		Because $F[C]$ is a tree, $w$ is the only vertex adjacent to $v$ and the vertices in $P\cap X_i$.
		By definition of $X^{2+}$, $v$ has at least two neighbors in $Y$, so $v$ admits a neighbor that is not $w$ and this neighbor is not adjacent to the vertices in $P\cap X_i$.
		Hence, we deduce that $\{X_0,X_1\}$ is a good bipartition of $F[C]$.
		
		We deduce that every connected component of $F$ admits a good bipartition and, thus, $F$ admits a good bipartition. This proves that $|X^{2+}|\leq 2w$.
	\end{proof}

	The following lemma generalizes Fact~\ref{fact:Sforestandforest} and presents the equivalence between $S$-forests and forests that we will use in our algorithm.
	
	\begin{lemma}\label{lem:equiStreeScontrac}
		Let $X\subseteq V_x$ and $Y\subseteq \comp{V_x}$.
		If the graph $G[X\cup Y]$ is an $S$-forest, then there exists an $\comp{S}$-contraction $\cP_Y$ of $Y$ that satisfies the following conditions:
		\begin{enumerate}[(1)]
			\item $G[X\cup Y]_{\downarrow\cc(X\setminus S)\cup \cP_Y}$ is a forest,
			\item for all block $P\in\cc(X\setminus S)\cup \cP_Y$ and $v\in (X\cup Y)\cap S$, we have $|N(v)\cap P|\leq 1$,
			\item the graph $G[X, Y]_{\downarrow \cc(X\setminus S)\cup\cP_Y}$ admits a vertex cover $\VC$ of size at most $4 \mim(V_x)$ such that the neighborhoods of the blocks in $\VC$ are pairwise distinct in $G[X, Y]_{\downarrow \cc(X\setminus S)\cup\cP_Y}$.
		\end{enumerate}
	\end{lemma}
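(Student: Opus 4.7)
The plan is to construct $\cP_Y$ explicitly as the partition of $Y\setminus S$ induced by the connected components of the non-$S$ subgraph $H:=G[(X\cup Y)\setminus S]$: for each $C\in\cc((X\cup Y)\setminus S)$ with $C\cap Y\neq\emptyset$, I take $C\cap Y$ as a block of $\cP_Y$. The combined partition $\cc(X\setminus S)\cup\cP_Y$ then refines $\cc((X\cup Y)\setminus S)$, splitting each coarse component $C$ on the $X$-side into its $\cc(X\setminus S)$ sub-components while keeping the $Y$-side as the single block $C\cap Y$.

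Condition~(2) should fall out from the $S$-forest hypothesis via a routine cycle argument. If some $v\in S$ had two neighbors in a block $P\in\cc(X\setminus S)$, they would be joined by a path in $G[P]\subseteq G[X\cup Y]$ and form an $S$-cycle with $v$; if $v$ had two neighbors in $P=C\cap Y\in\cP_Y$, both lie in the component $C$ of $H$ and so are joined by a path in $H\subseteq G[X\cup Y]$, again yielding an $S$-cycle. Both contradict the hypothesis.

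For condition~(1), my plan is to first apply Fact~\ref{fact:Sforestandforest} to $X\cup Y$ to conclude that the \emph{coarse} contracted graph $G':=G[X\cup Y]_{\downarrow\cc((X\cup Y)\setminus S)}$ is a forest, then show that the \emph{fine} contracted graph $G'':=G[X\cup Y]_{\downarrow\cc(X\setminus S)\cup\cP_Y}$ inherits this property. The passage from $G'$ to $G''$ replaces, for each mixed coarse block $C$ (one with both $C\cap X\neq\emptyset$ and $C\cap Y\neq\emptyset$), the single vertex $C$ by a star with center $C\cap Y$ and leaves equal to the components of $G[X\setminus S]$ inside $C$ (each such leaf has a star edge to the center because $C$ is connected in $H$). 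Crucially, by condition~(2) each external coarse edge $\{v\}-C$ lifts to exactly one fine edge $\{v\}-b$, and there are no edges in $G$ between different coarse components (they are disconnected in $H$). An Euler-characteristic accounting then shows that expanding a mixed $C$ into its star adds $a_C$ vertices and $a_C$ edges (where $a_C$ is the number of $\cc(X\setminus S)$ blocks inside $C$), so $|V(G'')|-|E(G'')|=|V(G')|-|E(G')|$; combined with $c(G'')=c(G')$ (each star is connected and no cross-component edges are introduced, so components neither split nor merge), this forces $G''$ to be a forest. Establishing this last identity cleanly is the main technical obstacle.

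Finally, condition~(3) drops out from the structure exposed above: the bipartite contracted graph $G[X,Y]_{\downarrow\cc(X\setminus S)\cup\cP_Y}$ is precisely the disjoint union of the non-trivial stars (one per mixed coarse component) together with some isolated blocks, because any $Q\in\cc(X\setminus S)$ inside $C$ has all of its $Y$-neighbors inside $C\cap Y$. Taking $\VC$ to be the set of centers of the non-trivial stars yields a vertex cover, and any two centers correspond to different coarse components and so have disjoint, hence pairwise distinct, neighborhoods. Its size equals the number of non-trivial stars, which equals the size of a maximum induced matching of this disjoint-stars graph; by Observation~\ref{obs:contactionmimwidth} this is at most $\mim(G[X,Y])\leq\mim(A)\leq 4\mim(A)$, comfortably within the claimed bound.
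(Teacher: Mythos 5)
Your construction of $\cP_Y$ (take $C\cap Y$ for each component $C$ of $H=G[(X\cup Y)\setminus S]$) is a different and more canonical choice than the paper's, which starts from $\cc(Y\setminus S)$ and iteratively merges blocks lying on a common cycle. Your arguments for conditions (1) and (2) look sound: condition~(2) follows at both the coarse and fine levels from routine $S$-cycle arguments, and the Euler-characteristic accounting for condition~(1) is correct once you note that each sub-component $Q\in\cc(X\setminus S)$ inside a mixed $C$ is adjacent to $C\cap Y$ (so the replacement is a star), that $|N(v)\cap C|\leq 1$ forces each external $\{v\}$--$C$ edge to lift to exactly one fine edge, and that distinct coarse components have no edges between them.

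However, your proof of condition~(3) has a genuine gap. You claim that $G[X,Y]_{\downarrow\cc(X\setminus S)\cup\cP_Y}$ is ``precisely the disjoint union of the non-trivial stars (one per mixed coarse component) together with some isolated blocks.'' This is false: the vertex set of that contracted bipartite graph is $(X\cup Y)_{\downarrow\cc(X\setminus S)\cup\cP_Y}$, which also contains the singleton blocks $\{v\}$ for $v\in(X\cup Y)\cap S$, and these singletons carry edges that your proposed $\VC$ (the star centers $C\cap Y$ for mixed $C$) does not cover. Concretely, for $v\in X\cap S$ adjacent to $u\in Y\cap S$ the edge $\{v\}\text{--}\{u\}$ is uncovered; for $u\in Y\cap S$ adjacent to a vertex in a block $Q\in\cc(X\setminus S)$ the edge $\{u\}\text{--}Q$ is uncovered; and for $v\in X\cap S$ adjacent to a $Y$-only component $C$ the edge $\{v\}\text{--}(C\cap Y)$ is uncovered since $C$ is not mixed. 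The fact that you end up with a bound of $\mim(A)$ rather than the claimed $4\mim(A)$ should have been a warning sign: the factor $4$ is needed precisely because this bipartite contracted graph is a forest but not a disjoint union of stars, and a forest's minimum vertex cover can exceed its induced matching number. The fix is to drop the stars description entirely and argue as the paper does: since by condition~(1) $G[X,Y]_{\downarrow\cc(X\setminus S)\cup\cP_Y}$ is a forest, take $\VC$ to be all blocks of degree at least $2$ plus one endpoint of each isolated edge, then bound $|\VC|$ by $4\mim(A)$ using Lemma~\ref{lem:X2+} together with Observation~\ref{obs:contactionmimwidth}; distinctness of neighborhoods follows because in a forest two distinct degree-$\geq2$ vertices cannot have the same neighborhood, and can be enforced for the isolated-edge endpoints by choice. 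This argument works unchanged for your choice of $\cP_Y$.
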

	\begin{proof}	
		Assume that $G[X\cup Y]$ is an $S$-forest.
		Let us explain how we construct $\cP_Y$ that satisfies Conditions (1)-(3).
		First, we initialize $\cP_Y=\cc(Y\setminus S)$.
		Observe  that there is no cycle in $G[X\cup Y]_{\downarrow \cc(X\setminus S)\cup \cP_Y}$ that contains a block in $\binom{S}{1}$ because $G[X\cup Y]$ is an $S$-forest.
		Moreover, $\cc(X\setminus S)$ and $\cP_Y$ form two independent sets in $G[X\cup Y]_{\downarrow \cc(X\setminus S)\cup \cP_Y}$.
		Consequently, for all the cycles $C$ in $G[X\cup Y]_{\downarrow \cc(X\setminus S)\cup \cP_Y}$ we have $C=(X_1,Y_1,X_2,Y_2,\dots,X_t,Y_t)$ where $X_1,\dots,X_t\in\cc(X\setminus S)$ and $Y_1,\dots,Y_t\in \cP_Y$.
		We do the following operation, until the graph $G[X\cup Y]_{\downarrow \cc(X\setminus S)\cup \cP_Y}$ is a forest: take a cycle $C=(X_1,Y_1,X_2,Y_2,\dots,X_t,Y_t)$ in $G[X\cup Y]_{\downarrow \cc(X\setminus S)\cup \cP_Y}$ and replace the blocks $Y_1,\dots,Y_t$ in $\cP_Y$ by the block $Y_1\cup\dots\cup Y_t$.
		See Figures~\ref{fig:killcycle} and~\ref{fig:contracted} in Appendix~\ref{appendix} for an example of $\comp{S}$-contraction $\cP_Y$.

		For each $B\in \cc(X\setminus S)\cup \cc(Y\setminus S)$, the vertices of $B$ are pairwise connected in $G[(X\cup Y)\setminus S]$. We deduce by induction that whenever we apply the operation on a cycle $C=(X_1,Y_1,X_2,Y_2,\dots,X_t,Y_t)$, it holds that the vertices of the new block $Y_1\cup\dots\cup Y_t$ are pairwise connected in $G[(X\cup Y)\setminus S]$.
		Thus, for every block $B$ of $\cc(X\setminus S)\cup \cP_Y$, the vertices of $B$ are pairwise connected in $G[(X\cup Y)\setminus S]$.
		It follows that for every $v\in (X\cup Y)\cap S$ and $B\in \cc(X\setminus S)\cup \cP_Y$, since $G[X\cup Y]$ is an $S$-forest, we have $|N(v)\cap B|\leq 1$. Thus, Condition (2) is satisfied.
		
		It remains to prove Condition (3).
		Let $\VC$ be the set of blocks of $G[X, Y]_{\downarrow \cc(X\setminus S)\cup \cP_Y}$ containing:
		\begin{itemize}
			\item the blocks that have at least 2 neighbors in $G[X, Y]_{\downarrow \cc(X\setminus S)\cup \cP_Y}$, and
			\item one block in every isolated edge of $G[X, Y]_{\downarrow \cc(X\setminus S)\cup \cP_Y}$.
		\end{itemize}
		By construction, it is clear that $\VC$ is indeed a vertex cover of $G[X, Y]_{\downarrow \cc(X\setminus S)\cup\cP_Y}$ as every edge is either isolated or incident to a block of degree at least 2.
		We claim that $|\VC|\leq 4\mim(V_x)$.
		By Observation \ref{obs:contactionmimwidth}, we know that the size of a maximum induced matching in $G[X, Y]_{\downarrow \cc(X\setminus S)\cup \cP_Y}$ is at most $\mim(V_x)$.
		Let $t$ be the number of isolated edges in $G[X, Y]_{\downarrow \cc(X\setminus S)\cup \cP_Y}$.
		Observe that the size of a maximum induced matching in the graph obtained from $G[X, Y]_{\downarrow \cc(X\setminus S)\cup \cP_Y}$ by removing isolated edges is at most $\mim(V_x)-t$.
		By Lemma \ref{lem:X2+}, we know that there are at most $4(\mim(V_x)-t)$ blocks that have at least 2 neighbors in $G[X, Y]_{\downarrow \cc(X\setminus S)\cup \cP_Y}$.
		We conclude that $|\VC|\leq 4\mim(V_x)$.
		
		Since $G[X, Y]_{\downarrow \cc(X\setminus S)\cup \cP_Y}$ is a forest, the neighborhoods of the blocks that have at least 2 neighbors must be pairwise distinct.
		We conclude from the construction of $\VC$ that the neighborhoods of the blocks of $\VC$ in $G[X,Y]_{\downarrow\cc(X\setminus S)\cup\cP_Y}$ are pairwise distinct.
		Hence, Condition (3) is satisfied.
	\end{proof}
	
	In the following, we will use Lemma \ref{lem:equiStreeScontrac} to design some sort of equivalence relation between partial solutions.
	To this purpose, we use the following set of tuples. We call each such tuple an index because it corresponds to an index into a table in a DP (dynamic programming) approach. We do this even though the presentation of our algorithm is not given by a standard DP description.

	\begin{definition}[$\bI_x$]\label{def:indices}
		We define the set $\bI_x$ of indices as the set of tuples
		\[ (\cX_{\vc}^{\comp{S}},\cX_{\vc}^{S}, \cX_{\comp{\vc}},\cY_{\vc}^{\comp{S}},\cY_{\vc}^{S}) \in
		2^{\Rep{{V_x}}{2}}\times 2^{\Rep{V_x}{1}}\times \Rep{V_x}{1}\times 2^{\Rep{\comp{V_x}}{2}} \times  2^{\Rep{\comp{V_x}}{1}}  \]
		such that $|\cX_{\vc}^{\comp{S}}|+|\cX_{\vc}^{S}|+|\cY_{\vc}^{\comp{S}}| + |\cY_{\vc}^{S}| \leq 4\mim(V_x)$.
	\end{definition}

	These sets of indices play a major role in our meta-algorithm, in particular, the sizes of these sets of indices appear in the runtime of our meta-algorithm. In fact, to prove the algorithmic consequences of our meta-algorithm for rank-width, $\bQ$-rank-width and mim-width, we show (Lemma~\ref{lem:consequences}) that the size of $\bI_x$ is upper bounded by $2^{O(\rw(V_x)^3)}$, $2^{O(\Qrw(V_x)^2\log(\Qrw(V_x)))}$ and $n^{O(\mim(V_x)^2)}$.
	
	In the following, we will define partial solutions associated with an index $i\in\bI_x$ (a partial solution may be associated with many indices).
	In order to prove the correctness of our algorithm (the algorithm itself will not use this concept), we will also define \textit{complement solutions} (the sets $Y\subseteq \comp{V_x}$ and their $\comp{S}$-contractions $\cP_Y$) associated with an index $i$.
	We will prove that, for every partial solution $X$ and complement solution $(Y,\cP_Y)$ associated with $i$, if the graph $G[X\cup Y]_{\downarrow \cc(X\setminus S)\cup\cP_Y}$ is a forest, then $G[X\cup Y]$ is an $S$-forest.
	
	\medskip
	
	Let us give some intuition on these indices by explaining how one index is associated with a solution, figures explaining this association and the purposes of the sets $\cX_{\vc}^{\comp{S}},\cX_{\vc}^{S}, \cX_{\comp{\vc}},\cY_{\vc}^{\comp{S}},\cY_{\vc}^{S}$ can be found in Appendix~\ref{appendix}.
	Let  $X\subseteq V_x$ and $Y\subseteq \comp{V_x}$ such that $G[X\cup Y]$ is an $S$-forest.
	Let $\cP_Y$ be the $\comp{S}$-contraction of $Y$ and $\VC$ be a vertex cover of  $G[X,Y]_{\downarrow\cc(X\setminus S)\cup\cP_Y}$ given by Lemma \ref{lem:equiStreeScontrac}.
	Then, $X$ and $Y$ are associated with $i=(\cX_{\vc}^{\comp{S}},\cX_{\vc}^{S}, \cX_{\comp{\vc}},\cY_{\vc}^{\comp{S}},\cY_{\vc}^{S})\in \bI_x$ such that:
	\begin{itemize}
		\item $\cX_{\vc}^{S}$ (resp. $\cY_{\vc}^{S}$) contains the representatives of the blocks $\{v\}$ in $\VC$ such that $v\in X\cap S$ (resp. $v\in Y\cap S$) \wrt the 1-neighbor equivalence over $V_x$ (resp. $\comp{V_x}$).
		We will only use the indices where $\cX_{\vc}^{S}$ contains representatives of singletons, in other words, $\cX_{\vc}^{S}$ is included in $\{ \rep{V_x}{1}(\{v\}) \mid v\in V_x\}$ which can be much smaller than $\Rep{V_x}{1}$.
		The same observation holds for $\cY_{\vc}^{S}$.
		In Definition \ref{def:indices}, we state that $\cX_{\vc}^{S}$ and  $\cY_{\vc}^{S}$ are, respectively, subsets of $2^{\Rep{V_x}{1}}$ and $2^{\Rep{\comp{V_x}}{1}}$, for the sake of simplicity.
		
		\item $\cX_{\vc}^{\comp{S}}$ (resp. $\cY_{\vc}^{\comp{S}}$) contains the representatives of the blocks in $\cc(X\setminus S)\cap \VC$ (resp. $\cP_Y\cap \VC$) w.r.t. the 2-neighbor equivalence relation over $V_x$ (resp. $\comp{V_x}$).
	
		\item $\cX_{\comp{\vc}}$ is the representative of $X\setminus V(\VC)$ (the set of vertices which do not belong to the vertex cover) w.r.t. the 1-neighbor equivalence over $V_x$.
	\end{itemize}
	Because the neighborhoods of the blocks in $\VC$ are pairwise distinct in $G[X,Y]_{\downarrow \cc(X\setminus S)\cup\cP_Y}$ (Property (3) of Lemma \ref{lem:equiStreeScontrac}), there is a one to one correspondence between the representatives in $\cX_{\vc}^{\comp{S}}\cup \cX_{\vc}^{S}\cup \cY_{\vc}^{\comp{S}}\cup \cY_{\vc}^{S}$ and the blocks in $\VC$.

	While $\cX_{\vc}^{\comp{S}}, \cX_{\vc}^{S},\cY_{\vc}^{\comp{S}},\cY_{\vc}^{S}$ describe $\VC$, the representative set $\cX_{\comp{\vc}}$ describes the neighborhood of the blocks of $X_{\downarrow \cc(X\setminus S)}$ which are not in $\VC$.
	The purpose of $\cX_{\comp{\vc}}$ is to make sure that, for every partial solution $X$ and complement solution $(Y,\cP_Y)$ associated with $i$, the set $\VC$ described by $\cX_{\vc}^{\comp{S}}, \cX_{\vc}^{S},\cY_{\vc}^{\comp{S}},\cY_{\vc}^{S}$ is a vertex cover of $G[X,Y]_{\downarrow \cc(X\setminus S)\cup \cP_Y}$.
	For doing so, it is sufficient to require that $Y\setminus V(\VC)$ has no neighbor in $\cX_{\comp{\vc}}$ for every complement solution $(Y,\cP_Y)$ associated with $i$.
	
	Observe that the sets $\cX_{\vc}^{\comp{S}}$ and $\cY_{\vc}^{\comp{S}}$ contain representatives for the $2$-neighbor equivalence.
	We need the 2-neighbor equivalence to control the $S$-cycles which might disappear after vertex contractions.
	To prevent this situation, we require, for example, that every vertex in $X\cap S$ has at most one neighbor in $\comp{R}$ for each $\comp{R}\in \cY_{\vc}^{\comp{S}}$.
	Thanks to the 2-neighbor equivalence, a vertex $v$ in $X\cap S$ has at most one neighbor in $\comp{R}\in \cY_{\vc}^{\comp{S}}$ if and only if $v$ has at most one neighbor in the block of $\cP_Y$ associated with $\comp{R}$.
	This property of the 2-neighbor equivalence is captured by the following fact.
	
	\begin{fact}\label{fact:2neighborequivalence}
		For every $A\subseteq V(G)$ and $B,P\subseteq A$, if $B\equi{A}{2} P$, then, for all $v\in \comp{A}$, we have $|N(v)\cap B|\leq 1$ if and only if $|N(v)\cap P|\leq 1$.
	\end{fact}
	
	\medskip
	
	In order to define partial solutions associated with $i$, we need the following notion of auxiliary graph.
	Given $X\subseteq V_x$ and  $i=(\cX_{\vc}^{\comp{S}},\cX_{\vc}^{S}, \cX_{\comp{\vc}},\cY_{\vc}^{\comp{S}},\cY_{\vc}^{S})\in\bI_x$, we write $\aux(X,i)$ to denote the graph
	\[ G[X_{\downarrow \cc(X\setminus S)}\mid \cY_{\vc}^{\comp{S}} \cup  \cY_{\vc}^{S}  ]. \]
	Observe that $\aux(X,i)$ is obtained from the graph induced by $X_{\downarrow \cc(X\setminus S)} \cup \cY_{\vc}^{\comp{S}} \cup \cY_{\vc}^{S}$ by removing the edges between the blocks from $ \cY_{\vc}^{\comp{S}} \cup \cY_{\vc}^{S}$. Figure \ref{fig:auxxi} illustrates an example of the graph $\aux(X,i)$ and the related notions. The figures in Appendix~\ref{appendix} explain the relations between an $S$-forest and these auxiliary graphs. 
	
	\begin{figure}[h]
		\centering
		\includegraphics[width=0.9\linewidth]{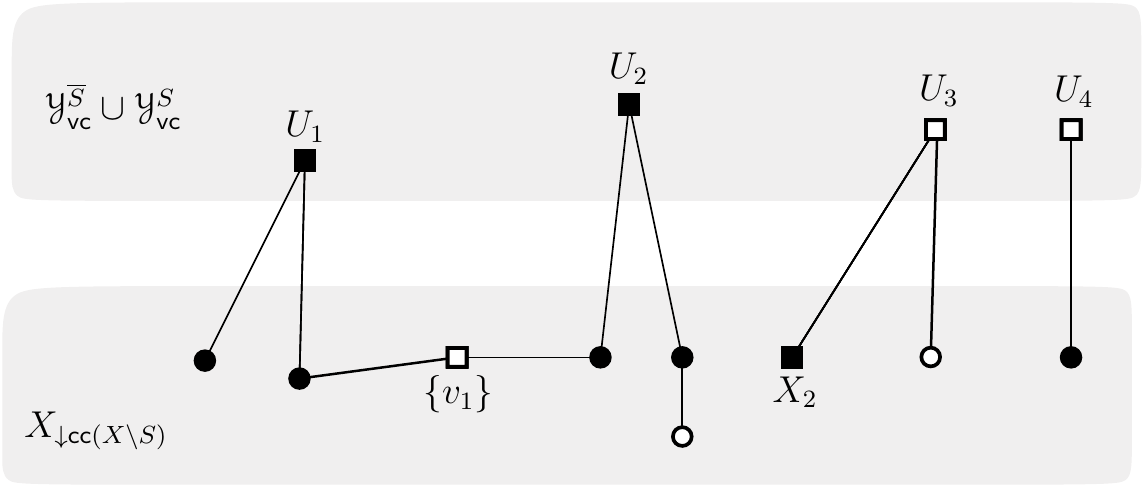}
		\caption{An example of a set $X\subseteq V_x$ and its auxiliary graph associated with the index $i=(\cX_{\vc}^{\comp{S}},\cX_{\vc}^{S}, \cX_{\comp{\vc}},\cY_{\vc}^{\comp{S}},\cY_{\vc}^{S})$.
		Here, $\cX_{\vc}^{\comp{S}}=\{R_2\}$ with $R_2=\rep{V_x}{2}(X_2)$, $\cX_{\vc}^{S}=\{R_1\}$ with $R_1= \rep{V_x}{1}(\{v_1\})$,  $\cX_{\comp{\vc}}$ is the representative of the union of the circular blocks, $\cY_{\vc}^{\comp{S}}=\{ U_1,U_2\}$, and $\cY_{\vc}^{S}=\{ U_3,U_4\}$. The singletons in $X\cap S$ and $\cY_{\vc}^{S}$ are white filled, whereas the square blocks are $\{v_1\}$, $X_2$ and the blocks  in $\cY_{\vc}^{\comp{S}}\cup \cY_{\vc}^{S}$.}
		\label{fig:auxxi}
	\end{figure}
	
	We will ensure that, given a complement solution $(Y,\cP_Y)$ associated with $i$, the graph $\aux(X,i)$ is isomorphic to $G[X_{\downarrow \cc(X\setminus S)} \mid Y_{\downarrow\cP_Y}\cap \VC]$.
	We are now ready to define the notion of partial solution associated with an index $i$. 
	
	\begin{definition}[Partial solutions]\label{def:partialsolution}
		Let $i=(\cX_{\vc}^{\comp{S}},\cX_{\vc}^{S}, \cX_{\comp{\vc}},\cY_{\vc}^{\comp{S}},\cY_{\vc}^{S})\in \bI_x$.
		We say that $X\subseteq V_x$ is a partial solution associated with $i$ if the following conditions are satisfied:
		\begin{enumerate}[(a)]
			
			\item  for every $R\in \cX_{\vc}^{S}$, there exists a unique $v\in X\cap S$ such that $R\equi{V_x}{1} \{v\}$,
			
			\item for every $R\in \cX_{\vc}^{\comp{S}}$, there exists a unique $C\in\cc(X\setminus S)$ such that $R\equi{V_x}{2} C$,
			
			\item  $\aux(X,i)$ is a forest,
			
			\item for every $C\in\cc(X\setminus S)$ and $\{v\}\in \cY_{\vc}^{S}$, we have $|N(v)\cap C |\leq 1$,
			
			\item for every $v\in X\cap S$ and $U\in \cY_{\vc}^{\comp{S}}\cup \cc(X\setminus S)$, we have $|N(v)\cap U|\leq 1$,
			
			\item $\cX_{\comp{\vc}} \equi{V_x}{1} X\setminus V(\VC_X)$ , where $\VC_X$ contains the blocks $\{v\}\in \binom{X\cap S}{1}$ such that $\rep{V_x}{1}(\{v\})\in \cX_{\vc}^{S}$ and the components $C$ of $G[X\setminus S]$ such that $\rep{V_x}{2}(C)\in \cX_{\vc}^{\comp{S}}$.
		\end{enumerate}
	\end{definition}
	
	Similarly to Definition \ref{def:partialsolution}, we define the notion of \textit{complement solutions} associated with an index $i\in\bI_x$.
	We use this concept only to prove the correctness of our algorithm.
	
	\begin{definition}[Complement solutions]\label{def:complementsolution}
		Let $i=(\cX_{\vc}^{\comp{S}},\cX_{\vc}^{S}, \cX_{\comp{\vc}},\cY_{\vc}^{\comp{S}},\cY_{\vc}^{S})\in \bI_x$.
		We call complement solutions associated with $i$ all the pairs $(Y,\cP_Y)$ such that $Y\subseteq \comp{V_x}$, $\cP_Y$ is an $\comp{S}$-contraction of $Y$ and the following conditions are satisfied:
		\begin{enumerate}[(a)]
			\item for every $U\in \cY_{\vc}^{S}$, there exists a unique $v \in Y\cap S$ such that $U\equi{V_x}{1} \{v\}$,
			
			\item for every $U\in \cY_{\vc}^{\comp{S}}$, there exists a unique $P\in \cP_Y$ such that $U\equi{V_x}{2} P$,
			
			\item $G[Y]_{\downarrow\cP_Y}$ is a forest,
			
			\item for every $P\in\cP_Y$ and $\{v\}\in \cX_{\vc}^{S}$, we have $|N(v)\cap P |\leq 1$,
			
			\item for every $y\in Y\cap S$ and $R\in \cX_{\vc}^{\comp{S}}\cup \cP_Y$, we have $|N(y)\cap R| \leq 1$,
			
			\item $N(\cX_{\comp{\vc}})\cap V(\cY_{\comp{\vc}})=\emptyset$, where $\cY_{\comp{\vc}}$ contains the blocks $\{v\}\in \binom{Y\cap S}{1}$ such that $\rep{\comp{V_x}}{1}(\{v\})\notin \cY_{\vc}^{S}$ and the blocks $P\in\cP_Y$ such that $\rep{\comp{V_x}}{2}(P)\notin \cY_{\vc}^{\comp{S}}$.
		\end{enumerate}
	\end{definition}
	
	Let us give some explanations on the conditions of Definitions \ref{def:partialsolution} and \ref{def:complementsolution}.
	Let $X$ be a partial solution associated with an index $i\in \bI_x$ and $(Y,\cP_Y)$ be a complement solution associated with $i$.
	Conditions (a) and (b) of both definitions guarantee that there exists a subset $\VC$ of $X_{\downarrow \cc(X\setminus S)}\cup Y_{\downarrow\cP_Y}$  such that there is a one-to-one correspondence between the blocks of $\VC$ and the representatives in $\cX_{\vc}^{\comp{S}}\cup \cX_{\vc}^{S}\cup \cY_{\vc}^{\comp{S}}\cup \cY_{\vc}^{S}$.
	
	Condition (c) of Definition \ref{def:partialsolution} guarantees that the connections between $X_{\downarrow \cc(X\setminus S)}$ and $\VC$ are acyclic.
	As explained earlier, Conditions (d) and (e) of both definitions are here to control the $S$-cycles which might disappear with the vertex contractions.
	In particular, by Fact~\ref{fact:Sforestandforest}, Conditions~(c), (d) and (e) together imply that $G[X]$ and $G[Y]$ are $S$-forests.
	
	Finally, as explained earlier, the last conditions of both definitions ensure that $\VC$ the set described by $\cX_{\vc}^{\comp{S}},\cX_{\vc}^{S},\cY_{\vc}^{\comp{S}}$ and $\cY_{\vc}^{S}$ is a vertex cover of $G[X,Y]_{\downarrow \cc(X\setminus S)\cup \cP_Y}$.
	Notice that $X\setminus V(\VC_X)$ and $V(\cY_{\comp{\vc}})$ correspond the set of vertices in $X$ and $Y$, respectively, that do not belong to a block in the vertex cover $\VC$.
	Such observations are used to prove the following two results.
	
	\begin{lemma}\label{lemma:exitsAnIndex}
		Let $X\subseteq V_x$ and $Y\subseteq \comp{V_x}$ such that $G[X\cup Y]$ is an $S$-forest. 
		There exist $i\in \bI_x$ and an $\comp{S}$-contraction $\cP_Y$ of $Y$ such that (1)~$G[X\cup Y]_{\downarrow \cc(X\setminus S)\cup \cP_Y}$ is a forest, (2)~$X$ is a partial solution associated with $i$ and (3)~$(Y,\cP_Y)$ is a complement solution associated with $i$.
	\end{lemma}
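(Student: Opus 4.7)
The plan is to apply Lemma~\ref{lem:equiStreeScontrac} to $G[X\cup Y]$ and read off the index $i$ from the resulting vertex cover $\VC$ by taking representatives. Lemma~\ref{lem:equiStreeScontrac} directly yields an $\comp{S}$-contraction $\cP_Y$ of $Y$ making $G[X\cup Y]_{\downarrow\cc(X\setminus S)\cup\cP_Y}$ a forest (so property~(1) of the present lemma is immediate), together with a vertex cover $\VC$ of $G[X,Y]_{\downarrow\cc(X\setminus S)\cup\cP_Y}$ of size at most $4\mim(V_x)$ whose blocks have pairwise distinct neighborhoods. Splitting $\VC=\VC_X\cup\VC_Y$ according to side, I set
\begin{align*}
X_{\vc}^{\comp{S}}&:=\{\rep{V_x}{2}(C):C\in\VC_X\cap\cc(X\setminus S)\},\\
X_{\vc}^{S}&:=\{\rep{V_x}{1}(\{v\}):v\in X\cap S,\ \{v\}\in\VC_X\},\\
X_{\comp{\vc}}&:=\rep{V_x}{1}(X\setminus V(\VC_X)),
\end{align*}
and define $Y_{\vc}^{\comp{S}}$, $Y_{\vc}^{S}$ symmetrically using $\comp{V_x}$. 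Two blocks that are $\equiv^1$- or $\equiv^2$-equivalent necessarily share the same neighborhood in the contracted bipartite graph, so pairwise distinctness of $\VC$-neighborhoods forces the representatives to be distinct; the four sets together biject with $\VC$, giving $|X_{\vc}^{\comp{S}}|+|X_{\vc}^{S}|+|Y_{\vc}^{\comp{S}}|+|Y_{\vc}^{S}|=|\VC|\leq 4\mim(V_x)$ and hence $i\in\bI_x$.

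Next I would verify the six conditions of both Definitions~\ref{def:partialsolution} and~\ref{def:complementsolution}. Condition~(c) of either definition reduces to a subgraph observation: $\aux(X,i)$ (respectively $G[Y]_{\downarrow\cP_Y}$) is isomorphic to a subgraph of the forest $G[X\cup Y]_{\downarrow\cc(X\setminus S)\cup\cP_Y}$, because $R\equiv_{\comp{V_x}}^{2}P$ (hence $\equiv^{1}$) implies $N(B)\cap R\neq\emptyset \iff N(B)\cap P\neq\emptyset$ for every $B\subseteq V_x$, so the adjacencies between $X_{\downarrow\cc(X\setminus S)}$ and the representatives in $Y_{\vc}^{\comp{S}}\cup Y_{\vc}^{S}$ match those between $X_{\downarrow\cc(X\setminus S)}$ and the blocks of $\VC_Y$. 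Conditions~(d) and~(e) are read off from property~(2) of Lemma~\ref{lem:equiStreeScontrac} applied to the actual $\VC$-blocks, with Fact~\ref{fact:2neighborequivalence} used to transfer a bound $|N(v)\cap P|\leq 1$ to the 2-equivalent representative. Condition~(f) of Definition~\ref{def:partialsolution} is immediate from the choice of $X_{\comp{\vc}}$ once one knows that the set $\VC_X$ I use agrees with the one the definition builds from $i$; and condition~(f) of Definition~\ref{def:complementsolution} follows because $\VC$ is a vertex cover of the contracted bipartite graph, so the $X$-blocks outside $\VC$ have no edge to $Y$-blocks outside $\VC$, giving $N(X\setminus V(\VC_X))\cap V(Y_{\comp{\vc}})=\emptyset$, which transfers to $X_{\comp{\vc}}$ by 1-neighbor equivalence.

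The main obstacle will be the uniqueness conditions~(a) and~(b) of both definitions, together with the matching-$\VC_X$ claim used above. My plan is a case analysis: if two distinct $v,v'\in X\cap S$ satisfied $\{v\}\equiv_{V_x}^{1}\{v'\}$, they would share the same neighborhood in $Y$; two common $Y$-neighbors would close a $4$-cycle through $v,v'$ that is an $S$-cycle, contradicting the $S$-forest hypothesis, while a single common $Y$-neighbor would itself have degree at least $2$ in the contracted bipartite graph and so enter $\VC$, preventing both $\{v\}$ and $\{v'\}$ from entering $\VC$ (they are covered, and the isolated-edge rule does not apply). Hence no representative in $X_{\vc}^{S}$ is $\equiv_{V_x}^{1}$-equivalent to more than one $v\in X\cap S$, and the same analysis shows that any $\{v\}\in\binom{X\cap S}{1}$ with $\rep{V_x}{1}(\{v\})\in X_{\vc}^{S}$ is itself in $\VC$, making the two versions of $\VC_X$ coincide. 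Replacing 1-equivalence by 2-equivalence and singletons by connected components of $G[X\setminus S]$ gives condition~(b); the symmetric arguments with $X$ and $Y$ exchanged handle conditions~(a) and~(b) of Definition~\ref{def:complementsolution}.
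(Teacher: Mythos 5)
Your plan is essentially the same as the paper's proof: apply Lemma~\ref{lem:equiStreeScontrac} to obtain $\cP_Y$ and the vertex cover $\VC$, build the index $i$ by taking representatives of the $\VC$-blocks and of $X\setminus V(\VC)$, and then walk through the six conditions of Definitions~\ref{def:partialsolution} and~\ref{def:complementsolution}, using that $\aux(X,i)$ is isomorphic to a subgraph of the forest $G[X\cup Y]_{\downarrow\cc(X\setminus S)\cup\cP_Y}$, that property~(2) of Lemma~\ref{lem:equiStreeScontrac} together with Fact~\ref{fact:2neighborequivalence} give the degree conditions, and that the vertex-cover property handles condition~(f). This is exactly the route the paper takes.

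Where you go beyond the paper is conditions~(a) and~(b). The paper dismisses these in one sentence as a consequence of the distinct-neighborhoods property of $\VC$, but that only rules out two $\VC$-blocks being $1$-equivalent; it does not rule out a block in $\VC$ being $1$- (or $2$-)equivalent to a block of $X_{\downarrow\cc(X\setminus S)}$ \emph{outside} $\VC$, which is what the uniqueness quantifier over all of $X\cap S$ (resp.\ $\cc(X\setminus S)$) actually requires, and what the "two versions of $\VC_X$ coincide" point needs. Your case analysis closes that gap, and you correctly notice that doing so forces you to use the explicit construction of $\VC$ from the proof of Lemma~\ref{lem:equiStreeScontrac} (degree-$\geq 2$ blocks plus one endpoint of each isolated edge), not just its stated conclusion. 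Two small caveats: (i) in the "single common $Y$-neighbor" case it is the $\cP_Y$-block containing that neighbor, not the vertex itself, that has degree $\geq 2$ after contraction (harmless, since $v,v'\in S$ have at most one neighbor per block by property~(2)); and (ii) your claim that "the same analysis" yields condition~(b) is slightly off, because the $4$-cycle-through-$S$ argument does not apply when the two blocks are components of $G[X\setminus S]$ rather than $S$-singletons. But that branch is in fact unnecessary even in case~(a): two equivalent blocks that each have $\geq 2$ block-neighbors would both be forced into $\VC$ with identical neighborhoods, contradicting distinctness directly. With that substitution, your argument for~(b) goes through as claimed, and the proposal is a correct and somewhat more careful rendering of the paper's proof.
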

	\begin{proof}
		By Lemma \ref{lem:equiStreeScontrac}, there exists an $\comp{S}$-contraction $\cP_Y$ of $Y$ such that the following properties are satisfied:
		\begin{enumerate}[(A)]
			\item $G[X\cup Y]_{\downarrow \cc(X\setminus S)\cup \cP_Y}$ is a forest,
			\item for all $P\in\cc(X\setminus S)\cup \cP_Y$ and all $v\in (X\cup Y)\cap S$, we have $|N(v)\cap P|\leq 1$,
			\item the graph $G[X, Y]_{\downarrow \cc(X\setminus S)\cup\cP_Y}$ admits a vertex cover $\VC$ of size at most $4\mim(V_x)$ such that the neighborhoods of the blocks in $\VC$ are pairwise distinct in $G[X, Y]_{\downarrow \cc(X\setminus S)\cup\cP_Y}$.
		\end{enumerate}
		We construct $i=(\cX_{\vc}^{\comp{S}},\cX_{\vc}^{S}, \cX_{\comp{\vc}},\cY_{\vc}^{\comp{S}},\cY_{\vc}^{S})\in\bI_x$ from $\VC$ as follows:
		\begin{itemize}
			\item $\cX_{\vc}^{S}=\{\rep{V_x}{1}(\{v\}) \mid \{v\}\in \binom{X\cap S}{1}\cap\VC\}$,
			\item $\cX_{\vc}^{\comp{S}}=\{\rep{V_x}{2}(P) \mid P\in \cc(X\setminus S) \cap \VC \}$,	
			\item $\cX_{\comp{\vc}}= \rep{V_x}{1}(X\setminus V(\VC))$,
			\item $\cY_{\vc}^{\comp{S}}= \{ \rep{\comp{V_x}}{2}(P)\mid P\in \cP_Y\cap \VC\}$,
			\item $\cY_{\vc}^{S}=\{\rep{\comp{V_x}}{1}(\{v\}) \mid \{v\}\in \binom{Y\cap S}{1}\cap \VC\}$.
		\end{itemize}
		Since $|\VC|\leq 4\mim(V_x)$, we have $|\cX_{\vc}^{\comp{S}}|+|\cX_{\vc}^{S}|+|\cY_{\vc}^{\comp{S}}| + |\cY_{\vc}^{S}| \leq 4\mim(V_x)$ and thus we have $i\in\bI_x$.
		\medskip
		
		We claim that $X$ is a partial solution associated with $i$	.
		By construction of $i$, Conditions (a), (b) and (f) of Definition~\ref{def:partialsolution} are satisfied.
		In particular, Condition (a) and (b) follow from Property~(C), i.e. the neighborhoods of the blocks in $\VC$ are pairwise distinct in $G[X, Y]_{\downarrow \cc(X\setminus S)\cup\cP_Y}$.
		So, the blocks in $X_{\downarrow \cc(X\setminus S)}\cap \VC$ are pairwise non-equivalent for the $d$-neighbor equivalence over $V_x$ for any $d\in \bN^+$ including $1$ and $2$.
		Consequently, there is a one to one correspondence between the blocks of $X_{\downarrow \cc(X\setminus S)}\cap \VC$ and the representatives in $\cX_{\vc}^{\comp{S}}\cup \cX_{\vc}^{S}$.
		
		It remains to prove Conditions (c), (d) and (e).
		We claim that Condition (c) is satisfied: $\aux(X,i)$ is a forest.
		Observe that, by construction, $\aux(X,i)$ is isomorphic to the graph $G[X_{\downarrow\cc(X\setminus S)} \mid  Y_{\downarrow \cP_Y}\cap \VC ]$.
		Indeed, for every $P\in Y_{\downarrow \cP_Y}\cap \VC$, by construction, there exists a unique $U\in \cY_{\vc}^{\comp{S}}\cup \cY_{\vc}^{S}$ such that $U\equi{\comp{V_x}}{1} P$ or $U\equi{\comp{V_x}}{2} P$.
		In both case, we have $N(U)\cap V_x=N(P)\cap V_x$ and thus, the neighborhood of $P$ in $G[X_{\downarrow\cc(X\setminus S)} \mid  Y_{\downarrow \cP_Y}\cap \VC ]$ is the same as the neighborhood of $U$ in $\aux(X,i)$.
		Since $G[X_{\downarrow\cc(X\setminus S)} \mid  Y_{\downarrow \cP_Y}\cap \VC ]$ is a subgraph of $G[X\cup Y]_{\downarrow \cc(X\setminus S)\cup \cP_Y}$ and this latter graph is a forest, we deduce that $\aux(X,i)$ is also a forest.
		
		We deduce that Conditions (d) and (e) are satisfied from property~(B) and Fact~\ref{fact:2neighborequivalence}.
		Consequently, $X$ is a partial solution associated with $i$.
		\medskip
		
		Let us now prove that $(Y,\cP_Y)$ is a complement solution associated with $i$.
		From the construction of $i$ and with the same argument used earlier, we deduce that that Conditions~(a) and (b) of Definition~\ref{def:complementsolution} are satisfied.
		By Property~(A) $G[X\cup Y]_{\downarrow \cc(X\setminus S)\cup \cP_Y}$ is a forest, as a subgraph, $G[Y]_{\downarrow \cP_Y}$ is also a forest and thus Condition~(c) is satisfied.
		Conditions (d) and (e) are satisfied from Property~(B) and Fact~\ref{fact:2neighborequivalence}.
		
		It remains to prove Condition (f):  $N(\cX_{\comp{\vc}})\cap \cY_{\comp{\vc}}=\emptyset$ where $\cY_{\comp{\vc}}$ contains the blocks $\{v\}\in \binom{Y\cap S}{1}$ such that $\rep{\comp{V_x}}{1}(\{v\})\notin \cY_{\vc}^{S}$ and the blocks $P\in\cP_Y$ such that $\rep{\comp{V_x}}{2}(P)\notin \cY_{\vc}^{\comp{S}}$.
		By construction, $\cY_{\comp{\vc}} = Y \setminus  V(\VC)$.
		Since, $\VC$ is a vertex cover of the graph $G[X, Y]_{\downarrow \cc(X\setminus S)\cup\cP_Y}$, there are no edges between $X_{\downarrow \cc(X\setminus S)}\setminus\VC$ and $Y_{\downarrow \cP_Y}\setminus \VC$ in $G[X, Y]_{\downarrow \cc(X\setminus S)\cup\cP_Y}$.
		We deduce that $N(X \setminus V(\VC)) \cap \cY_{\comp{\vc}}= \emptyset$.
		Since $X \setminus V(\VC)\equi{V_x}{1} \cX_{\comp{\vc}}$, we conclude that $N(\cX_{\comp{\vc}})\cap \cY_{\comp{\vc}}=\emptyset$.
		This proves that $(Y,\cP_Y)$ is a complement solution associated with $i$.
	\end{proof}
	
	\begin{lemma}\label{lemma:forestImpliesS-forest}
		Let $i=(\cX_{\vc}^{\comp{S}},\cX_{\vc}^{S}, \cX_{\comp{\vc}},\cY_{\vc}^{\comp{S}},\cY_{\vc}^{S})\in \bI_x$, $X$ be a partial solution associated with $i$ and $(Y,\cP_Y)$ be a complement solution associated with $i$. If the graph $G[X\cup Y]_{\downarrow \cc(X\setminus S)\cup \cP_Y}$ is a forest, then $G[X\cup Y]$ is an $S$-forest.
	\end{lemma}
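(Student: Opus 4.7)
The plan is to argue by contradiction: assume $G[X\cup Y]_{\downarrow \cc(X\setminus S)\cup\cP_Y}$ is a forest while $G[X\cup Y]$ contains an $S$-cycle $K$, and derive a parity contradiction by tracking the closed walk obtained from $K$ under the contraction by $\cQ := \cc(X\setminus S)\cup \cP_Y$. The driving observation is that every closed walk in a forest traverses each edge an even number of times (cutting any edge separates the forest into two components, and a closed walk must switch sides an even number of times), so it will suffice to exhibit an edge of $G[X\cup Y]_{\downarrow\cQ}$ used an odd number of times by the contracted walk.

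The first step is to extract from the hypotheses the uniform bound: for every $v\in (X\cup Y)\cap S$ and every $P\in\cQ$, $|N(v)\cap P|\leq 1$. I would proceed by a case analysis on which side of the cut $v$ lies and on whether $P$ sits ``inside the vertex cover'' recorded by the index $i$, i.e.\ on whether $\rep{V_x}{2}(P)\in X_{\vc}^{\comp{S}}$ (when $P\in\cc(X\setminus S)$) or $\rep{\comp{V_x}}{2}(P)\in Y_{\vc}^{\comp{S}}$ (when $P\in \cP_Y$). When $P$ is inside, the bound transfers from condition~(e) of Definition~\ref{def:partialsolution} or Definition~\ref{def:complementsolution} to $P$ itself via the 2-neighbor equivalence (Fact~\ref{fact:2neighborequivalence}). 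When $P$ is outside, condition~(d) of the opposite definition handles the sub-case when the singleton block of $v$ is in the cover, and otherwise condition~(f) of Definition~\ref{def:complementsolution} combined with the 1-neighbor equivalence from condition~(f) of Definition~\ref{def:partialsolution} forces $v$ to have no neighbor on the ``outside'' side at all, in fact yielding $|N(v)\cap P|=0$.

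Once the bound holds, pick any $s\in K\cap S$ with $K$-neighbors $u,u'$ (distinct by simplicity of $K$) and contract each vertex of $K$ to its block in $\cQ\cup\binom{(X\cup Y)\cap S}{1}$, collapsing any consecutive repeats, to obtain a closed walk $\omega$ in $G[X\cup Y]_{\downarrow\cQ}$. Since $\{s\}$ is a singleton block, $\omega$ visits $\{s\}$ exactly once and thus uses exactly two incident edges $\{s\}B_u$ and $\{s\}B_{u'}$, where $B_u, B_{u'}$ are the blocks of $u$ and $u'$. These two edges must be distinct: coincidence to a shared edge $\{s\}P$ with $P\in\cQ$ would give $\{u,u'\}\subseteq N(s)\cap P$, contradicting the uniform bound, while coincidence to a singleton $\{s'\}$ would force $u=u'=s'$, contradicting simplicity of $K$. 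Since $\omega$ does not revisit $\{s\}$, each of these two distinct edges is traversed exactly once by $\omega$, producing a closed walk in a forest with edges of odd multiplicity---impossible. The technically delicate part lies entirely in Step~1, where translating the cover-based bound recorded by $i$ into a bound on the actual partition $\cQ$ of $(X\cup Y)\setminus S$ requires threading together all of conditions~(d), (e) and~(f) of both definitions; Step~2 is then a short parity argument on closed walks in trees.
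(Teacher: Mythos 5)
Your proof is correct and the hard part---establishing the uniform bound $|N(v)\cap P|\leq 1$ for every $v\in(X\cup Y)\cap S$ and every block $P\in\cc(X\setminus S)\cup\cP_Y$ by a case analysis threading conditions (d), (e), (f) of both definitions through the 1- and 2-neighbor equivalences---is exactly the first half of the paper's argument. The only divergence is cosmetic: having fixed $s\in K\cap S$ with $K$-neighbors $u,u'$ in distinct blocks $B_u\neq B_{u'}$ both adjacent to $\{s\}$, the paper directly lifts the $u$--$w$ subpath of $K$ avoiding $s$ to a $B_u$--$B_{u'}$ path avoiding $\{s\}$ in the contracted graph and closes it through $\{s\}$ to exhibit a cycle, whereas you observe that the contracted image of $K$ is a closed walk using each of the two edges at $\{s\}$ exactly once, contradicting the even-multiplicity property of closed walks in forests; both finishes are two-line consequences of the same setup, so this is the same proof.
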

	\begin{proof}
		Assume that $G[X\cup Y]_{\downarrow \cc(X\setminus S)\cup \cP_Y}$ is a forest.
		By Fact~\ref{fact:Sforestandforest}, in order to prove that $G[X\cup Y]$ is an $S$-forest, it is enough to prove that for all $v\in (X\cup Y)\cap S$ and all $P\in \cc(X\setminus S)\cup \cP_Y$, we have $|N(v)\cap P|\leq 1$.
		Let us prove this statement for a vertex $v\in X\cap S$ the proof is symmetric for $v\in Y\cap S$.
		Let $P\in(X\cup Y)_{\downarrow\cc(X\setminus S)\cup \cP_Y}$.
		If $P\notin (\cc(X\setminus S)\cup \cP_Y)$, then $P$ is a singleton in $\binom{X\cup Y}{1}$ and we are done.
		If $P\in \cc(X\setminus S)$, then Condition (e) of Definition \ref{def:partialsolution} guarantees that we have $|N(v)\cap P|\leq 1$.
	
		Assume now that $P\in \cP_Y$.
		Suppose first that $\rep{\comp{V_x}}{2}(P)\notin \cY_{\vc}^{\comp{S}}$.
		From Condition~(f) of Definition \ref{def:complementsolution}, we have $N(P)\cap \cX_{\comp{\vc}}=\emptyset$.
		Let $r=\rep{V_x}{1}(\{v\})$.
		From the definition of $\cX_{\comp{\vc}}$ in Definition~\ref{def:partialsolution}, we deduce that if $r\notin \cX_{\vc}^{S}$, then $N(v)\cap \comp{V_x}\subseteq N(\cX_{\comp{\vc}})$ and thus $N(v)\cap P=\emptyset$.
		On the other hand, if $r\in \cX_{\vc}^{S}$, then Condition~(d) of Definition~\ref{def:complementsolution} ensures that $|N(r)\cap P|\leq 1$ and thus $|N(v)\cap P|\leq 1$.
		Now, suppose that $\rep{\comp{V_x}}{2}(P)\in \cY_{\vc}^{\comp{S}}$.
		By Condition~(e) of Definition \ref{def:partialsolution}, we know that $|N(v)\cap \rep{\comp{V_x}}{2}(P)|\leq 1$.
		From Fact~\ref{fact:2neighborequivalence}, we conclude that $|N(v)\cap P|\leq 1$. This concludes the proof of Lemma~\ref{lemma:forestImpliesS-forest}.
	\end{proof}

	For each index $i\in\bI_x$, we will design an equivalence relation $\sim_i$ between the partial solutions associated with $i$.
	We will prove that, for any partial solutions $X$ and $W$ associated with $i$, if $X\sim_i W$, then, for any complement solution $Y\subseteq \comp{V_x}$ associated with $i$, the graph $G[X\cup Y]$ is an $S$-forest if and only if $G[W\cup Y]$ is an $S$-forest.
	Then, given a set of partial solutions $\cA$ whose size needs to be reduced, it is sufficient to keep, for each $i\in\bI_x$ and each equivalence class $\cC$ of $\sim_i$, one partial solution in $\cC$ of maximal weight.
	The resulting set of partial solutions has size bounded by $|\bI_x|\cdot (4\mim(V_x))^{4\mim(V_x)}$ because $\sim_i$ generates at most $(4\mim(V_x))^{4\mim(V_x)}$ equivalence classes.
	
	\medskip

	Intuitively, given two partial solutions $X$ and $W$ associated with $i=(\cX_{\vc}^{\comp{S}},\cX_{\vc}^{S}, \cX_{\comp{\vc}},\cY_{\vc}^{\comp{S}},\cY_{\vc}^{S})$, we have $X\sim_i W$ if the blocks of $\VC$ (i.e., the vertex cover described by $i 	$) are \textit{equivalently connected} in $G[X_{\downarrow \cc(X\setminus S)}\mid \cY_{\vc}^{\comp{S}} \cup  \cY_{\vc}^{S} ]$ and $G[W_{\downarrow \cc(W\setminus S)}\mid \cY_{\vc}^{\comp{S}} \cup  \cY_{\vc}^{S}  ]$.
	In order to compare these connections, we use the following notion.
	
	\begin{definition}[$\cc(X,i)$]\label{def:aux(X,i)}
		Let $i=(\cX_{\vc}^{\comp{S}},\cX_{\vc}^{S}, \cX_{\comp{\vc}},\cY_{\vc}^{\comp{S}},\cY_{\vc}^{S})\in \bI_x$ and $X\subseteq V_x$ be a partial solution associated with $i$. For each connected component $C$ of $\aux(X,i)$, we define the set $C_\vc$ as follows:
		\begin{itemize}
			\item for every $U\in C$ such that $U\in \cY_{\vc}^{\comp{S}} \cup  \cY_{\vc}^{S}$, we have $U \in C_\vc$,
			\item for every $\{v\}\in \binom{X\cap S}{1}\cap C$ such that $\{v\}\equi{V_x}{1} R$ for some $R\in \cX_{\vc}^{S}$, we have $R\in C_\vc$,
			\item for every $U\in \cc(X\setminus S)$ such that $U\equi{V_x}{2} R$ for some $R \in \cX_{\vc}^{\comp{S}}$, we have $R \in C_\vc$.
		\end{itemize}
		We define $\cc(X,i)$ as the collection $\{C_\vc \mid C \text{ is a connected component of } \aux(X,i)\}$.
	\end{definition}
	
	For a connected component $C$ of $\aux(X,i)$, the set $C_\vc$ contains $C \cap (\cY_{\vc}^{\comp{S}}\cup \cY_{\vc}^{S})$ and the representatives of the blocks in $C\cap X_{\downarrow \cc(X\setminus S)}\cap \VC$ with $\VC$ the vertex cover described by $i$.
	Consequently, for every $X\subseteq V_x$ and $i\in\bI_x$, the collection $\cc(X,i)$ is partition of $\cX_{\vc}^{\comp{S}}\cup \cX_{\vc}^{S}\cup \cY_{\vc}^{\comp{S}}\cup \cY_{\vc}^{S}$.
	For the example given in Figure \ref{fig:auxxi}, observe that $\cc(X,i)$ is the partition that contains $\{R_1,U_1,U_2\}$, $\{R_2,U_3\}$ and $\{U_4\}$ (see also Figure~\ref{fig:ccXi})
	
	\medskip
	
	Now we are ready to give the notion of equivalence between partial solutions.
	We say that two partial solutions $X,W$ associated with $i$ are $i$-equivalent, denoted by $X\sim_i W$, if $\cc(X,i)=\cc(W,i)$.
	Our next result is the most crucial step.
	As already explained, our task is to show equivalence between partial solutions under any complement solution with respect to $S$-forests.
	Figure~\ref{fig:equivalent} gives an example of two $i$-equivalent partial solutions.	
	
	\medskip
	
	\begin{lemma}\label{lemma:equivalence}
		Let $i=(\cX_{\vc}^{\comp{S}},\cX_{\vc}^{S}, \cX_{\comp{\vc}},\cY_{\vc}^{\comp{S}},\cY_{\vc}^{S})\in \bI_x$. For every partial solutions $X,W$ associated with $i$ such that $X\sim_i W$ and for every complement solution $(Y,\cP_Y)$ associated with $i$, the graph $G[X\cup Y]_{\downarrow \cc(X\setminus S)\cup \cP_Y}$ is a forest if and only if the graph $G[W\cup Y]_{\downarrow \cc(W\setminus S)\cup \cP_Y}$ is a forest.
	\end{lemma}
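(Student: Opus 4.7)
The plan is to construct a multigraph $Q$ that depends only on $i$ and $(Y,\cP_Y)$, and to show that for any partial solution $Z$ associated with $i$, the graph $G[Z\cup Y]_{\downarrow \cc(Z\setminus S)\cup \cP_Y}$ is a forest if and only if $Q$ is a \emph{simple forest} (i.e., no cycles, no self-loops, no parallel edges). Since $X\sim_i W$ gives $\cc(X,i)=\cc(W,i)$, the same $Q$ will arise from both $X$ and $W$, so both sides of the lemma reduce to the single statement ``$Q$ is a simple forest''.

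Writing $H_X := G[X\cup Y]_{\downarrow \cc(X\setminus S)\cup \cP_Y}$, I would first identify $\aux(X,i)$ as a subgraph of $H_X$. By Conditions~(a) and~(b) of Definitions~\ref{def:partialsolution} and~\ref{def:complementsolution}, the representatives in $X_{\vc}^{\comp{S}}\cup X_{\vc}^{S}\cup Y_{\vc}^{\comp{S}}\cup Y_{\vc}^{S}$ are in bijection with a set $\VC$ of blocks inside $X_{\downarrow\cc(X\setminus S)}\cup Y_{\downarrow\cP_Y}$; since each representative is $2$-equivalent (hence $1$-equivalent) to the block it represents, Fact~\ref{fact:2neighborequivalence} implies that adjacencies between representatives and blocks on the opposite side coincide with the corresponding adjacencies in $H_X$. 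Thus $\aux(X,i)$ embeds naturally as a subgraph of $H_X$, and is in fact a sub-forest by Condition~(c) of Definition~\ref{def:partialsolution}. Let $Q$ denote the multigraph obtained from $H_X$ by contracting each connected component of this embedded copy of $\aux(X,i)$ to a single vertex.

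The vertex set of $Q$ is $\cc(X,i) \cup (Y_{\downarrow\cP_Y}\setminus \VC)$, and its edges come from the edges of $H_X$ outside $\aux(X,i)$. These split into: (i) edges of $G[Y]_{\downarrow\cP_Y}$, determined by $(Y,\cP_Y)$ together with the labeling of $\VC\cap Y_{\downarrow\cP_Y}$ by representatives (hence by $i$); and (ii) edges between a $\VC$-block $B\in X_{\downarrow\cc(X\setminus S)}$ and a block $P\in Y_{\downarrow\cP_Y}\setminus \VC$, which by $1$-neighbor equivalence depend only on the representative of $B$ in $X_{\vc}^{\comp{S}}\cup X_{\vc}^{S}$ and on $P$. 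All within-$X$-side edges of $H_X$ belong to $\aux(X,i)$ and are thus contracted, while edges joining $X_{\downarrow\cc(X\setminus S)}\setminus \VC$ to $Y_{\downarrow\cP_Y}\setminus \VC$ are forbidden by Condition~(f) of Definition~\ref{def:complementsolution} combined with the $\equiv^1_{V_x}$-description of $X_{\comp{\vc}}$ in Condition~(f) of Definition~\ref{def:partialsolution}. Consequently $Q$ is fully determined by $i$ and $(Y,\cP_Y)$, and the same construction applied to $W$ produces exactly the same multigraph.

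It remains to prove the equivalence $H_X$ forest $\Leftrightarrow Q$ simple forest. In the forward direction, any self-loop, multi-edge, or cycle in $Q$ lifts to a cycle in $H_X$ by completing it with the unique paths inside the contracted (tree) components of $\aux(X,i)$. For the reverse, a cycle $C$ in $H_X$ cannot lie entirely in $\aux(X,i)$ (a forest) nor entirely in $G[Y]_{\downarrow\cP_Y}$ (a forest by Condition~(c) of Definition~\ref{def:complementsolution}), so at least one edge of $C$ survives contraction; its projection is then a closed trail in $Q$ of length $\geq 1$ whose edges are pairwise distinct, which must witness a self-loop (length~$1$), a multi-edge (length~$2$), or a cycle (length~$\geq 3$) in $Q$. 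The main technical hurdle is this last step: one has to run a careful case analysis on the projected trail and verify that distinct edges of $C$ outside $\aux(X,i)$ remain pairwise distinct in $Q$, so that no cycle of $H_X$ collapses invisibly under the quotient.
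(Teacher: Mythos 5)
Your proposal is correct and takes a genuinely different route from the paper. The paper's proof is a direct cycle-translation argument: it assumes $G[W\cup Y]_{\downarrow \cc(W\setminus S)\cup \cP_Y}$ contains a cycle, decomposes that cycle into maximal paths between blocks of the vertex cover $\VC_W\cup\VC_Y$, each such path lying entirely in $G_W$ or in $\comp{G_W}$, and then translates each path piece-by-piece to a path in $G_X$ or $\comp{G_X}$ (using $\cc(X,i)=\cc(W,i)$ for the $G_W$-paths and $1$-neighbor equivalence of the $\VC$-endpoints for the $\comp{G_W}$-paths), concatenating to obtain a closed walk and hence a cycle on the $X$-side. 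Your proof instead constructs an invariant object: the quotient multigraph $Q$ obtained by contracting the components of the embedded copy of $\aux(Z,i)$, and shows that forest-ness of $H_Z$ is equivalent to $Q$ having no loop, no parallel edge, and no cycle; since $Q$ depends only on $i$, $(Y,\cP_Y)$, and $\cc(Z,i)$, and the latter is preserved by $\sim_i$, the biconditional follows by symmetry rather than by an explicit translation. This is cleaner conceptually (the argument is manifestly symmetric in $X$ and $W$ and does not require separate treatment of $G_W$-paths vs.\ $\comp{G_W}$-paths) at the mild cost of invoking the standard tree-contraction lemma ``$H$ is a forest iff a sub-forest $G'$ and the quotient multigraph $H/G'$ are both acyclic.'' One small imprecision: your final worry about distinct edges of $C$ ``collapsing invisibly'' is moot once $Q$ is honestly a multigraph whose edges are in bijection with the edges of $H_X$ outside the contracted sub-forest — distinct edges stay distinct by construction, and a non-empty closed trail in a multigraph always contains a cycle (possibly a loop or a $2$-cycle of parallel edges) because its edge set induces a subgraph of minimum degree two. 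You should also note that components of $\aux(X,i)$ containing no $\VC$-block all get labeled $\emptyset$ in $\cc(X,i)$; these contract to isolated vertices of $Q$ (they attach to nothing on the $Y$-side by Conditions~(f)), so their multiplicity may differ between $X$ and $W$, but isolated vertices are irrelevant to the simple-forest test, so the conclusion is unaffected.
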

	\begin{proof}
		Let $X,W$ be two partial solutions associated with $i$ such that  $X\sim_i W$ and let $(Y,\cP_Y)$ be a complement solution associated with $i$.
		To prove this lemma, we show that if $G[W\cup Y]_{\downarrow \cc(W\setminus S)\cup \cP_Y}$ contains a cycle, then $G[X\cup Y]_{\downarrow \cc(X\setminus S)\cup \cP_Y}$ contains a cycle too. See Figure~\ref{fig:intuitionCycles} for some intuitions on this proof.
		We will use the following notation in this proof.
		
		For $Z\in \{X,W\}$, we denote by $\VC_Z$ the set that contains:
		\begin{itemize}
			\item all $\{v\}\in \binom{Z\cap S}{1}$ such that $\rep{V_x}{1}(\{v\})\in \cX_{\vc}^{S}$,
			\item all $P\in \cc(Z\setminus S)$ such that $\rep{V_x}{2}(P)\in \cX_{\vc}^{\comp{S}}$.
		\end{itemize}
		We define also $\VC_Y$ as the set that contains:
		\begin{itemize}
			\item all $\{v\}\in \binom{Y\cap S}{1}$ such that $\rep{\comp{V_x}}{1}(\{v\})\in \cY_{\vc}^{S}$,
			\item all $P\in \cP_Y$ such that $\rep{\comp{V_x}}{2}(P)\in \cY_{\vc}^{\comp{S}}$.
		\end{itemize}
		The sets $\VC_X,\VC_W$ and $\VC_Y$ contain the blocks in $X_{\downarrow \cc(X\setminus S)}, W_{\downarrow \cc(W\setminus S)}$ and $Y_{\downarrow \cP_Y}$, respectively, which belong to the vertex cover described by $i$.
		Finally, for each $Z\in\{X,W\}$, we define the following two edge-disjoint subgraphs of $G[Z\cup Y]_{\downarrow\cc(Z\setminus S)\cup\cP_Y}$:
		\begin{itemize}
			\item $G_Z=G[Z_{\downarrow \cc(Z\setminus S)} \mid \VC_Y]$,
			\item $\comp{G_Z}= G[Z\cup Y]_{\downarrow \cc(Z\setminus S)\cup\cP_Y} - G_Z$.
		\end{itemize}
		As explained in the proof of Lemma \ref{lemma:exitsAnIndex}, for any $Z\in \{X,W\}$, the graph $\aux(Z,i)$ is isomorphic to the graph $G_Z$.
		Informally, $G_Z$ contains the edges of $G[Z\cup Y]_{\downarrow \cc(Z\setminus S)\cup \cP_Y}$ which are induced by $Z_{\downarrow\cc(Z\setminus S)}$ and those between $Z_{\downarrow\cc(Z\setminus S)}$ and $\VC_Y$.
		The following fact implies that $\comp{G_Z}$ contains the edges of $G[Z\cup Y]_{\downarrow \cc(Z\setminus S)\cup \cP_Y}$ that are induced by $Y_{\downarrow \cP_Y}$ and those between $Y_{\downarrow \cP_Y} \setminus \VC_Y$ and $\VC_Z$.
		
		\begin{fact}\label{fact:vertexcover}
			For any $Z\in \{X,W\}$, the set $\VC_Z\cup\VC_Y$ is a vertex cover of $G[Z,Y]_{\downarrow \cc(Z\setminus S)\cup \cP_Y}$.
		\end{fact}
		\begin{proof}
			First observe that $N(Y\setminus V(\VC_Y)) \cap \cX_{\comp{\vc}}=\emptyset$  thanks to Condition~(f) of Definition \ref{def:complementsolution}.
			Moreover, we have $\cX_{\comp{\vc}}\equi{V_x}{1} Z\setminus V(\VC_Z)$ by Condition (f) of Definition \ref{def:partialsolution}.
			We conclude that there are no edges between $Y_{\downarrow\cP_Y}\setminus \VC_Y$ and $Z_{\downarrow \cc(Z\setminus S)}\setminus \VC_Z$ in $G[Z,Y]_{\downarrow \cc(Z\setminus S)\cup\cP_Y}$.
			Hence, $\VC_Z\cup \VC_Y$ is a vertex cover of $G[Z,Y]_{\downarrow \cc(Z\setminus S)\cup\cP_Y}$.
		\end{proof}
		
		\medskip
		
		Assume that $G[W\cup Y]_{\downarrow \cc(W\setminus S)\cup \cP_Y}$ contains a cycle $C$.
		Our task is to show that $G[X\cup Y]_{\downarrow \cc(X\setminus S)\cup \cP_Y}$ contains a cycle as well.
		We first explore properties of $C$ with respect to $G_W$ and $\comp{G_W}$.
		Since the graph $\aux(W,i)$ is a forest and it is isomorphic to $G_W$, we know that $C$ must contain at least one edge from $\comp{G_W}$.
		Moreover, $C$ must go though a block of $W_{\downarrow\cc(W\setminus S)}$ because $G[Y]_{\downarrow\cP_Y}$ is a forest.
		Consequently, (and because from Fact~\ref{fact:vertexcover} $\VC_W\cup \VC_Y$ is a vertex cover of $G[W\cup Y]_{\downarrow \cc(W\setminus S)\cup \cP_Y}$), we deduce that $C$ is the concatenation of edge-disjoint paths $P_1\dots P_t$ such that for each $\ell\in [t]$ we have:
		\begin{itemize}
			\item $P_\ell$ is a non-empty \textit{path} with endpoints in $\VC_W\cup \VC_Y$ and internal blocks not in $\VC_W\cup \VC_Y$ and $P_\ell$ is either a path of $G_W$ or $\comp{G_W}$.
		\end{itemize}
		At least one of these paths is in $\comp{G_W}$ and, potentially, $C$ may be entirely contained in $\comp{G_W}$.
		Figure~\ref{fig:cyclewy} presents two possible interactions between $C$ and the graphs $G_W$ and $\comp{G_W}$.
		
		\begin{figure}[h]
			\centering
			\includegraphics[width=0.7\linewidth]{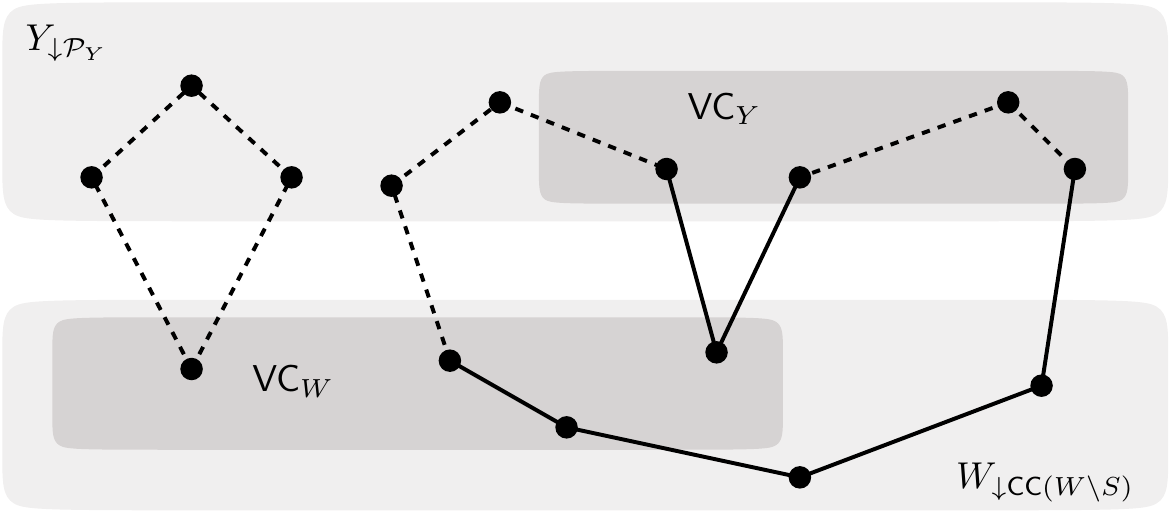}
			\caption{How cycles in $G[W\cup Y]_{\downarrow \cc(W\setminus S)\cup \cP_Y}$ may interact with the graphs $G_W$ and $\comp{G_W}$.
				The solid edges belong to $G_W$ and the dashed edges belong to $\comp{G_W}$.}
			\label{fig:cyclewy}
		\end{figure}
		
		Given an endpoint $U\in \VC_W\cup \VC_Y$ of one of the paths $P_1,\dots,P_\ell$, we define $U_X$ and $U_i$ as the analogs of $U$ in $\VC_X\cup \VC_Y$ and $\cX_{\vc}^{\comp{S}}\cup \cX_{\vc}^{S} \cup \cY_{\vc}^{\comp{S}}\cup  \cY_{\vc}^{S}$, respectively, as follows:
		\begin{itemize}
			\item if $U\in \cc(W\setminus S)$, then $U_X$ and $U_i$ are the unique elements of $\cc(X\setminus S)$ and $\cX_{\vc}^{\comp{S}}$, respectively, such that $U\equi{V_x}{2} U_X \equi{V_x}{2} U_i$,
			
			\item if $U=\{v\}\in \binom{W\cap S}{1}$, then $U_X$ and $U_i$ are the unique elements in $\binom{X\cap S}{1}$ and $\cX_{\vc}^{S}$, respectively, such that $U\equi{V_x}{1} U_X \equi{V_x}{1} U_i$,
			
			\item if $U\in \cP_Y$, then $U_X=U$ and $U_i$ is the unique element of $\cY_{\vc}^{\comp{S}}$ such that $U\equi{\comp{V_x}}{2} U_i$;
			
			\item otherwise, if $U=\{v\}\in \binom{Y\cap S}{1}$, then $U_X=U$ and $U_i$ is the unique element of $\cY_{\vc}^{S}$ such that $U\equi{\comp{V_x}}{1} U_i$.
		\end{itemize}
		Observe that $U_X$ and $U_i$ exist by Conditions (a) and (b) of Definition \ref{def:partialsolution} and Definition \ref{def:complementsolution}.
		
		For each $\ell\in [t]$, we construct a non-empty path $P'_\ell$ whose endpoints are the analogs in $\VC_X\cup \VC_Y$ of the endpoints of $P_\ell$ and such that if $P_\ell$ is a path in $G_W$ (resp. $\comp{G_W}$), then $P'_\ell$ is a path in $G_X$ (resp. $\comp{G_X}$).
		This is sufficient to prove the claim because thanks to this, we can construct a closed walk in $G[X\cup Y]_{\downarrow\cc(X\setminus S)\cup\cP_Y}$ by concatenating the paths $P'_1,\dots,P'_t$.
		Since $G_X$ and $\comp{G_X}$ are edge-disjoint and the paths $P'_1,\dots,P'_t$ are non-empty paths, this closed walk must contain a cycle.
		
		\medskip
		
		Let $\ell\in [t]$ and $U,T$ be the endpoints of $P_\ell$.
		We denote by $U_X,T_X,U_i$ and $T_i$ the analogs of $U$ and $T$ in $\VC_X\cup \VC_Y$ and $\cX_{\vc}^{\comp{S}}\cup \cX_{\vc}^{S} \cup \cY_{\vc}^{\comp{S}}\cup  \cY_{\vc}^{S}$, respectively.
		
		First, assume that $P_\ell$ is path of $G_W$.
		Observe that $U_i$ and $T_i$ belong to the same partition class of $\cc(W,i)$.
		This follows from the definitions of $U_i,T_i$ and the fact that $G_W$ is isomorphic to $\aux(W,i)$.
		As $W\sim_i X$, we deduce that $U_i$ and $T_i$ belong to the same partition class of $\cc(X,i)$.
		By construction, $U_i$ and $T_i$ are the analogs of $U_X$ and $T_X$ in $\cX_{\vc}^{\comp{S}}\cup \cX_{\vc}^{S} \cup \cY_{\vc}^{\comp{S}}\cup  \cY_{\vc}^{S}$.
		We conclude that $U_X$ and $T_X$ are connected in $G_X$ via a path $P'_\ell$.
		We claim that $P'_\ell$ is not empty because we have $U_X\neq T_X$.
		As $P_\ell$ is a non empty path of $G_W$ and because $G_W$ is acyclic, we know that $U$ and $T$ are distinct.
		Hence, by the construction of $U_X$ and $T_X$, we deduce that $U_X\neq T_X$.
		
		\medskip
		
		Now, assume that $P_\ell$ is a non-empty path of $\comp{G_W}$.
		Since $\VC_W\cup \VC_Y$ is a vertex cover of $G[W,Y]_{\downarrow \cc(W\setminus S)\cup \cP_Y}$, the blocks in $W_{\downarrow \cc(W\setminus S)}$ that do not belong to $VC_W$ are isolated in $\comp{G_W}$.
		As $P_\ell$ is not empty, the blocks of $P_\ell$ which belong to $W_{\downarrow \cc(W\setminus S)}$ are in $\VC_W$.
		Because the internal blocks of the paths $P_1,\dots,P_t$ are not in $\VC_W\cup \VC_Y$, we deduce that the internal blocks of $P_\ell$ belong to $Y_{\downarrow \cP_Y}$. We distinguish the following cases:
		\begin{itemize}
			\item If both endpoints of $P_\ell$ belong to $\VC_Y$, then $U_X=U$, $T_X=T$ and all the blocks of $P_\ell$ belong to $Y_{\downarrow \cP_Y}$. It follows that $P_\ell$ is a non-empty path of $\comp{G_X}$ because $G[Y]_{\downarrow \cP_Y}$ is a subgraph of $\comp{G_X}$. In this case, we take $P'_\ell=P_\ell$.
			
			\item Assume now that one or two endpoints of $P_\ell$ belong to $\VC_W$. 
			Suppose w.l.o.g. that $U$ belongs to $\VC_W$.
			Since $P_\ell$ is non-empty and the internal blocks of $P_\ell$ are in $Y_{\downarrow \cP_Y}$, $U$ has a neighbor $Q\in Y_{\downarrow \cP_Y}$ in $P_\ell$.
			We claim that $Q$ is adjacent to $U_X$ in $\comp{G_X}$.
			By definition of $U_X$, we have $U\equi{V_x}{d} U_X$ for some $d\in\{1,2\}$ and in particular $N(U)\cap \comp{V_x}= N(U_X)\cap \comp{V_x}$.
			As $U$ and $Q$ are adjacent in $\comp{G_W}$, we deduce that $N(U)\cap Q \neq \emptyset$.
			It follows that $N(U_X)\cap Q\neq\emptyset$ and thus $Q$ and $U_X$ are adjacent in $\comp{G_X}$.
			Symmetrically, we can prove that if $T\in \VC_W$, then the neighbor of $T$ in $P_\ell$ is adjacent to $T_X$ in $\comp{G_X}$.
			
			Hence, the neighbors of $U$ and $T$ in $P_\ell$ are adjacent to $U_X$ and $T_X$ respectively in $\comp{G_X}$.
			We obtain $P_\ell'$ from $P_\ell$ by replacing $U$ and $T$ by $U_X$ and $T_X$.
			Since the internal blocks of $P_\ell$ belong to $Y_{\downarrow \cP_Y}$ and $G[Y]_{\downarrow \cP_Y}$ is a subgraph of $\comp{G_X}$, we deduce that $P'_\ell$ is a path of $\comp{G_X}$.
			The path $P'_\ell$ is not-empty because it contains $U_X$ and $Q$ which are distinct blocks of $\comp{G_X}$ as $U_X\in \VC_X$ (since $U\in\VC_W$ by assumption) and $Q\in Y_{\downarrow \cP_Y}$.
		\end{itemize}
	\end{proof}
	
	
	\medskip
	
	The following theorem proves that, for every set of partial solutions $\cA\subseteq 2^{V_x}$, we can compute a small subset $\cB\subseteq \cA$ such that $\cB$ \textit{represents} $\cA$, i.e., for every $Y\subseteq \comp{V_x}$, the best solutions we obtain from the union of $Y$ with a set in $\cA$ are as good as the ones we obtain from $\cB$.
	Firstly, we formalize this notion of representativity.
	
	\begin{definition}[Representativity]\label{def:representativity}
		For every $\cA\subseteq 2^{V_x}$ and $Y\subseteq \comp{V_x}$, we define
		\[ \best(\cA,Y)= \max\{ \w(X) \mid X\in \cA \text{ and } G[X\cup Y] \text{ is an } S\text{-forest} \} .\]
		Given $\cA,\cB\subseteq 2^{V_x}$, we say that $\cB$ \emph{represents} $\cA$ if, for every $Y\subseteq \comp{V_x}$, we have $\best(\cA,Y)=\best(\cB,Y)$.
	\end{definition}
	
	We recall that $\snec_2(A)=\max(\nec_2(A),\nec_2(\comp{A}))$.
		
	\begin{lemma}\label{lem:reduce}
		There exists an algorithm $\reduce$ that, given a set $\cA\subseteq 2^{V_x}$, outputs in time $O(|\cA|\cdot |\bI_x| \cdot (4\mim(V_x))^{4\mim(V_x)}\cdot  \log(\snec_2(V_x)) \cdot n^3)$ a subset $\cB\subseteq \cA$ such that $\cB$ represents $\cA$ and $|\cB|\leq |\bI_x| \cdot (4\mim(V_x))^{4\mim(V_x)}$.
	\end{lemma}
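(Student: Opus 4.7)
The plan is to define $\cB$ by grouping the partial solutions in $\cA$ via the equivalence relations $\sim_i$ and keeping one maximum-weight representative per class. First I would preprocess the representative sets $\Rep{V_x}{1}$, $\Rep{V_x}{2}$, $\Rep{\comp{V_x}}{1}$, $\Rep{\comp{V_x}}{2}$ and their data structures via Lemma \ref{lem:computenecd}, so that subsequent calls to $\rep{}{d}$ cost $O(n^2\log(\snec_2(V_x)))$. Then I would process each $X\in\cA$ as follows: compute $\cc(X\setminus S)$ once; enumerate all $i\in\bI_x$ and verify Conditions (a)--(f) of Definition \ref{def:partialsolution} using the representative oracle and $O(n^2)$ adjacency checks; when they all hold, construct $\aux(X,i)$ and read off its connected components to form $\cc(X,i)$; file $X$ into a bucket labeled $(i,\cc(X,i))$. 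The set $\cB$ is obtained by selecting, from each nonempty bucket, one $X$ of maximum $\w(X)$.

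The size bound follows from the fact that, for each fixed index $i$, the object $\cc(X,i)$ is a set partition of $X_{\vc}^{\comp{S}}\cup X_{\vc}^{S}\cup Y_{\vc}^{\comp{S}}\cup Y_{\vc}^{S}$, whose cardinality is at most $4\mim(V_x)$ by the defining condition of $\bI_x$. Since an $n$-element set has at most $n^n$ partitions, each index contributes at most $(4\mim(V_x))^{4\mim(V_x)}$ nonempty buckets, hence $|\cB|\leq |\bI_x|\cdot(4\mim(V_x))^{4\mim(V_x)}$.

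For correctness, $\best(\cB,Y)\leq \best(\cA,Y)$ is immediate since $\cB\subseteq\cA$. For the reverse direction, fix $Y\subseteq\comp{V_x}$ and let $X\in\cA$ witness $\best(\cA,Y)$, so $G[X\cup Y]$ is an $S$-forest. Lemma \ref{lemma:exitsAnIndex} supplies $i\in\bI_x$ and an $\comp{S}$-contraction $\cP_Y$ such that $X$ is a partial solution associated with $i$, $(Y,\cP_Y)$ is a complement solution associated with $i$, and $G[X\cup Y]_{\downarrow\cc(X\setminus S)\cup\cP_Y}$ is a forest. Consequently $X$ lies in the bucket $(i,\cc(X,i))$, whose chosen representative $W\in\cB$ satisfies $W\sim_i X$ and $\w(W)\geq\w(X)$. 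Applying Lemma \ref{lemma:equivalence} transports the forest property to $G[W\cup Y]_{\downarrow\cc(W\setminus S)\cup\cP_Y}$, and Lemma \ref{lemma:forestImpliesS-forest} concludes that $G[W\cup Y]$ is an $S$-forest. Thus $\best(\cB,Y)\geq\w(W)\geq\w(X)=\best(\cA,Y)$.

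The main obstacle I anticipate is pure bookkeeping: making the per-pair cost of testing membership in the set of partial solutions associated with $i$ and of computing and hashing $\cc(X,i)$ fit inside the stated runtime. Each of Conditions (a)--(f) reduces to a bounded number of representative queries and $O(n^2)$ adjacency checks; the graph $\aux(X,i)$ has $O(n)$ blocks so its connected components can be recovered in $O(n^2)$; and the bucket key $\cc(X,i)$, being a partition of at most $4\mim(V_x)$ elements drawn from the chosen representative sets, can be normalized and compared in $O((4\mim(V_x))^{4\mim(V_x)})$ time in the worst case. Multiplying across the $|\cA|\cdot|\bI_x|$ pairs and absorbing the one-time preprocessing cost delivers the claimed time bound.
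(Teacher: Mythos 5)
Your proposal is correct and follows essentially the same approach as the paper's proof: you partition $\cA$ into buckets keyed by $(i,\cc(X,i))$, keep one maximum-weight element per bucket, bound the bucket count by noting $\cc(X,i)$ is a partition of a set of size at most $4\mim(V_x)$, and establish representativity by the same chain through Lemmas~\ref{lemma:exitsAnIndex}, \ref{lemma:equivalence}, and \ref{lemma:forestImpliesS-forest}.
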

	
	\begin{proof}
		Given $\cA\subseteq 2^{V_x}$ and $i\in\bI_x$, we define $\reduce(\cA,i)$ as the operation which returns a set containing one partial solution $X\in \cA$ associated with $i$ of each equivalence class of $\sim_i$ such that $\w(X)$ is maximum.
		Moreover, we define $\reduce(\cA)= \bigcup_{i\in\bI_x}\reduce(\cA,i)$.
		
		\medskip
		
		We prove first that $\reduce(\cA)$ represents $\cA$, that is $\best(\cA,Y)=\best(\reduce(\cA),Y)$ for all $Y\subseteq \comp{V_x}$.
		Let $Y\subseteq \comp{V_x}$.
		Since $\reduce(\cA)\subseteq \cA$, we already have $\best(\reduce(\cA),Y)\leq \best(\cA,Y)$.
		Consequently, if there is no $X\in\cA$ such that $G[X\cup Y]$ is an $S$-forest, we have $\best(\reduce(\cA),Y)= \best(\cA,Y)=\max(\emptyset)=-\infty$.
		
		Assume that there exists $X\in\cA$ such that $G[X\cup Y]$ is an $S$-forest.
		Let $X\in \cA$ such that $G[X\cup Y]$ is an $S$-forest and $\w(X)=\best(\cA,Y)$.
		By Lemma \ref{lemma:exitsAnIndex}, there exists $i\in\bI_x$ and an $\comp{S}$-contraction $\cP_Y$ of $Y$ such that (1)~$G[X\cup Y]_{\downarrow \cc(X\setminus S)\cup \cP_Y}$ is a forest, (2)~$X$ is a partial solution associated with $i$ and (3)~$(Y,\cP_Y)$ is a complement solution associated with $i$.
		
		From the construction of $\reduce(\cA,i)$, there exists $W\in \reduce(\cA)$ such that $W$ is a partial solution associated with $i$, $X\sim_i W$, and $\w(W)\geq \w(X)$.
		By Lemma \ref{lemma:equivalence} and since $G[X\cup Y]_{\downarrow \cc(X\setminus S)\cup \cP_Y}$ is a forest, we deduce that $G[W\cup Y]_{\downarrow \cc(W\setminus S)\cup \cP_Y}$ is a forest too.
		Thanks to Lemma \ref{lemma:forestImpliesS-forest}, we deduce that $G[W\cup Y]$ is an $S$-forest.
		As $\w(W)\geq \w(X)=\best(\cA,Y)$, we conclude that $\best(\cA,Y)=\best(\reduce(\cA),Y)$.
		Hence, $\reduce(\cA)$ represents $\cA$.
		
		\medskip
		
		We claim that $|\reduce(\cA)| \leq  |\bI_x| \cdot (4\mim(V_x))^{4\mim(V_x)}$.
		For every $i=(\cX_{\vc}^{\comp{S}},\cX_{\vc}^{S}, \cX_{\comp{\vc}},\cY_{\vc}^{\comp{S}},\cY_{\vc}^{S})\in\bI_x$ and  partial solution $X$ associated with $i$, $\cc(X,i)$ is a partition of $\cX_{\vc}^{\comp{S}}\cup \cX_{\vc}^{S}\cup  \cX_{\comp{\vc}} \cup \cY_{\vc}^{\comp{S}} \cup \cY_{\vc}^{S}$.
		Since $|\cX_{\vc}^{\comp{S}}|+|\cX_{\vc}^{S}|+|\cY_{\vc}^{\comp{S}}| + |\cY_{\vc}^{S}| \leq 4\mim(V_x)$, there are at most $(4\mim(V_x))^{4\mim(V_x)}$ possible values for $\cc(X,i)$.
		We deduce that, for every $i\in\bI_x$, the relation $\sim_i$ generates at most $(4\mim(V_x))^{4\mim(V_x)}$ equivalence classes, so $|\reduce(\cA,i)|\leq (4\mim(V_x))^{4\mim(V_x)}$ for every $i\in \bI_x$.
		By construction, we conclude that $|\reduce(\cA)|\leq |\bI_x| \cdot (4\mim(V_x))^{4\mim(V_x)}$.
		
		\medskip
		
		It remains to prove the runtime.
		As $\nec_1(V_x)\leq \nec_2(V_x)$, by Lemma \ref{lem:computenecd} we can  compute in time $O(\snec_2(V_x) \cdot \log(\snec_2(V_x)) \cdot n^2)$ the sets  $\Rep{V_x}{1},\Rep{V_x}{2}$, $\Rep{\comp{V_x}}{2}$ and  data structures which compute $\rep{V_x}{1},\rep{V_x}{2}$ and $\rep{\comp{V_x}}{2}$ in time $O(\log(\snec_2(V_x))\cdot n^2)$.
		Given $\Rep{V_x}{1},\Rep{V_x}{2}$, $\Rep{\comp{V_x}}{2}$, we can compute $\bI_x$ in time $O(|\bI_x|\cdot n^2)$.
		Since, $\snec_2(V_x)\leq |\bI_x|$, the time required to compute these sets and data structures is less than $O(|\cA|\cdot |\bI_x| \cdot (4\mim(V_x))^{4\mim(V_x)}\cdot  \log(\snec_2(V_x)) \cdot n^3)$.
		
		For each $i\in\bI_x$ and $X\in \cA$, we can decide whether $X$ is a partial solution associated with $i$ and compute $\aux(X,i)$, $\cc(X,i)$ in time $O(\log(\snec_2(V_x))\cdot n^3)$.
		For doing so, we simply start by computing $\rep{V_x}{2}(C)$ and $\rep{V_x}{1}(\{v\})$ for each $C\in \cc(X\setminus S)$ and $\{v\}\in \binom{X\cap S}{1}$, this is doable in $O( \log(\snec_2(V_x))\cdot n^3)$ since $|\cc(X\setminus S)|+|X\cap S|\leq n$. Then with standard algorithmic techniques, we check whether $X$ satisfies all the conditions of Definition~\ref{def:partialsolution} and compute $\aux(X,i)$ and $\cc(X,i)$.
		
		Given two partial solutions $X,W$ associated with $i$, $\cc(X,i)$ and $\cc(W,i)$, we can decide whether $X\sim_i W$ in time $O(\mim(V_x)) \leq O(n)$.
		We deduce that, for each $i\in\bI_x$, we can compute $\reduce(\cA,i)$ in time
		$O(|\cA|\cdot (4\mim(V_x))^{4\mim(V_x)} \cdot \log(\snec_2(V_x))\cdot n^3)$.
		We deduce the running time to compute $\reduce(\cA)$ by multiplying the running time needed for $\reduce(\cA,i)$ by $|\bI_x|$.
	\end{proof}
	
	We are now ready to prove the main theorem of this paper. For two subsets $\cA$ and $\cB$	of $2^{V(G)}$, we define the \emph{merging} of $\cA$ and $\cB$, denoted by $\cA\otimes \cB$, as 	$\cA \otimes \cB :=\{ X\cup Y \mid X\in \cA \,\text{ and }\, Y\in \cB\}$.
	Observe that $\cA\otimes \cB=\emptyset$ whenever $\cA=\emptyset$ or $\cB=\emptyset$.
	
	\begin{theorem}\label{thm:main}
		There exists an algorithm that, given an $n$-vertex graph $G$ and a rooted layout $(T,\delta)$ of $G$, solves \textsc{Subset Feedback Vertex Set} in time
		\[ O\left(\sum_{x\in V(T)} |\bI_x|^3\cdot (4\mim(V_x))^{12\mim(V_x)}\cdot \log(\snec_2(V_x)) \cdot n^3\right). \]
	\end{theorem}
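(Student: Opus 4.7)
The plan is a bottom-up dynamic programming traversal of $(T,\delta)$: for each node $x$ we inductively maintain a set $\cA_x \subseteq 2^{V_x}$ that represents $2^{V_x}$ in the sense of Definition~\ref{def:representativity}. At a leaf $x$ with $\delta^{-1}(x)=v$ we set $\cA_x = \{\emptyset, \{v\}\} = 2^{V_x}$, which trivially represents itself. At an internal node $x$ with children $a$ and $b$ we set $\cA_x := \reduce(\cA_a \otimes \cA_b)$, invoking the subroutine of Lemma~\ref{lem:reduce}. At the root $r$ we have $V_r = V(G)$ and $\comp{V_r} = \emptyset$, so we output $\best(\cA_r, \emptyset)$, which will coincide with $\best(2^{V(G)}, \emptyset)$, i.e., the maximum weight of an $S$-forest of $G$.

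Correctness proceeds by structural induction on $T$ and hinges on a \emph{merging lemma}: if $\cA$ represents $2^{V_a}$ and $\cB$ represents $2^{V_b}$, then $\cA \otimes \cB$ represents $2^{V_x}$. To prove it, fix $Y \subseteq \comp{V_x}$ and an optimal $X^* \subseteq V_x$ with $G[X^* \cup Y]$ an $S$-forest and $\w(X^*) = \best(2^{V_x}, Y)$. Decompose $X^* = X_a^* \cup X_b^*$ with $X_a^* \subseteq V_a$ and $X_b^* \subseteq V_b$. Since $Y \cup X_b^* \subseteq \comp{V_a}$, representativity of $\cA$ yields $X_a \in \cA$ with $\w(X_a) \geq \w(X_a^*)$ such that $G[X_a \cup X_b^* \cup Y]$ is an $S$-forest. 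Next, $Y \cup X_a \subseteq \comp{V_b}$, so representativity of $\cB$ yields $X_b \in \cB$ with $\w(X_b) \geq \w(X_b^*)$ such that $G[X_a \cup X_b \cup Y]$ is an $S$-forest. Then $X_a \cup X_b \in \cA \otimes \cB$ has weight at least $\w(X^*)$, giving $\best(\cA \otimes \cB, Y) \geq \best(2^{V_x}, Y)$; the reverse inequality is immediate since $\cA \otimes \cB \subseteq 2^{V_x}$. Combining this with Lemma~\ref{lem:reduce}, which ensures that $\reduce(\cA_a \otimes \cA_b)$ represents $\cA_a \otimes \cA_b$, shows by transitivity that $\cA_x$ represents $2^{V_x}$.

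For the running time, Lemma~\ref{lem:reduce} caps $|\cA_x|$ at $|\bI_x| \cdot (4\mim(V_x))^{4\mim(V_x)}$, so $|\cA_a \otimes \cA_b| \leq |\cA_a|\cdot|\cA_b|$ is bounded by the product of the corresponding caps at the children. Plugging this into the reduce cost from Lemma~\ref{lem:reduce} and loosely absorbing the children's caps into the per-node factor at $x$ (whichever is dominant, noting that each node contributes once to the sum over $V(T)$) yields a per-node cost of $O(|\bI_x|^3 \cdot (4\mim(V_x))^{12\mim(V_x)} \cdot \snec_2(V_x) \cdot \log(\snec_2(V_x)) \cdot n^3)$, and summing over $x \in V(T)$ gives the stated bound. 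The principal obstacle is the merging lemma, which requires threading two successive applications of representativity while preserving the $S$-forest property of the union; all substantive structural work about SFVS (the vertex contractions, the indices $\bI_x$, the vertex-cover view, and the equivalence $\sim_i$) has already been absorbed into Lemmas~\ref{lem:equiStreeScontrac}--\ref{lem:reduce}, so no further combinatorial insight on $S$-forests is required here.
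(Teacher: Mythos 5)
Your proposal is correct and follows essentially the same route as the paper: a bottom-up DP computing $\cA_x = \reduce(\cA_a \otimes \cA_b)$, a propagation claim that representativity is preserved under $\otimes$ and $\reduce$, and the running-time bound inherited from Lemma~\ref{lem:reduce}. The only cosmetic difference is in the merging step, where you give a constructive witness-tracking argument (replace $X_a^*$ then $X_b^*$ one at a time using representativity at $V_a$ and $V_b$) whereas the paper phrases the same reasoning as an equational rewrite of $\best(\cA_a\otimes\cA_b,Y)$ into a nested max; the two are logically interchangeable.
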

	\begin{proof}
		The algorithm is a usual bottom-up dynamic programming algorithm.
		For every node $x$ of $T$, the algorithm computes a set of partial solutions $\cA_x \subseteq 2^{V_x}$  such that $\cA_x$ represents $2^{V_x}$ and $|\cA_x|\leq |\bI_x| \cdot (4\mim(V_x))^{4\mim(V_x)}$.
		For the leaves $x$ of $T$ such that $V_x=\{v\}$, we simply take $\cA_x= 2^{V_x} = \{\emptyset, \{v\}\}$.
		In order to compute $\cA_x$ for $x$ an internal node of $T$ with $a$ and $b$ as children, our algorithm will simply compute $\cA_x= \reduce(\cA_a\otimes \cA_b)$.
		Once the the set $\cA_r$ is computed with $r$ the root of $T$, our algorithm outputs a set $X\in \cA_r$ of maximum weight.
		
		\medskip
		
		By Lemma~\ref{lem:reduce}, we have $|\cA_x|\leq |\bI_x|\cdot (4\mim(V_x))^{4\mim(V_x)}$, for every node $x$ of $T$.
		The following claim helps us to prove that $\cA_x$ represents $2^{V_x}$ for the internal nodes $x$ of $T$.
		
		\begin{claim}\label{claim:representsPropagation}
			Let $x$ be an internal of $T$ with $a$ and $b$ as children.
			If $\cA_a$ and $\cA_b$ represent, respectively, $2^{V_a}$ and $2^{V_b}$, then $\reduce(\cA_a\otimes \cA_b)$ represents $2^{V_x}$.
		\end{claim}
		\begin{proof}
			Assume that  $\cA_a$ and $\cA_b$ represent, respectively, $2^{V_a}$ and $2^{V_b}$.
			First, we prove that $\cA_a\otimes \cA_b$ represents $2^{V_x}$.
			We have to prove that, for every $Y\subseteq \comp{V_x}$, we have $\best(\cA_a\otimes \cA_b,Y)=\best(2^{V_x},Y)$.
			Let $Y\subseteq \comp{V_x}$.
			By definition of $\best$, we have the following
			\begin{align*}
			\best(\cA_a\otimes \cA_b,Y)	&= \max\{ \w(X)+ \w(W)\mid X\in \cA_a\wedge W\in \cA_b\wedge G[X\cup W\cup Y] \text{ is an $S$-forest} \} \\
			&= \max\{ \best(\cA_a,W\cup Y) + \w(W) \mid  W\in \cA_b \}.
			\end{align*}
			As $\cA_a$ represents $2^{V_a}$, we have $\best(\cA_a,W\cup Y)=\best(2^{V_a},W\cup Y)$ and we deduce that $\best(\cA_a\otimes \cA_b,Y)=\best(2^{V_a}\otimes\cA_b,Y)$.
			Symmetrically, as $\cA_b$ represents $2^{V_b}$, we infer that $\best(2^{V_a}\otimes\cA_b,Y)=\best(2^{V_a}\otimes 2^{V_b},Y)$.
			Since $2^{V_a}\otimes 2^{V_b}= 2^{V_x}$, we conclude that $\best(\cA_a\otimes \cA_b,Y)$ equals $\best(2^{V_x},Y)$.
			As this holds for every $Y$, it proves that $\cA_a\otimes \cA_b$ represents $2^{V_x}$.
			By Lemma~\ref{lem:reduce}, we know that $\reduce(\cA_a\otimes \cA_b)$ represents $\cA_a\otimes \cA_b$.
			As the relation ``represents'' is transitive, we conclude that $\reduce(\cA_a\otimes \cA_b)$ represents $2^{V_x}$.
		\end{proof}
		
		For the leaves $x$ of $T$, we obviously have that $\cA_x$ represents $2^{V_x}$, since $\cA_x=2^{V_x}$.
		From Claim~\ref{claim:representsPropagation} and by induction, we deduce that $\cA_x$ represents $2^{V_x}$ for every node $x$ of $T$.
		In particular, $\cA_r$ represents $2^{V(G)}$ with $r$ the root of $T$.
		By Definition \ref{def:representativity}, $\cA_r$ contains a set $X$ of maximum size such that $G[X]$ is an $S$-forest.
		This proves the correctness of our algorithm.
		
		\medskip
		
		It remains to prove the running time.
		Observe that, for every internal node $x$ of $T$ with $a$ and $b$ as children, the size of $\cA_a\otimes \cA_b$ is at most $|\bI_x|^2\cdot (4\mim(V_x))^{8\mim(V_x)}$ and it can be computed in time $O(|\bI_x|^2\cdot (4\mim(V_x))^{8\mim(V_x)}\cdot n^2)$.
		By Lemma~\ref{lem:reduce}, the set $\cA_x=\reduce(\cA_a\otimes \cA_b)$ is computable in time $O(|\bI_x|^3\cdot (4\mim(V_x))^{12\mim(V_x)}\cdot \log(\snec_2(V_x))\cdot n^3)$.
		This proves the running time.
	\end{proof}
	
	\subsection{Algorithmic consequences}
	
	In order to obtain the algorithmic consequences of our meta-algorithm given in Theorem~\ref{thm:main}, we need the following lemma
	which bounds the size of each set of indices with respect to the considered parameters.
	
	\begin{lemma}\label{lem:consequences}
		For every $x\in V(T)$, the size of $\bI_x$ is upper bounded by:
		\begin{multicols}{3}
			\begin{itemize}
				\item $2^{O(\rw(V_x)^3)}$,
				\item $2^{O(\Qrw(V_x)^2\log(\Qrw(V_x)))}$,
				\item $n^{O(\mim(V_x)^2)}$.
			\end{itemize}
		\end{multicols}
	\end{lemma}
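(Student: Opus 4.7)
\medskip
\textbf{Proof plan for Lemma \ref{lem:consequences}.}
The plan is a direct counting argument based on Definition \ref{def:indices} and the bounds on $\nec_d$ from Lemma \ref{lem:compare}. Recall that an index $i \in \bI_x$ is a tuple $(X_{\vc}^{\comp{S}},X_{\vc}^{S}, X_{\comp{\vc}},Y_{\vc}^{\comp{S}},Y_{\vc}^{S})$ where $X_{\comp{\vc}}$ is a single representative in $\Rep{V_x}{1}$ and the four remaining components are subsets, drawn respectively from $\Rep{V_x}{2}$, $\Rep{V_x}{1}$, $\Rep{\comp{V_x}}{2}$ and $\Rep{\comp{V_x}}{1}$, whose total cardinality is at most $k := 4\mim(V_x)$. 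Let $N := \max(\nec_2(V_x), \nec_1(V_x), \nec_2(\comp{V_x}), \nec_1(\comp{V_x}))$ be an upper bound on the size of each of the four representative pools.

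First I would bound the number of choices for $X_{\comp{\vc}}$ by $\nec_1(V_x) \leq N$. Then, treating the union of the four pools as a set of size at most $4N$ (tagged by which of the four pools each element came from), the number of ways to choose the four subsets with total size at most $k$ is bounded by
\[ \sum_{j=0}^{k} \binom{4N}{j} \leq (4N+1)^{k}. \]
Thus $|\bI_x| \leq N \cdot (4N+1)^{4\mim(V_x)}$. It remains to substitute the appropriate bound on $N$ for each of the three width measures. By the symmetry of $\rw$, $\Qrw$, and $\mim$, combined with Lemma \ref{lem:comparemim} (which gives $\mim(V_x) \leq \rw(V_x)$ and $\mim(V_x) \leq \Qrw(V_x)$), the resulting estimates become uniform in $V_x$ and $\comp{V_x}$.

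Substituting the bounds of Lemma \ref{lem:compare}(a), I would get $N \leq 2^{2\rw(V_x)^2}$, hence $|\bI_x| \leq 2^{O(\rw(V_x)^2)} \cdot 2^{O(\rw(V_x)^2 \cdot \mim(V_x))} = 2^{O(\rw(V_x)^3)}$ since $\mim(V_x) \leq \rw(V_x)$. For (b), $N \leq 2^{\Qrw(V_x) \log(2\Qrw(V_x)+1)}$, and after the same substitution $|\bI_x| \leq 2^{O(\Qrw(V_x)^2 \log(\Qrw(V_x)))}$. For (c), using $N \leq n^{2\mim(V_x)}$, the bound becomes $|\bI_x| \leq n^{O(\mim(V_x)^2)}$. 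There is no genuine obstacle here: the only subtle point is to keep track that the four sets in the index live on both sides of the cut, and that the symmetry of the three width measures ensures that a single bound $N$ of the claimed form covers all four pools simultaneously.
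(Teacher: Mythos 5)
Your proof is correct and follows essentially the same counting argument as the paper: bound the number of choices for $X_{\comp{\vc}}$, then bound the number of ways to pick the four representative-subsets of total size at most $4\mim(V_x)$ by an expression of the form (pool size)$^{4\mim(V_x)}$, and substitute the $\nec_d$ bounds from Lemma~\ref{lem:compare}. The only difference is that the paper bounds the pools for $X_{\vc}^{S}$ and $Y_{\vc}^{S}$ by the number $\mw$ of distinct rows of the cut matrix (using that these components consist of representatives of \emph{singletons}), rather than by $\nec_1$ as you do; this is a tighter constant-level refinement but does not change any of the three asymptotic bounds, so both routes land in the same place.
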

	\begin{proof}
		For $A\subseteq V(G)$, let $\mw(A)$ be the number of different rows in the matrix $M_{A,\comp{A}}$.
		Observe that, for every $A\subseteq V(G)$, we have $\mw(A)=\{ \rep{A}{1}(\{v\}) \mid v\in A\}$.
		Remember that the $\cX_{\vc}^{S}$'s and $\cY_{\vc}^{S}$'s  are subsets of $\{ \rep{V_x}{1}(\{v\}) \mid v\in V_x\}$ and $\{ \rep{\comp{V_x}}{1}(\{v\}) \mid v\in \comp{V_x}\}$ respectively.
		From Definition \ref{def:indices}, we have $|\cX_{\vc}^{\comp{S}}|+|\cX_{\vc}^{S}|+|\cY_{\vc}^{\comp{S}}| + |\cY_{\vc}^{S}| \leq 4\mim(V_x)$, for every $(\cX_{\vc}^{\comp{S}},\cX_{\vc}^{S}, \cX_{\comp{\vc}},\cY_{\vc}^{\comp{S}},\cY_{\vc}^{S})\in\bI_x$.
		Thus, the size of $\bI_x$ is at most
		\[ \nec_1(V_x)\cdot \big(\nec_2(V_x)+ \mw(V_x) + \nec_2(\comp{V_x}) + \mw(\comp{V_x})\big)^{4\mim(V_x)}. \]
		
		\paragraph{\bf Rank-width.} By Lemma \ref{lem:compare}, we have $\nec_1(V_x)\leq 2^{\rw(V_x)^2}$ and $\nec_2(V_x),\nec_2(\comp{V_x})\leq 2^{2\rw(V_x)^2}$.
		Moreover, there is at most $2^{\rw(V_x)}$ different rows in the matrices $M_{V_x,\comp{V_x}}$ and  $M_{\comp{V_x},V_x}$, so $\mw(V_x)$ and $\mw(\comp{V_x})$ are upper bounded by $2^{\rw(V_x)}$.
		By Lemma \ref{lem:comparemim}, we have $4\mim(V_x)\leq 4\rw(V_x)$.
		We deduce from these inequalities that $|\bI_x|\leq 2^{\rw(V_x)^2}\cdot (2^{2\rw(V_x)^2 + 1} + 2^{\rw(V_x)+1})^{4\rw(V_x)}\in 2^{O(\rw(V_x)^3)}$.
		
		\paragraph{\bf $\bQ$-rank-width.} By Lemma \ref{lem:compare}, we have $\nec_1(V_x),\nec_2(V_x),\nec_2(\comp{V_x})\in 2^{O(\Qrw(V_x)\log(\Qrw(V_x)))}$.
		Moreover, there is at most $2^{\Qrw(V_x)}$ different rows in the matrices $M_{V_x,\comp{V_x}}$ and  $M_{\comp{V_x},V_x}$, so $\mw(V_x)$ and $\mw(\comp{V_x})$ are upper bounded by $2^{\Qrw(V_x)}$.
		By Lemma \ref{lem:comparemim}, we have $4\mim(V_x)\leq 4\Qrw(V_x)$.
		We deduce from these inequalities that
		\[ 	|\bI_x|\leq 2^{O(\Qrw(V_x)\log(\Qrw(V_x)))}\cdot \Big(2^{O(\Qrw(V_x)\log(\Qrw(V_x)))} + 2^{\Qrw(V_x)+1}\Big)^{4\Qrw(V_x)}. \]
		We conclude that $|\bI_x|\in 2^{O(\Qrw(V_x)^2\log(\Qrw(V_x)))}$.
		
		\paragraph{\bf Mim-width.} By Lemma \ref{lem:compare}, we know that $\nec_1(V_x)\leq |V_x|^{\mim(V_x)}$, $\nec_2(V_x) \leq |V_x|^{2\mim(V_x)}$, and $\nec_2(\comp{V_x})\leq |\comp{V_x}|^{2\mim(V_x)}$.
		We can assume that $n>2$ (otherwise the problem is trivial), so $\nec_2(V_x)+\nec_2(\comp{V_x})\leq |V_x|^{2\mim(V_x)} + |\comp{V_x}|^{2\mim(V_x)} \leq n^{2\mim(V_x)}$.
		Moreover, notice that, for every $A\subseteq V(G)$, we have $\{ \rep{A}{1}(\{v\}) \mid v\in A\}\leq |A|$.
		
		We deduce that $|\bI_x|\leq n^{\mim(V_x)} \cdot (n+n^{2\mim(V_x)})^{4\mim(V_x)}$.
		As we assume that $n>2$, we have $|\bI_x|\leq n^{8\mim(V_x)^2+5\mim(V_x)}\in n^{O(\mim(V_x)^2)}$.
		
	\end{proof}
	
	Now we are ready to state our algorithms with respect to the parameters rank-width $\rw(G)$ and $\bQ$-rank-width $\Qrw(G)$.
	In particular, with our next result we show that \textsc{Subset Feedback Vertex Set} is in \FPT parameterized by $\Qrw(G)$ or $\rw(G)$.
	
	
	\begin{theorem}\label{thm:qrankmain}
		There exist algorithms that solve \textsc{Subset Feedback Vertex Set} in time $2^{O(\rw(G)^3)}\cdot n^4$ and $2^{O(\Qrw(G)^2\log(\Qrw(G))))}\cdot n^{4}$.
	\end{theorem}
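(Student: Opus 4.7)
The plan is to apply the meta-algorithm of Theorem~\ref{thm:main} to a rooted layout that we obtain by calling a known approximation algorithm for the corresponding width measure. First, for the rank-width case, I would invoke the $2^{3k}\cdot n^4$ algorithm of Oum and Seymour (cited in the introduction) with $k=\rw(G)$ to compute, in time $2^{O(\rw(G))}\cdot n^4$, a rooted layout $(T,\delta)$ of rank-width at most $3\rw(G)+1$; for the $\bQ$-rank-width case, I would use the analogous algorithm of Oum, S\ae{}ther and Vatshelle to get a layout of $\bQ$-rank-width $O(\Qrw(G))$ in the same asymptotic time.

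Next, I would feed this layout into Theorem~\ref{thm:main} and bound each of the four factors appearing in the per-node cost $|\bI_x|^3 \cdot (4\mim(V_x))^{12\mim(V_x)}\cdot \snec_2(V_x)\cdot\log(\snec_2(V_x))\cdot n^3$. For the rank-width case, Lemma~\ref{lem:consequences} gives $|\bI_x|^3 = 2^{O(\rw(V_x)^3)}$; Lemma~\ref{lem:comparemim} gives $\mim(V_x)\leq\rw(V_x)$, so $(4\mim(V_x))^{12\mim(V_x)} = 2^{O(\rw(V_x)\log\rw(V_x))}$; and Lemma~\ref{lem:compare}(a) gives $\nec_2(V_x),\nec_2(\comp{V_x}) \leq 2^{2\rw(V_x)^2}$ (using that $\rw$ is symmetric), so $\snec_2(V_x)\log\snec_2(V_x) = 2^{O(\rw(V_x)^2)}$. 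The dominant term is $2^{O(\rw(V_x)^3)}$, so the per-node cost is $2^{O(\rw(V_x)^3)}\cdot n^3$. Since $T$ has $O(n)$ nodes and $\rw(V_x)\leq 3\rw(G)+1$ for every $x\in V(T)$, summing over all nodes yields the target bound $2^{O(\rw(G)^3)}\cdot n^4$, which also absorbs the time spent computing the layout.

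For the $\bQ$-rank-width case the computation is strictly parallel, using Lemma~\ref{lem:consequences} to bound $|\bI_x|^3$ by $2^{O(\Qrw(V_x)^2\log\Qrw(V_x))}$, Lemma~\ref{lem:comparemim} to get $\mim(V_x)\leq \Qrw(V_x)$, and Lemma~\ref{lem:compare}(b) to bound $\snec_2(V_x)\log\snec_2(V_x)$ by $2^{O(\Qrw(V_x)\log\Qrw(V_x))}$. The dominating factor is $|\bI_x|^3$, giving per-node cost $2^{O(\Qrw(V_x)^2\log\Qrw(V_x))}\cdot n^3$ and, after summing over $O(n)$ nodes of $T$, a total runtime of $2^{O(\Qrw(G)^2\log\Qrw(G))}\cdot n^4$.

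There is no serious obstacle; essentially all the work has been done in Theorem~\ref{thm:main} and Lemmas~\ref{lem:comparemim},~\ref{lem:compare},~\ref{lem:consequences}. The only mild pitfall is verifying that the approximation step and the dynamic-programming step combine to the claimed time without the additive $2^{O(\rw(G))}\cdot n^4$ term dominating, which it cannot since $2^{O(\rw(G))}\leq 2^{O(\rw(G)^3)}$ and $2^{O(\Qrw(G))}\leq 2^{O(\Qrw(G)^2\log\Qrw(G))}$; and to observe that the bound $\rw(V_x)\leq 3\rw(G)+1$ (resp.\ for $\Qrw$) can be absorbed into the big-$O$ in the exponent.
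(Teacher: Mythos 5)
Your proposal is correct and follows essentially the same route as the paper's own proof: compute an $O(k)$-approximate layout via the Oum–Seymour-type $(3k+1)$-approximation, feed it to Theorem~\ref{thm:main}, and bound each factor in the per-node cost using Lemmas~\ref{lem:comparemim},~\ref{lem:compare}, and~\ref{lem:consequences}, with $|\bI_x|^3$ dominating and the $O(n)$ nodes of $T$ giving the final $n^4$ factor. The only cosmetic difference is that you make the intermediate bounds on $(4\mim)^{12\mim}$ and $\snec_2\log\snec_2$ slightly more explicit, but the argument is identical in substance.
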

	\begin{proof}
		We first compute a rooted layout $\cL=(T,\delta)$ of $G$ such that $\rw(\cL)\in O(\rw(G))$ or $\Qrw(\cL)\in O(\Qrw(G))$.
		This is achieved through a $(3k + 1)$-approximation algorithm that runs in \FPT time $O(8^k \cdot n^4)$ parameterized by $k \in \{\rw(G),\Qrw(G)\}$ \cite{Oum09a}.
		Then, we apply the algorithm given in Theorem~\ref{thm:main}.
		Observe that for every node $x\in V(T)$, by Lemma~\ref{lem:consequences},
		$|\bI_x|^3$ lies in  $2^{O(\rw(V_x)^3)}$ and $2^{O(\Qrw(V_x)^2\log(\Qrw(V_x)))}$ and by Lemma~\ref{lem:compare}, $\snec_2(V_x)$ lies in $2^{O(\rw(V_x)^2)}$ and $2^{O(\Qrw(V_x)\log(\Qrw(V_x)))}$.
		Moreover, Lemma~\ref{lem:comparemim} implies that $\mim(V_x)^{\mim(V_x)}$ is upper bounded by $2^{\rw(G) \log(\rw(G))}$ and $2^{\Qrw(G)\log(\Qrw(G))}$.
		Therefore, we get the claimed runtimes for SFVS since $T$ contains $2n-1$ nodes.
	\end{proof}
	
	Regarding mim-width, our algorithm given below shows that \textsc{Subset Feedback Vertex Set} is in \XP parameterized by the mim-width of a given rooted layout.
	Note that we cannot solve SFVS in \FPT time parameterized by the mim-width of a given rooted layout unless $\FPT=\W[1]$,
	since \textsc{Subset Feedback Vertex Set} is known to be \W[1]-hard for this parameter even for the special case of $S=V(G)$ \cite{JaffkeKT20}.
	Moreover, contrary to the algorithms given in Theorem~\ref{thm:qrankmain}, here we need to assume that the input graph is given with a rooted layout.
	However, our next result actually provides a unified polynomial-time algorithm for \textsc{Subset Feedback Vertex Set}
	on well-known graph classes having bounded mim-width and for which a layout of bounded mim-width can be computed in polynomial time \cite{BelmonteV13}
	(e.g., \textsc{Interval} graphs, \textsc{Permutation} graphs, \textsc{Circular Arc}  graphs, \textsc{Convex} graphs, \textsc{$k$-Polygon},
	\textsc{Dilworth-$k$} and \textsc{Co-$k$-Degenerate} graphs for fixed $k$).
	
	\begin{theorem}\label{thm:mimmain}
		There exists an algorithm that, given an $n$-vertex graph $G$ and a rooted layout $\cL$ of $G$, solves \textsc{Subset Feedback Vertex Set} in time $n^{O(\mim(\cL)^2)}$.
	\end{theorem}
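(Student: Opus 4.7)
The plan is to invoke the meta-algorithm of Theorem~\ref{thm:main} directly on the given rooted layout $\cL=(T,\delta)$, and then substitute the $\mim$-width bounds from Lemma~\ref{lem:consequences} and Lemma~\ref{lem:compare}. Unlike in the proof of Theorem~\ref{thm:qrankmain}, where an approximate decomposition is produced using the known $(3k+1)$-approximation for $\rw$ and $\Qrw$, no $n^{f(k)}$-time constant-factor approximation for mim-width is currently known; this is precisely why the statement must assume that a layout is provided as part of the input.

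Once the meta-algorithm is invoked, the rest is a direct calculation. Theorem~\ref{thm:main} gives the runtime
\[
O\!\Big(\sum_{x\in V(T)} |\bI_x|^{3}\cdot (4\mim(V_x))^{12\mim(V_x)}\cdot \snec_2(V_x)\cdot \log(\snec_2(V_x)) \cdot n^3\Big).
\]
By Lemma~\ref{lem:consequences}, $|\bI_x|^3 \leq n^{O(\mim(V_x)^2)}$. By Lemma~\ref{lem:compare}(c), both $\nec_2(V_x)$ and $\nec_2(\comp{V_x})$ are bounded by $n^{2\mim(V_x)}$, so $\snec_2(V_x)\cdot \log(\snec_2(V_x)) \leq n^{O(\mim(V_x))}$. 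The factor $(4\mim(V_x))^{12\mim(V_x)}$ is at most $n^{O(\mim(V_x)\log \mim(V_x))}$, which is absorbed into $n^{O(\mim(V_x)^2)}$. Since $\mim(V_x)\leq \mim(\cL)$ for every node $x$ and $T$ has $2n-1$ nodes, each summand is at most $n^{O(\mim(\cL)^2)}$, and so is the sum.

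No serious obstacle is expected: essentially all the work was done in establishing Theorem~\ref{thm:main} together with Lemmas~\ref{lem:consequences} and~\ref{lem:compare}; the remaining task is to verify that the bounds compose correctly and to observe that the representatives and data structures needed by the meta-algorithm can be computed within the claimed time budget via Lemma~\ref{lem:computenecd}. It is worth emphasizing that the exponent $\mim(\cL)^2$ (as opposed to the $\mim(\cL)$ appearing in the corresponding \textsc{Feedback Vertex Set} algorithms based on the $d$-neighbor equivalence) is intrinsic to this approach: each index in $\bI_x$ stores $2$-neighbor equivalence representatives for a vertex cover of up to $4\mim(V_x)$ blocks, which already contributes an exponent of order $\mim(V_x)^2$ through Lemma~\ref{lem:compare}(c).
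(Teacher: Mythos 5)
Your proof is correct and takes essentially the same approach as the paper: apply the meta-algorithm of Theorem~\ref{thm:main} directly to the given layout $\cL$, then substitute the bounds $|\bI_x|^3\in n^{O(\mim(V_x)^2)}$ from Lemma~\ref{lem:consequences} and $\snec_2(V_x)\leq n^{O(\mim(V_x))}$ from Lemma~\ref{lem:compare}, and sum over the $2n-1$ nodes of $T$. The additional commentary you give (on why a given layout is required, on absorbing the $(4\mim)^{12\mim}$ factor, and on why the exponent is quadratic) is accurate but not needed for the argument.
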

	\begin{proof}
		We apply the algorithm given in Theorem~\ref{thm:main}.
		By Lemma~\ref{lem:compare}, we have $\snec_2(V_x)\leq n^{O(\mim)}$ and from Lemma~\ref{lem:consequences} we have $|\bI_x|^3\in n^{O(\mim(V_x)^2)}$.
		The claimed runtime for SFVS follows by the fact that the rooted tree $T$ of $\cL$ contains $2n-1$ nodes.
	\end{proof}

	Let us relate our results for \textsc{Subset Feedback Vertex Set} to the \textsc{Node Multiway Cut}.
	It is known that \textsc{Node Multiway Cut} reduces to \textsc{Subset Feedback Vertex Set} \cite{FominHKPV14}.
	In fact, we can solve \textsc{Node Multiway Cut} by adding a new vertex $v$ with a large weight that is adjacent to all terminals and, then, run our algorithms for \textsc{Subset Feedback Vertex Set} with $S= \{v\}$ on the resulting graph.
	Now observe that any extension of a rooted layout $\cL$ of the original graph to the resulting graph has mim-width $\mim(\cL)+1$.
	Therefore, all of our algorithms given in Theorems~\ref{thm:qrankmain} and \ref{thm:mimmain} have the same running times for the \textsc{Node Multiway Cut} problem.
	
	\section{Conclusion }

This paper highlights the importance of the $d$-neighbor-equivalence relation to obtain meta-algorithm for several width measures at once.
	We extend the range of applications of this relation~\cite{BergougnouxK19esa,BuiXuanTV13,GolovachHKKSV18,OumSV13}
by proving that it is useful for the atypical acyclicity constraint of the \textsc{Subset Feedback Vertex Set} problem.
	It would be interesting to see whether this relation can be helpful with other kinds of constraints such as 2-connectivity and other generalizations of
\textsc{Feedback Vertex Set} such as the ones studied in~\cite{BonnetBKM19}.
	In particular, one could consider the following generalization of \textsc{Odd Cycle Transversal}:
	\pbDef{Subset Odd Cycle Transveral (SOCT)}{A graph $G$ and $S\subseteq V(G)$}{A set $X\subseteq V(G)$ of minimum weight such that $G[\comp{X}]$ does not contain an odd cycle that intersects $S$.}
	Similar to SFVS, we can solve SOCT in time $k^{O(k)}\cdot n^{O(1)}$ parameterized by treewidth and this is optimal under ETH~\cite{BergougnouxBBK20}.
	We do not know whether SOCT is in \XP parameterized by mim-width, though it is in \FPT parameterized by clique-width or rank-width, since we can express it in MSO$_1$ (with the characterization used in~\cite{BergougnouxBBK20}).

For many well-known graph classes a decomposition of bounded mim-width can be found in polynomial time. However, for general graphs it is known that computing mim-width is \W[1]-hard and not in \APX unless \NP=\ ZPP \cite{SaetherV16}, while Yamazaki \cite{Yamazaki18} shows that under the small set expansion hypothesis it is not in \APX unless \Poly=\ \NP.
For dynamic programming algorithms as in this paper, to circumvent the assumption that we are given a decomposition, we want functions $f, g$ and an algorithm that given a graph of mim-width OPT computes an $f(OPT)$-approximation to mim-width in time $n^{g(OPT)}$, i.e. \XP by the natural parameter. This is the main open problem in the field. The first task could be to decide if there is a constant $c$ and a polynomial-time algorithm that given a graph $G$ either decides that its mim-width is larger than 1 or else returns a decomposition of mim-width $c$.
	
	\bibliographystyle{abbrv}
	\bibliography{biblio}	
	
	\appendix
	
	\newpage
	
	\section{Explanations with several figures}
	\label{appendix}
	
	The following figures explain the relation between solutions ($S$-forests) and an index $i$.

\begin{figure}[h!tb]
	\centering
	\includegraphics[width=0.9\linewidth,page=2]{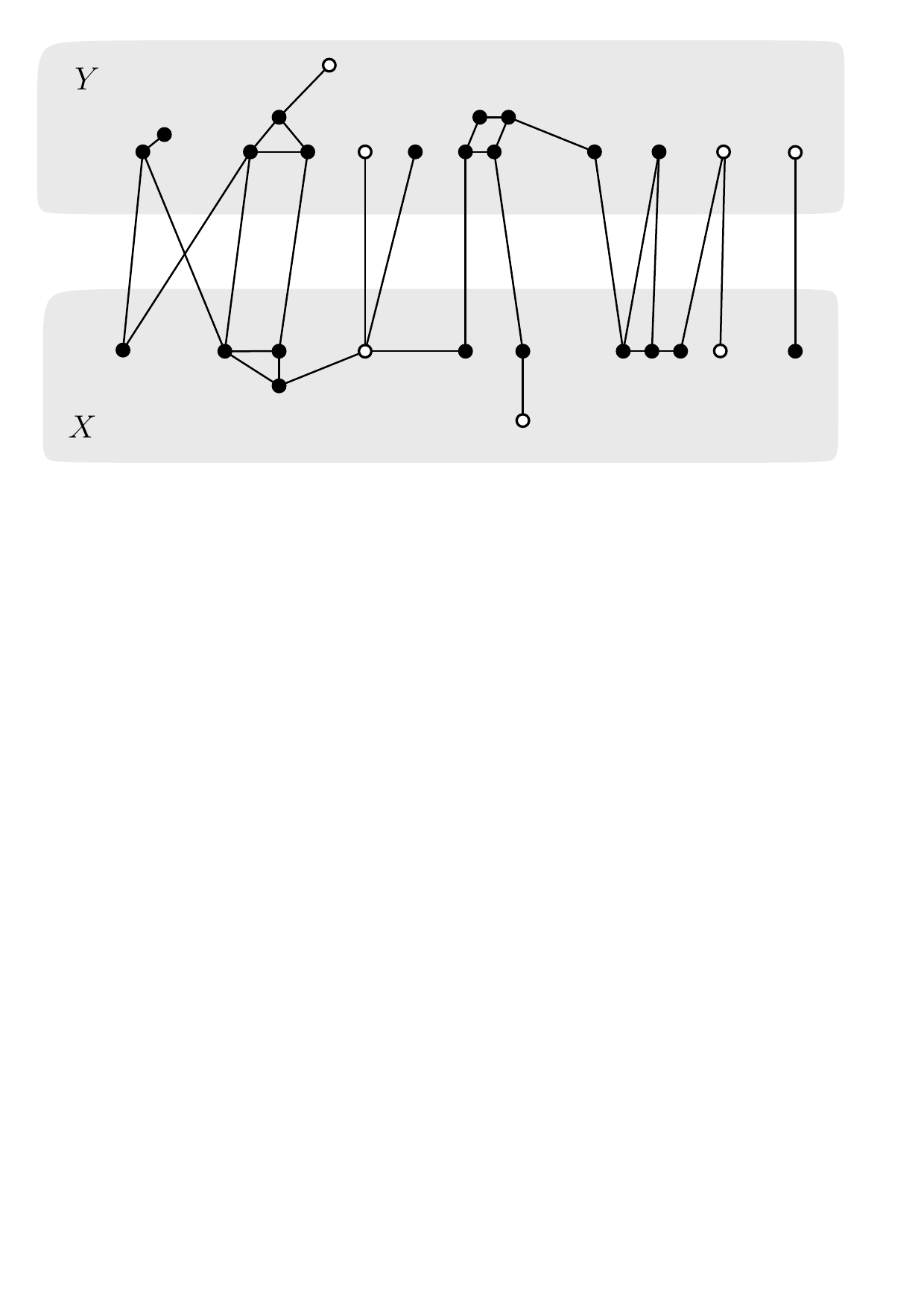}

	\caption{Illustration of the $\comp{S}$-contraction $\cP_Y$ constructed in Lemma~\ref{lem:equiStreeScontrac}. The graph represented is an $S$-forest induced by the union of a set $X\subseteq V_x$ and a set $Y\subseteq \comp{V_x}$.
		The white filled disks represent the vertices in $S$.
		We transform this $S$-forest into a forest (see the next figure) by contracting the circled subsets of vertices.
		The $\comp{S}$-contraction used on $X$ is $\cc(X\setminus S)$.
		Observe that $Y_1$ is the only block of $\cP_Y$ which is not a connected component of $G[Y\setminus S]$. We need $Y_1$ to kill the orange cycle which is not an $S$-cycle.}
		\label{fig:killcycle}
\end{figure}

\begin{figure}[h!tb]
	\centering
	\includegraphics[width=0.9\linewidth,page=3]{explanations}
	\caption{The contracted graph $G[X\cup Y]_{\downarrow \cc(X\setminus S)\cup\cP_Y}$.
		The vertices of this graph are the blocks of $\cc(X\setminus S)\cup\cP_Y$ and the singletons $\{v\}$ for every vertex $v$ in $(X\cup Y)\cap S$, these singletons are represented by the white filled disks. }
		\label{fig:contracted}
\end{figure}

\begin{figure}[h!tb]
	\centering
	\includegraphics[width=\linewidth,page=4]{explanations}
	\caption{The bipartite graph $G[X,Y]_{\downarrow \cc(X\setminus S)\cup \cP_Y}$. The colored squares represent the blocks in a vertex cover $\VC$ of $G[X,Y]_{\downarrow \cc(X\setminus S)\cup \cP_Y}$ that satisfies the properties of Lemma~\ref{lem:equiStreeScontrac}.}
\end{figure}

\begin{figure}[htb]
	\centering
	\includegraphics[width=\linewidth,page=7]{explanations}
	\caption{The graph $G[X\cup Y]_{\downarrow \cc(X\setminus S)\cup\cP_Y}$. As done in Lemma~\ref{lemma:exitsAnIndex}, we construct from the vertex cover $\VC$ an index $i=(\cX_{\vc}^{\comp{S}},\cX_{\vc}^{S}, \cX_{\comp{\vc}},\cY_{\vc}^{\comp{S}},\cY_{\vc}^{S})\in\bI_x$ such that $X$ is a partial solution associated with $i$ and $(Y,\cP_Y)$ is a complement solution associated with $i$.\\
	The set $\cX_{\vc}^{\comp{S}}$ contains the representatives of the blocks in $\cc(X\setminus S)$ that are in the vertex cover $\VC$. In this example, we have $\cX_{\vc}^{\comp{S}}=\{ R_2 \}$ where $R_2=\rep{V_x}{2}(X_2)$.}
\end{figure}

\begin{figure}[htb]
	\centering
	\includegraphics[width=\linewidth,page=8]{explanations}
	\caption{The set $\cX_{\vc}^{S}$ contains the representatives of the singletons $\{x\}\in \VC$ with $x\in S$.
		In this example, we have $\cX_{\vc}^{S}=\{R_1\}$ where $R_1=\rep{V_x}{1}(\{x_1\})$.}
\end{figure}

\begin{figure}[htb]
	\centering
	\includegraphics[width=\linewidth,page=9]{explanations}
	\caption{The set $\cX_{\comp{\vc}}=\rep{V_x}{1}(X\setminus V(\VC))$.
		It is the representative set of the union of the blocks in $X_{\downarrow \cc(X\setminus S)} \setminus \VC$.}
\end{figure}

\begin{figure}[htb]
	\centering
	\includegraphics[width=\linewidth,page=10]{explanations}
	\caption{The set $\cY_{\vc}^{\comp{S}}$ contains the representatives of the blocks in $\cP_Y \cap \VC$.		
		In this example, we have $\cY_{\vc}^{\comp{S}}=\{\comp{R_1},\comp{R_2}\}$ where $\comp{R_\ell}=\rep{\comp{V_x}}{2}(Y_\ell)$ for $\ell=1,2$. }
\end{figure}

\begin{figure}[htb]
	\centering
	\includegraphics[width=\linewidth,page=11]{explanations}
	\caption{The set $\cY_{\vc}^{S}$ contains the representatives of the singletons $\{y\}\in\VC$ with $y\in S$.
		In this example, we have $\cY_{\vc}^{S}=\{\comp{R_3},\comp{R_4}\}$ where $\comp{R_\ell}=\rep{\comp{V_x}}{1}(\{y_\ell\})$ for $\ell=3,4$. }
\end{figure}

\begin{figure}[htb]
	\centering
	\includegraphics[width=\linewidth,page=12]{explanations}
	\caption{The auxiliary graph $\aux_x(X,i)$.
		It can be obtained from $G[X\cup Y]_{\downarrow\cc(X\setminus S)\cup \cP_Y}$ by (1) removing the edges between blocks of $Y_{\downarrow\cP_Y}$, (2) removing the blocks of $Y_{\downarrow\cP_Y}$ that do not belong to $\VC$ and (3) replacing each remaining blocks of $Y_{\downarrow\cP_Y}$ by its representatives.\\
		The blocks of the partition $\cc(X,i)$ are $\{R_1,\comp{R_1},\comp{R_2}\}$, $\{R_2,\comp{R_3}\}$ and $\{\comp{R_4}\}$ where $R_1$ and $R_2$ are the representatives of $\{x_1\}$ and $X_2$ respectively.}
	\label{fig:ccXi}
\end{figure}

\begin{figure}[htb]
	\centering
	\includegraphics[width=\linewidth,page=12]{explanations}\\
	\vspace{1cm}
	\includegraphics[width=\linewidth,page=14]{explanations}
	\caption{The auxiliary graphs $\aux_x(X,i)$ and $\aux_x(W,i)$ with $W$ a partial solution associated with $i$. 
	We have $R_1=\rep{V_x}{1}(\{w_1\})=\rep{V_x}{1}(\{x_1\})$ and $R_2=\rep{V_x}{2}(W_2)=\rep{V_x}{2}(X_2)$.
	Observe that $\cc(X,i)=\cc(W,i)=\{\{R_1,\comp{R_1},\comp{R_2}\},\{R_2,\comp{R_3}\},\{\comp{R_4}\}\}$. Consequently, $X$ and $W$ are $i$-equivalent.}
	\label{fig:equivalent}
\end{figure}

\begin{figure}[htb]
	\centering
	\includegraphics[width=\linewidth,page=5]{explanations}\\
	\vspace{1cm}
	\includegraphics[width=\linewidth,page=15]{explanations}
	\caption{The graphs $G[X\cup Y]_{\downarrow \cc(X\setminus S)\cup\cP_Y}$ and $G[W\cup Y]_{\downarrow \cc(W\setminus S)\cup\cP_Y}$.
	As $X$ and $W$ are $i$-equivalent and $G[X\cup Y]_{\downarrow \cc(X\setminus S)\cup\cP_Y}$ is a forest, by Lemma~\ref{lemma:equivalence},  $G[W\cup Y]_{\downarrow \cc(W\setminus S)\cup\cP_Y}$ is also a forest. Thus, by Lemma~\ref{lemma:forestImpliesS-forest}, $G[W\cup Y]$ is an $S$-forest. In fact, these two lemmas implies that for every complement solution $(Y',\cP_Y')$ associated with $i$, $G[X\cup Y']$ is an $S$-forest if and only if $G[W\cup Y']$ is an $S$-forest.}
\end{figure}
	
\begin{figure}[htb]
	\centering
	\includegraphics[width=\linewidth,page=6]{explanations}\\
	\vspace{1cm}
	\includegraphics[width=\linewidth,page=16]{explanations}
	\caption{New examples for the graphs $G[X\cup Y]_{\downarrow \cc(X\setminus S)\cup\cP_Y}$ and $G[W\cup Y]_{\downarrow \cc(W\setminus S)\cup\cP_Y}$.\\
	To prove Lemma~\ref{lemma:equivalence}, we prove that if  $G[W\cup Y]_{\downarrow \cc(W\setminus S)\cup\cP_Y}$ contains a cycle, then  $G[X\cup Y]_{\downarrow \cc(X\setminus S)\cup\cP_Y}$ contains a cycle. We use the following arguments.\\		
	The orange paths exist in both graphs because these paths only use blocks in $Y_{\downarrow \cP_Y}$.
	Since $\cc(X,i)=\cc(W,i)$, the purple path in $G[W\cup Y]_{\downarrow \cc(W\setminus S)\cup\cP_Y}$ implies the existence of the purple path in $G[X\cup Y]_{\downarrow \cc(X\setminus S)\cup\cP_Y}$.
	Finally, as $X_2\equiv^{V_x}_{2} W_2$, the blocks in $Y_{\downarrow \cP_Y}$  adjacent to $W_2$ are also adjacent to $X_2$. Thus, the green path in in $G[W\cup Y]_{\downarrow \cc(W\setminus S)\cup\cP_Y}$ implies the existence of the green path in $G[X\cup Y]_{\downarrow \cc(X\setminus S)\cup\cP_Y}$.}
	\label{fig:intuitionCycles}
\end{figure}

\end{document}